\newtheorem{thm}{Theorem}
\newtheorem*{thm*}{Theorem}
\newtheorem{theorem}{Theorem}
\newtheorem{prop}{Proposition}
\newtheorem{coro}{Corollary}
\newtheorem{lemma}{Lemma}
\newtheorem{defn}{Definition}
\newtheorem{remark}{Remark}
\def\<{\langle}
\def\>{\rangle}
\newcommand\be{\begin{equation}} 
\newcommand\ee{\end{equation}}
\newcommand{\comment}[1]
\def\bea{\begin{eqnarray*} }
\def\eea{\end{eqnarray*} }
\begin{document}

\title{Einstein-type elliptic systems}
\author{R. Avalos\thanks{Partially supported by PNPD/CAPES and FUNCAP.}  \,\,and J. H. Lira\thanks
{Partially supported by CNPq and FUNCAP.}}
\date{}
\maketitle

%






\setcounter{MaxMatrixCols}{10}



\begin{abstract}
In this paper we analyse semi-linear systems of partial differential equations which are motivated by the conformal formulation of the Einstein constraint equations coupled with realistic physical fields on asymptotically Euclidean (AE) manifolds. In particular, electromagnetic fields give rise to this kind of system. In this context, under suitable conditions, we prove a general existence theorem for such systems, and, in particular, under smallness assumptions on the free parameters of the problem, we prove existence of far from CMC (near CMC) Yamabe positive (Yamabe non-positive) solutions for charged dust coupled to the Einstein equations, satisfying a trapped surface condition on the boundary. As a bypass, we prove a Helmholtz decomposition on AE manifolds with boundary, which extends and clarifies previously known results.
\end{abstract}

\section{Introduction}

In this article we analyse the Einstein constraint equations (ECE) contemplating the possibility of having many different fundamental fields interacting with gravity. The ECE arise within the initial value formulation of general relativity (GR), where the goal is to construct $(n+1)$-dimensional globally hyperbolic Lorentzian manifolds $(V\doteq M^n\times\mathbb{R},\bar g)$ satisfying the Einstein equations:
\begin{equation}\label{einsteineqs}
\mathrm{Ric}(\bar g)-\frac{1}{2}R(\bar g)\bar g=T(\bar g,\bar \psi)
\end{equation}
by evolving appropriate initial data defined on $M^n$. Above, $\mathrm{Ric}(\bar{g})$ and $R(\bar{g})$ represent the Ricci tensor and Ricci scalar of $\bar g$, respectively, while $T$ represents some $(0,2)$-tensor field, called the energy-momentum tensor, depending on $\bar g$ and (possibly) on a set of fields, collectively denoted by $\bar \psi$, representing other \textit{physical} fields involved in some specific model. 

The geometric picture associated with the above initial value problem consists in the prescription of an initial data set of the form $\mathcal{I}\doteq (M^n,g,K,\epsilon,J)$, where $(M^n,g)$ is an $n$-dimensional Riemannian manifold; $K\in \Gamma(T^0_2M)$ is a symmetric tensor field, while $\epsilon$ and $J$ are, respectively, a function and a 1-form. The problem is then translated into the existence of an isometric embedding $(M^n,g)\hookrightarrow (M^n\times \mathbb{R},\bar{g})$ into a space-time satisfying (\ref{einsteineqs}) such that $K$ corresponds to the extrinsic curvature of $M$ as an embedded hypersurface in space-time, while $\epsilon\doteq T(n,n)\vert_{t=0}$ and $J\doteq -T(n,\cdot)\vert_{t=0}$ correspond to the induced energy-momentum densities on $M$ by the sources $T$, where $n$ stands for the future pointing unit normal vector field to $M$. Straightforwardly, these quantities are seen to be constrained due to the Gauss-Codazzi equations for hypersurfaces, giving rise to the ECE on $M^n$:
\begin{align}\label{Constraints-GC}
\begin{split}
R_g-\vert K\vert^2_{g}+(\mathrm{tr}_gK)^2&=2\epsilon, \\
\mathrm{div}_{g}K-d(\mathrm{tr}_gK)&=J.
\end{split}
\end{align}

In many situations the above constraints stand not only as necessary but also as sufficient conditions for the initial data to admit a well-posed evolution problem \cite{CB2,Ringstrom}. In this context, the system (\ref{Constraints-GC}) has attracted plenty of attention both due to its relevance within GR and its connections with geometric partial differential equation (PDE) problems within geometric analysis. The most common approach to the ECE is the so called conformal method \cite{Lichnerowicz,York1,York2}. This method is used translate the constraints (\ref{Constraints-GC}) into a coupled system consisting of a scalar and a vector equation by splitting the data $(g,K)$ as follows
\begin{align}\label{conf-splitting}
\begin{split}
g=\phi^{\frac{4}{n-2}}\gamma,\;\; K=\phi^{-2}\big(\pounds_{\gamma,conf}X + U \big) + \frac{\tau}{n}g,
\end{split}
\end{align}
where $\gamma$ is some prescribed Riemannian metric, $X\in \Gamma(TM)$,  $\pounds_{\gamma,conf}X\doteq \pounds_X\gamma - \frac{2}{n}\mathrm{div}_{\gamma}X$ is the conformal Killing Laplacian operator and $U$ is a symmetric TT tensor ($\mathrm{tr}_{\gamma}U=0$ and $\mathrm{div}_{\gamma}U=0$). Then, plugging (\ref{conf-splitting}) into the Gauss-Codazzi constraints appearing in (\ref{Constraints-GC}), they take the following form.
\begin{align}\label{constraints.2}
\begin{split}
&\Delta_{\gamma}\phi - c_n R_{\gamma}\phi + c_n \vert\tilde{K}\vert^2_{\gamma}\phi^{-\frac{3n-2}{n-2}} + c_n\left(\frac{1-n}{n}\tau^2 + 2\epsilon  \right)\phi^{\frac{n+2}{n-2}}=0,\\
&\Delta_{\gamma,conf}X -  \left( \frac{n-1}{n}D\tau + J\right) \phi^{\frac{2n}{n-2}} = 0,
\end{split}
\end{align}
where $\Delta_{\gamma}\phi=\mathrm{tr}_{\gamma}(D^2\phi)$ stands for the negative Laplacian; $D$ for the $\gamma$-covariant derivative;  $\Delta_{\gamma,conf}X\doteq\mathrm{div}_{\gamma}(\pounds_{\gamma,conf}X)$; $\tilde{K}\doteq \pounds_{\gamma,conf}X + U$ and $c_n=\frac{1}{4}\frac{n-2}{n-1}$.

When $\epsilon,J,U$ and $\tau$ are treated as known quantities the above system can be thought of as a system on $\phi$ and $X$. Furthermore, if under a conformal transformation $J$ goes to $\phi^{-\frac{2n}{n-2}}\tilde{J}$, where $\tilde{J}$ is treated as a known quantity defined on $(M,\gamma)$ and if $\tau$ is constant, then the momentum constraint decouples from the associated Lichnerowicz equation. In these cases, we can first solve the momentum constraint, and then treat $X$ as a known quantity in the Lichnerowicz equation, which is where the core of the problem relies. Along these lines, on closed manifolds and in the absence of physical sources, J. Isenberg obtained a nice parametrization of the solutions of this problem \cite{Isenberg}. Similar classifications under a constant mean curvature (CMC) condition have been obtained on closed manifolds with physical sources \cite{Maxwell3,CBIP}, compact manifolds with boundary \cite{Holst3} and the relevant Lichnerowicz equation has been analysed on non-compact manifolds, for instance, in \cite{Maxwell1,ChruscielMazzeoCyl,Rigoli1,Rigoli2}. Also, near CMC results have been obtained via implicit function techniques for instance in \cite{Isenberg2,CBIY}, and, in particular in \cite{CBIY}, a constructive approach to produce near CMC solutions for York-scaled sources under curvature assumptions is also available. Summaries of these problems can be reviewed in \cite[Chapter 6]{CB2} and \cite[Chapter 2]{AvalosLiraBook}.



In contrast to the above results, not much was known prior to the work of M. Holst \textit{et al.} in \cite{Holst} about far from CMC solutions for the coupled systems. In this last paper, the authors use a fixed point argument tailored to the Gauss-Codazzi constraint equations with York-scaled sources to guarantee existence of solutions on closed manifolds. The strategy consists mainly in two big steps. On the one hand, linear elliptic estimates are used to produce a priori estimates on solutions to the system, and, in turn, these estimates are used to construct \textit{global barriers} for the relevant Lichnerowicz equation. On the other hand, once these barriers are constructed, the fixed point argument has to be put  to work explicitly. In particular, this procedure allowed the authors to produce far from CMC solutions for Yamabe positive initial data sets, under smallness assumptions on the matter fields and also requiring non-vanishing matter fields. This work was followed by D. Maxwell's in \cite{Maxwell2} where, for instance, these Yamabe postive far from CMC solutions are extended to the vacuum case. After these contributions, similar techniques were used to analyse the conformally formulated Gauss-Codazzi constraints on AE manifolds in \cite{Mazzeo,Holst2}, and related existence results outside the CMC conditions have been obtained by \cite{Dahl,Gicquaud1,Gicquaud2,Nguyen,Premoselli1,Valcu}. Although we will be focused on existence results, problems related with non-uniqueness of solutions in this context have raised plenty of attention, for instance through the following influential papers \cite{Maxwell-nonuniq.1,Maxwell-nonuniq.2,Maxwell-nonuniq.3,Isenberg-nonuniq,Walsh-nonuniq,PR-nonuniq}.


In this context, one further complication arises when the model physical sources appearing in (\ref{einsteineqs}) impose further constraint equations. The most natural example is provided by an electromagnetic field, modelled in space-time by a 2-form $F\in \Omega^2(V)$, which is described by the energy-momentum tensor field
\begin{align*}
T^{EM}_{\alpha\beta}=F_{\alpha}{}^{\mu}F_{\mu\beta} - \frac{1}{4}\bar{g}_{\alpha\beta}F^{\mu\nu}F_{\mu\nu}.
\end{align*}
In particular, the 2-form $F$ must obey the Maxwell equations of electromagnetism on $(V,\bar{g})$, which are given by
\begin{align}\label{MaxwellEqs}
\delta_{\bar{g}}F=\mathcal{J}^{\flat},\:\: dF=0,
\end{align}
where $(\delta_{\bar{g}}F)_{\mu}\doteq-\bar{g}^{\sigma\nu}\bar{\nabla}_{\sigma}F_{\nu\mu}$, $\mathcal{J}$ denotes the charge density associated to the sources of the electromagnetic field and $\mathcal{J}^{\flat}$ its metrically equivalent 1-form. The above equations must be coupled to (\ref{einsteineqs}) and, if one aims to model regions which include the charged matter responsible for sourcing $F$, we must add further energy-momentum contributions. In order to fix a model, let us consider the case of charged dust, where the full energy-momentum tensor is given by
\begin{align}\label{Energy-Momentum}
T=\mu u^{\flat}\otimes u^{\flat} + T^{EM},
\end{align}
where $\mu$ is a scalar function standing for the proper energy density of the fluid, while $u$ is a time-like vector field which stands for the fluid's velocity field. Then, the full system of space-time equations is given by (\ref{einsteineqs})-(\ref{MaxwellEqs}), with the energy-momentum source (\ref{Energy-Momentum}), and, in this setting, the charge density must be given some form. We shall consider the case of a fluid with zero conductivity for simplicity, in which case 
\begin{align*}
\mathcal{J}=qu,
\end{align*}
where $q$ stands for the \emph{proper charge density} of the fluid. In this context, the evolution problem associated to such a system is hyperbolic and results related to it can be consulted, for instance, in \cite[Theorem 3.1, Corollary 3.2 and Theorem 15.1, Chapter IX]{CB2}. In this case, the initial data is subject to a larger constraint system, which now involves two further equations, since each equation in (\ref{MaxwellEqs}) imposes one further constraint.\footnote{See \cite[Chapter 1]{AvalosLiraBook} for a detailed derivation of these constraint equations.} Explicitly, the constraint system now reads
\begin{align}\label{EMConstraints.0}
\begin{split}
R_g-\vert K\vert^2_g+\left( \mathrm{tr}_gK\right)^2 =2\epsilon,\;&\;\; \mathrm{div}_gE=\rho \\
\mathrm{div}_gK -d(\mathrm{tr}_gK)=J,\;&\;\; d\tilde{F}=0\\
\end{split}
\end{align}
where $\tilde{F}$ denotes the restriction of $F$ to tangent vectors to $M$ and $E\doteq F(\cdot,n)$ denotes the electric part of the electromagnetic field, while $\rho=-\mathcal{J}^{\flat}(n)\vert_{t=0}$ denotes the charged density as seen by space-time observers flowing along the integral curves of $n$. Furthermore, straightforward computations give us
\begin{align}
\begin{split}
\epsilon=\mu(Nu^0)^2 + \frac{1}{2}\left( \vert E\vert^2_g + \frac{1}{2}\vert \tilde{F}\vert^2_g\right),\;\; J_k= Nu^{0}\mu u_k - \tilde{F}_{ik}E^i,
\end{split}
\end{align}
where above we have introduced the \emph{lapse function} $N\doteq -\langle \partial_t,n\rangle_{\bar{g}}$, which depends on our choice of time-like observers $t\mapsto (p,t)\in M\times\mathbb{R}\cong V$ and $u^0=-\langle u,e_0\rangle_{\bar{g}}$ denotes the projection of $u$ onto the normal vector to $M$ given by $e_0\doteq Nn$.

Notice that the \emph{magnetic constraint} $d\tilde{F}=0$ is always decoupled from the rest of the system, and demands us to chose this initial datum as a closed 2-form. Thus, from now on, we shall disregard this last equation and assume such a choice has been made. In order to apply the conformal method to the above system, one imposes a conformal splitting for the electric field, given by
\begin{align}\label{electricscalling.1}
E^i=\phi^{-\frac{2n}{n-2}}\tilde{E}^i.
\end{align}
Above, we regard $\tilde{E}$ as the variable we have to solve for in the electric constraint in (\ref{EMConstraints.0}). After analysing how the different sources present in (\ref{EMConstraints.0}) scale under conformal transformations, one finds that the correspoding conformally formulated PDE problem is given by (see \cite[Chapter 2]{AvalosLiraBook} or \cite[Chapter 6]{CB2})
\begin{align}\label{Conformal-EMSystem.1}
\begin{split}
&\!\!\!\!\!\!\!\!\!\!\!\!\!c_n^{-1}\Delta_{\gamma}\phi =R_{\gamma}\phi - \vert\tilde{K}(X)\vert^2_{\gamma}\phi^{-\frac{3n-2}{n-2}} + a_{\tau}\phi^{\frac{n+2}{n-2}} - \vert \tilde{E}\vert^2_{\gamma}\phi^{-3} - \frac{\vert\tilde{F}\vert^2_{\gamma}}{2}\phi^{\frac{n-6}{n-2}},\\
&\!\!\!\!\!\!\!\!\!\!\!\!\!\Delta_{\gamma,conf}X =- \tilde{E}\lrcorner\tilde{F} +  \frac{n-1}{n}d\tau\phi^{\frac{2n}{n-2}}  + \mu\left( 1 + \vert\tilde{u}\vert^2_{\gamma} \right)^{\frac{1}{2}}\tilde{u}^{\flat} \phi^{2\frac{n+1}{n-2}},\\
&\!\!\!\!\!\!\!\!\!\!\!\!\!\mathrm{div}_{\gamma}\tilde{E}=\tilde{q}\phi^{\frac{2n}{n-2}},\\
\end{split}
\end{align}
where above $\tilde{E}\lrcorner\tilde{F}\doteq \tilde{F}(\tilde{E},\cdot)$, $a_{\tau}\doteq \big(2\epsilon_1  - \frac{n-1}{n}\tau^2\big)$ and moving forward we adopt the following notations for the energy-momentum contributions:
\begin{align*}
\epsilon_1=\mu\left( 1 + \vert \tilde{u}\vert^2_{\gamma} \right) \; , \; &\epsilon_2=\frac{1}{2}\vert\tilde{E}\vert^2_{\gamma} \; , \; \epsilon_3=\frac{1}{4}\vert\tilde{F}\vert^2_{\gamma},\\
{\omega_1}_k=\mu\left( 1 + \vert\tilde{u}\vert^2_{\gamma} \right)^{\frac{1}{2}}\tilde{u}_k \; , \; &{\omega_2}_{k}=\tilde{F}_{ik}\tilde{E}^i \; ,\; \tilde{q}=q(1+\vert\tilde{u}\vert^2_{\gamma})^{\frac{1}{2}}.
\end{align*}
Let us highlight that the system (\ref{Conformal-EMSystem.1}) is by nature more subtle than (\ref{constraints.2}), since it can rarely be decoupled without losing its defining properties, \emph{even under a CMC assumption}. Therefore, as long as we want to analyse a realistic charged fluid, we must face the fully coupled system (\ref{Conformal-EMSystem.1}). A similar situation arises when coupling more general Yang-Mills field to the Einstein equations (see, for instance, \cite{CB3,Holm}), but for the sake of clarity, we shall concentrate on the electromagnetic case described above.

Another source of coupling between the conformally formulated constraint equations arises through the imposition of natural boundary conditions. In particular, boundary conditions tailored so as to reproduce black hole initial data sets are of special interest and we shall concentrate on them following ideas of \cite{Holst3,Maxwell1,Dain}. Suppose that $\partial M$ consists of a finite number of compact connected components $\{\Sigma_i\}_{i=1}^{\mathcal{M}}$. Black hole boundary conditions along one such $\Sigma_i=\Sigma$ are typically imposed by the requirement that such a boundary component represents a trapped (or marginally trapped) surface. These conditions are defined by imposing a sing on the \emph{expansion coefficients} $\theta_{\pm}$, which can be written in terms of the initial data set as (see, for instance, \cite{CB2,AvalosLiraBook})
\begin{align}\label{expansioncoefs}
\theta_{\pm}=K(\nu,\nu)-\mathrm{tr}_gK \pm \mathrm{tr}_hk,
\end{align}
where, above, $\nu$ denotes the outward pointing unit normal to $\Sigma$; $h$ denotes the Riemannian metric induced by $g$ on $\Sigma$ and we have defined the extrinsic curvature of $\Sigma$ as a hypersurface of $M$ as $k(X,Y)=g(\nabla_XY,\nu)$, for any vector fields $X,Y$ tangent to $\Sigma$, which implies that $\mathrm{tr}_hk=-\mathrm{div}_g\nu$. The sign conventions we are adopting here are the same as in \cite{AvalosLiraBook,Maxwell1}.

Since the sign of the expansion coefficients measures the divergence of the future light rays  passing through $\Sigma$, among other things, this is why we call $\Sigma$ a \emph{trapped surface} if both $\theta_{\pm}<0$ and \emph{marginally trapped} if $\theta_{\pm}\leq 0$. In fact, the presence of trapped surfaces in non-compact Cauchy surfaces signals the existence of singularities in space-time \cite{Penrose}, which is expected to signal existence black holes in space-time. This expectation is related to the cosmic censorship conjectures, which have a long and rich history in mathematical relativity. If we \emph{freely specify} $\theta_{-}\leq 0$, then these conditions are translated into
\begin{align}\label{BoundaryCond1}
\begin{split}
\mathrm{tr}_hk + \theta_{-} + \tau - K(\nu,\nu) =0,\;\; K(\nu,\nu)\leq \frac{1}{2}\theta_{-} + \tau.
\end{split}
\end{align}
These can be translated into boundary conditions for the conformally formulated equations (\ref{Conformal-EMSystem.1}) applying the conformal splitting described above. In order to satisfy (\ref{BoundaryCond1}), one can impose the following boundary conditions on $(\phi,X,\tilde{E})$:
\begin{align}\label{boundcondsystems}
\begin{split}
&\hat{\nu}(\phi)= -  a_n H \phi  + (d_n\tau + a_n\theta_{-})\phi^{\frac{n}{n-2}} + a_n\left( \frac{1}{2}\vert\theta_{-}\vert - r_{n}\tau\right)v^{\frac{2n}{n-2}}\phi^{-\frac{n}{n-2}},\\
&\pounds_{\gamma,conf}X(\hat{\nu},\cdot)=-\left(\left( \frac{1}{2}\vert\theta_{-}\vert - r_{n}\tau\right)v^{\frac{2n}{n-2}} + U(\hat{\nu},\hat{\nu}) \right)\hat{\nu}^{\flat}, \\
&\langle \tilde{E},\hat{\nu}\rangle_{\gamma}=E_{\hat{\nu}},\\
&\big(\frac{1}{2}\vert\theta_{-}\vert - r_{n}\tau\big)\vert_{\Sigma}\geq 0,\\
&v\geq \phi\vert_{\Sigma},
\end{split}
\end{align}
where the dimensional constant are defined by $a_{n}=\frac{1}{2}\frac{n-2}{n-1}, d_n=\frac{1}{2}\frac{n-2}{n}$ and $r_n=\frac{n-1}{n}$; $H\doteq \mathrm{div}_{\gamma}\hat{\nu}$ is the mean curvature of $\Sigma$ as an embedded hypersurface of $(M,\gamma)$, taken with respect to $-\hat{\nu}$, where $\hat{\nu}$ stands for the outward pointing $\gamma$-unit normal to $\Sigma$, and we have introduced a new function $v$ along the boundary. 

Let us highlight that the last two conditions in (\ref{boundcondsystems}) are introduced so that the boundary condition on $X$ implies the second boundary condition in (\ref{BoundaryCond1}) associated to $\theta_{+}\leq 0$. Also, notice that the constraint $v\geq\phi$ along $\partial M$ is actually a constraint on the solution, if such solution is to represent a (marginally) trapped surface. In practice, there will appear constraints on the admissible data for $v$ such that, via a priori estimates, we can guarantee that this inequality is satisfied by the solutions. Finally, the boundary condition imposed on $\tilde{E}$, which prescribes the conformal data associated to the normal component of the electric field is a physically natural condition from electrodynamics. For further details on these kinds of boundary conditions, besides the previously cited papers, we refer the reader to \cite[Chapter 3]{AvalosLiraBook}.

Having in mind all of the above, our aim is to provide a general and robust framework to analyse the ECE in situations where they must be coupled with matter fields producing a larger system of constraint equations. Physically relevant problems motivating this analysis are given by space-time equations which have a well-posed (short-time existence) initial value formulation for initial data satisfying larger systems of constraints, such as Einstein-Maxwell, Einstein-Yang-Mills and charged Yang-Mills fluids (see \cite{CB2} and \cite{CB3} for a review of several of these results). All of these cases generate additional constraints to (\ref{Constraints-GC}) which have a similar structure to that of (\ref{EMConstraints.0}) and therefore, in general situations, the associated conformally formulated system cannot be decoupled, even under CMC assumptions. We will first present a general existence criteria, based on the existence of special types of barriers, which applies to a class of second order elliptic boundary value problems which substantially generalise (\ref{Conformal-EMSystem.1})-(\ref{boundcondsystems}) preserving its main characteristics, so that it can serve as a basis for existence results for similar systems (see Theorem \ref{ExistenceMetathmInto}). We shall refer to those systems as \emph{Einstein-type elliptic}. Then, we we will state concrete existence theorems for (\ref{Conformal-EMSystem.1})-(\ref{boundcondsystems}) which are based on the explicit construction of the appropriate barrier functions. Besides providing far from CMC initial data for (\ref{Conformal-EMSystem.1})-(\ref{boundcondsystems}) in Theorem \ref{FarCMCthmInto} and near CMC initial data in Theorem \ref{NearCMCthmInto}, it will be seen through the barrier constructions how Theorem \ref{ExistenceMetathmInto} can be put at work and how flexible it actually is. For instance, accommodating a charged perfect fluid becomes an immediate corollary.


\section{Main Results}

In order to summarise our main results, let us first introduce some notation and refer the reader to the main sections for more refined statements. Motivated by the kind of geometric problems commented in the introduction, denote by $E\xrightarrow[]{\pi} M$ a vector bundle over an $n$-dimensional manifold, with $n\geq 3$, and denote by $\psi$ sections of $E$ which belong to some appropriate functional space, say $\mathcal{B}_1$ (in practice, we will use (weighted) Sobolev spaces) . Let $\mathcal{P}:\mathcal{B}_1\mapsto \mathcal{B}_2$ be a linear second order differential operator, where $\mathcal{B}_2$ stands for some other functional space of sections of $E$; and let $\textbf{F}:\mathcal{B}_1\mapsto \mathcal{B}_2$ be a sufficiently regular first order map. We will think about geometric PDEs that can be abstractly written as
\begin{align}\label{abstractpde}
\mathcal{P}(\psi)=\textbf{F}(\psi).
\end{align}  
In this context, we will consider $E=E_0\oplus^N_{j=1} E_j$, with $E_0=M\times \mathbb{R}$ and $E_j\doteq T^{r_j}_{k_j}M$. Therefore, we will always have a second order scalar equation coupled to some system of equations for other fields. The scalar equation will be a Lichnerowicz-type equation. We will consider the case that $\partial M\neq \emptyset$ consists of finitely many compact connected components. In such a case, both $\mathcal{P}$ and $\textbf{F}$ also act on the boundary, producing appropriate boundary conditions. Motivated by PDEs of the form of (\ref{Conformal-EMSystem.1})-(\ref{boundcondsystems}), we will consider that the operators $\mathcal{P}$ belong to some family of operators $\mathcal{P}_{a,b}$ such that $\mathcal{P}=\mathcal{P}_{0,0}$, where $\mathcal{P}_{a,b}$ are invertible for appropriate choices of $a$ and $b$, including $a,b=0$. Here, ``$a$" and ``$b$" stand for appropriate functions introduced so as to \emph{perturb} the Lichnerowicz equation in (\ref{abstractpde}) in a convenient way. With this in mind, we will analyse (\ref{abstractpde}) by first deforming the equation to an equation of the form
\begin{align*}
\mathcal{P}_{a,b}(\psi)=\textbf{F}_{a,b}(\psi),
\end{align*}
where $\textbf{F}_{a,b}$ stands for an appropriate operator constructed from $\textbf{F}$, which shares its mapping properties. Then, denote by $\mathcal{F}_{a,b}$ the map
\begin{align*}
\mathcal{B}_1&\mapsto \mathcal{B}_1,\\
\psi &\mapsto \mathcal{F}_{a,b}(\psi) \doteq \mathcal{P}^{-1}_{a,b}\circ\textbf{F}_{a,b}(\psi).
\end{align*}

Motivated by the work done in \cite{Holst,Maxwell2,Mazzeo,Holst2} we will introduce the idea of \textit{strong global barriers} for systems of the form of (\ref{abstractpde}) (see Definition \ref{strongbarriers}). These are barriers for the associated Lichnerowicz equation which are \textit{uniform} on certain balls $B_{M_{Y^i}}\subset \mathcal{B}_1(M;E_i)$ for all $i=1,\dots,N$. In this context, under appropriate hypotheses on $\mathcal{P}$ and $\textbf{F}$ (see Definition \ref{einsteintypesystems}), in Theorem \ref{conformalexistence} we will prove the following result. 

\begin{theorem}\label{ExistenceMetathmInto}
Suppose that the system (\ref{abstractpde}) is a conformal Einstein-type system of the form of (\ref{generalizedsystem}) posed on a $W^p_{2,\delta}$-AE manifold $(M^n,\gamma)$, with $n\geq 3$, $p>n$ and $-\frac{n}{p}<\delta<n-2-\frac{n}{p}$. Assume that the Lichnerowicz equation admits a compatible pair of strong global sub and supersolutions given by $\phi_{-}$ and $\phi_{+}$, which are, respectively, asymptotic to harmonic functions $\omega_{\pm}$ tending to positive constants $\{A^{\pm}_{j} \}_{j=1}^N$ on each end $\{E_j \}_{j=1}^{N}$. Fix a harmonic function $\omega$ asymptotic to constants $\{A_j \}_{j=1}^{N}$ on each end satisfying $0<A^{-}_j\leq A_j\leq A^{+}_j$, and suppose that the map 
\begin{align*}
\mathcal{F}:\mathcal{B}_1(M;E)&\mapsto \mathcal{B}_1(M;E),\\
\psi=(\varphi,Y)&\mapsto \mathcal{F}(\psi)\doteq \mathcal{P}^{-1}\circ\textbf{F}(\psi).
\end{align*}
is actually invariant on the set $\times_{i}B_{M_{Y^i}}\subset \oplus_{i}\mathcal{B}_1(M,E_i)$ for all $\phi$ such that $\phi_{-}\leq \phi\leq \phi_{+}$. Then, the system admits a solution $(\phi=\omega + \varphi,Y)$, with $(\varphi,Y)\in \mathcal{B}_1(M;E)$, and, furthermore, $\phi>0$. 
\end{theorem}

Regarding the invariance property associated to $\mathcal{F}$ as stated above, we refer the reader to the discussion presented after Theorem \ref{conformalexistence}.

In order to apply this existence result to the concrete case of (\ref{Conformal-EMSystem.1})-(\ref{boundcondsystems}) without loosing generality, we will appeal to a Helmholtz decomposition for the electric field. Such a decomposition on AE manifolds without boundary was presented in \cite{Cantor}. In Theorem \ref{Helmoholtzdecomposition}
we will establish an appropriate analogue for such non-compact manifolds with compact boundary. Explicitly, we will show the following.

\begin{theorem}\label{HelmholtzThmInto}
Let $(M^n,g)$ be an $n$-dimensional $W^p_{2,\delta}$-AE, with $n\geq 3$, $p>n$ and $\delta>-\frac{n}{p}$. If $-\frac{n}{p}<\rho<n-1-\frac{n}{p}$, $\rho-1\neq -\frac{n}{p}$, and one of the following two conditions holds\\
1) $-\frac{n}{p}<\rho-1$, or\\
2) $M$ has only one end,\\
then the following decomposition holds:
\begin{align*}
W^p_{1,\rho}(M;TM)\times W^p_{1-\frac{1}{p}}(\partial M)=\nabla^{N}_g(W^p_{2,\rho-1})\oplus\mathrm{Ker}(\mathcal{L}_{2}).
\end{align*}
\end{theorem}

In the above theorem $\nabla^N:W^p_{2,\rho-1}(M)\mapsto W^p_{1,\rho}(M)\times W^p_{1-\frac{1}{p}}(\partial M)$ is given by $\nabla^Nu=(\nabla_gu,\nu(u))$, where $\nu$ stands for the outward point unit normal to $\partial M$, and $\mathcal{L}_2:W^p_{1,\rho}(TM)\times W^p_{1-\frac{1}{p}}(\partial M)\mapsto L^p_{\rho+1}(M)\times W^p_{1-\frac{1}{p}}(\partial M)$ is given by $\mathcal{L}_2=(\mathrm{div}_g,\mathrm{Id})$. Above, conditions (1) and (2) are sharp in the following sense: If $M$ has two ends and condition (1) fails, then the decomposition cannot hold. We would like to stress that the above theorem is of interest on its own right.

After applying the above decomposition in the electromagnetic case, we will show in Lemma \ref{EMcompactness} that the system (\ref{Conformal-EMSystem.1})-(\ref{boundcondsystems}) belongs to the class of systems where Theorem \ref{ExistenceMetathmInto} applies. In this context, after producing appropriate strong global barriers for such a system, in Theorem \ref{Yamabepostiveexistence}, we will prove a far from CMC existence result, which basically establishes the following.

\begin{theorem}\label{FarCMCthmInto}
Let $(M^n,\gamma)$ be a $W^p_{2,\delta}$-Yamabe positive AE manifold with compact boundary $\partial M$, with $n\geq 3$, $p>n$ and $\delta>-\frac{n}{p}$. Then, for freely specified mean curvature $\tau\in W^{p}_{1,\delta+1}$, under smallness assumptions on the remaining coefficients of the system (\ref{Conformal-EMSystem.1})-(\ref{boundcondsystems}), there is a $W^p_{2,\delta}$-solution to the conformal problem (\ref{Conformal-EMSystem.1})-(\ref{boundcondsystems}). 
\end{theorem}


In the same spirit, in Theorem \ref{Yamabenonpostiveexistence} we will prove the following near-CMC result for metrics which are not Yamabe positve.

\begin{theorem}\label{NearCMCthmInto}
Let $(M,\gamma)$ be a $W^p_{2,\delta}$-AE manifold with compact boundary $\partial M$, with $n\geq 3$, $p>n$ and $\delta>-\frac{n}{p}$. Under smallness hypotheses on the coefficients of (\ref{Conformal-EMSystem.1})-(\ref{boundcondsystems}) which include a near-CMC condition, there is a $W^p_{2,\delta}$-solution to the conformal problem (\ref{Conformal-EMSystem.1})-(\ref{boundcondsystems}).
\end{theorem}

\section{Analysis of the PDE system}
Since during most of this paper we will be mainly concerned AE manifolds, we will now begin with a couple of definitions and analytic results concerning AE manifolds where we will be following the notations established in \cite{CB2,Cantor,CB-C,McOwen}. Let us merely notice that some of the main results we will derive apply also to compact manifolds, modulo technical hypotheses, since, for many purposes, we can think about compact manifolds as AE manifolds with no ends. 
\begin{defn}[Manifolds euclidean at infinity]
A complete $n$-dimensional smooth Riemannian manifold $(M,e)$ is called euclidean at infinity if there is a compact set $K$ such that $M\backslash K$ is the disjoint union of a finite number of open sets $U_i$, such that each $U_i$ is isometric to the exterior of an open ball in Euclidean space.   
\end{defn}
Notice that any manifold $M^n$ with the topological structure implied by the above definition carries a smooth Riemannian metric which is isometric to the euclidean metric in each end sufficiently near infinity. We denote such a metric by ``$e$" and use it to define some functional spaces.

\begin{defn}[Weighted Sobolev spaces]
A weighted Sobolev space $W^p_{s,\delta}$, with $s$ a nonnegative integer and $\delta\in\mathbb{R}$, of tensor fields of some given type on the manifold $(M,e)$ Euclidean at infinity, is the space of tensor fields with generalized derivatives of order up to $s$ in the metric $e$ such that $D^mu(1 + d^2)^{\frac{1}{2}(m+\delta)}\in L^p$, for $0\leq m\leq s$. It is a Banach space with norm
\begin{align}\label{norm}
\Vert u\Vert^{p}_{W^p_{s,\delta}}\doteq \sum_{0\leq m\leq s}\int_M\vert D^mu\vert_e^p\sigma^{p(m+\delta)}\mu_{e}
\end{align}
where $D$ represents the $e$-covariant derivative; $\mu_e$ the Riemannian volume form associated with $e$, $\sigma(x)=(1+d^2(x))^{\frac{1}{2}}$ and $d(x)=d(x,p)$ the distance in the Riemannian metric $e$ of an arbitrary point $x$ to a fixed point $p$.
\end{defn}


Noticing that, near infinity, a field $u\in L^p_{\delta}\cap C^0$ behaves like $\vert u\vert_e=o(\vert x\vert^{-(\delta+\frac{n}{p})})$ makes clear the relation between our choice of weight parameter and other choices used in other standard references, such as \cite{Maxwell1,Mazzeo,Holst2,Bartnik}. Furthermore, having in mind the behaviour at infinity of elements in $W^p_{s,\delta}$, we present the following definition.

\begin{defn}[AE manifolds]
Let $(M,e)$ be a manifold euclidean at infinity and $g$ be a Riemannian metric on $M$. Then we will say that $(M,g)$ is $W^p_{s,\delta}$-AE if $g-e\in W^p_{s,\delta}$ for some $\delta>-\frac{n}{p}$ and $s>\frac{n}{p}$.
\end{defn}

Let us now adapt the above language so as to allow that within the compact region $K$ we have a boundary.
Notice that we have a finite number of end charts, say $\{U_i,\varphi_i \}_{i=1}^{k}$, with $\varphi(U_i)\simeq \mathbb{R}^n\backslash \overline{B_1(0)}$, and a finite number of coordinate charts covering the compact region $K$, say $\{U_i,\varphi_i\}_{i=k+1}^N$. We can consider a partition of unity $\{\eta_i\}_{i=1}^N$ subordinate to the coordinate  cover $\{U_i,\varphi_i\}_{i=1}^N$, and let $V_i$ be equal to either $\mathbb{R}^n$ or $\mathbb{R}^n_{+}$, depending on whether $U_i$, $i\geq k+1$, is an interior of boundary chart respectively. Then, given a vector bundle $E\xrightarrow{\pi} M$, we can define $W^p_{s,\delta}(M;E)$ to be the subset of $W^p_{s,loc}(M;E)$ such that 
\begin{align}\label{locsob}
\Vert u\Vert_{W^p_{s,\delta}}&=\sum_{i=1}^k\Vert {\varphi^{-1}_i}^{*}(\eta_i u)\Vert_{W^p_{s,\delta}(\mathbb{R}^n)} + \sum_{i=k+1}^N\Vert {\varphi^{-1}_i}^{*}(\eta_i u)\Vert_{W^p_{s}(V_i)}<\infty. 
\end{align}

Similarly, we introduce the weighted $C^{k}_{\beta}$-spaces, $k\geq 0$ an integer and $\beta\in \mathbb{R}$, to be the subset of $C^{k}_{loc}(M;E)$ such that 
\begin{align}\label{Ck-norm}
\Vert u\Vert_{C^k_{\beta}}=\sum_{j\leq k}\sup_M\vert D^ju\vert_e(1+d^2_e)^{\frac{1}{2}(\beta+j)}<\infty,
\end{align}
which is also a Banach space under this norm. In this context the following properties hold (see \cite{CB2,Maxwell1,Cantor,CB-C,Bartnik}), where we assume we may have finitely many compact boundary connected components $\{\Sigma_i\}_{i=1}^{\mathcal{M}}\subset K$.
\begin{prop}[$W^p_{s,\delta}$-Properties]\label{sobprop}
Let $M^n$ be a manifold euclidean at infinity, $n\geq 3$. Then, for $1<p<\infty$ a real number, $\delta,\delta',\delta_1,\delta_2,\beta\in \mathbb{R}$, and $s,m$ non-negative integers the following properties hold:\\
i) If $1\leq s<\frac{n}{p}$, then $W^{p}_{s+m,\delta}\hookrightarrow W^{\frac{np}{n-sp}}_{m,\delta+s}$,\\
ii) If $s>\frac{n}{p}$, then $W^{p}_{s+m,\delta}\hookrightarrow C^{m}_{\beta}$, with $\beta\leq \delta+\frac{n}{p}$.\\
iii) If $s_1,s_2\geq s$; $s_1+s_2>s+\frac{n}{p}$ and $\delta< \delta_1 + \delta_2 + \frac{n}{p}$, then the multiplication mapping $(u,v) \rightarrow u\otimes v$, defines a continuous bilinear map between $W^{p}_{s_1,\delta_1}\times W^p_{s_2,\delta_2} \rightarrow W^{p}_{s,\delta}$.\\
iv) The embedding $W^p_{s,\delta}\hookrightarrow W^p_{s-1,\delta'}$ is compact for any $s\geq 1$ and $\delta'<\delta$.
\end{prop}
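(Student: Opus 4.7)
The plan is to reduce each of the four assertions to its classical counterpart via the partition-of-unity decomposition built into the definition (\ref{locsob}). Split $u=\sum_i \eta_i u$, where for $i\leq k$ the term $\eta_i u$ is supported in an end chart and is pulled back to a tensor field on $\mathbb{R}^n\setminus\overline{B_1(0)}$, extended by zero to all of $\mathbb{R}^n$, while for $i\geq k+1$ the term is compactly supported in an interior or half-space chart and can be identified with an element of the ordinary Sobolev space $W^p_s(V_i)$ on a bounded (Lipschitz) subdomain of $\mathbb{R}^n$ or $\mathbb{R}^n_+$. On the compact portion the weight $\sigma(x)=(1+d^2(x))^{1/2}$ is bounded above and below, so the weighted and unweighted norms are equivalent there, and the classical Sobolev embedding, Morrey inequality, algebra property and Rellich--Kondrachov theorem on bounded domains apply without modification.

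For the end-chart contributions, (i) and (ii) reduce to the weighted Sobolev embedding and weighted Morrey embedding on $\mathbb{R}^n$, both of which are standard and proved, for example, in \cite{Cantor},\cite{CB-C},\cite{Bartnik}. The factor $\sigma^{p(m+\delta)}$ in the definition (\ref{norm}) is, up to constants, equal to $(1+|x|^2)^{p(m+\delta)/2}$ on each end, and the embedding constants track exactly the weight shifts stated: a loss of $s$ derivatives adds $s$ to the weight in case (i), while polynomial decay of order $\delta+n/p$ for elements of $W^p_{s,\delta}$ with $s>n/p$ recovers case (ii). Summing the inequalities obtained on each end chart and on the compact part and invoking the finite partition of unity yields the continuous embeddings.

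For (iii) one expands $u\otimes v=\sum_{i,j}(\eta_i u)\otimes(\eta_j v)$; only pairs with overlapping supports contribute, which on an end reduces to the weighted multiplication inequality $W^p_{s_1,\delta_1}(\mathbb{R}^n)\times W^p_{s_2,\delta_2}(\mathbb{R}^n)\to W^p_{s,\delta}(\mathbb{R}^n)$ under exactly the stated hypotheses $s_1,s_2\geq s$, $s_1+s_2>s+n/p$ and $\delta<\delta_1+\delta_2+n/p$. The latter condition reflects the fact that the asymptotic decay rates add: $|u|_e=o(|x|^{-(\delta_1+n/p)})$ and $|v|_e=o(|x|^{-(\delta_2+n/p)})$ combine to give $|u\otimes v|_e=o(|x|^{-(\delta_1+\delta_2+2n/p)})$, which corresponds to weight $\delta_1+\delta_2+n/p$. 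On compact charts the analogous inequality is the classical Moser-type algebra property, which holds under the same derivative hypothesis and imposes no weight condition. Assembling the finitely many estimates gives the desired bilinear bound.

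For (iv), one uses a diagonal/truncation argument. Given a bounded sequence $\{u_\ell\}\subset W^p_{s,\delta}$, fix $R>0$ and apply Rellich--Kondrachov on the compact region $M\cap \{d\leq R\}$, extracting a subsequence converging in $W^p_{s-1}$ there, and on the end portions use the fact that the weighted $L^p_{\delta+s-1}$-mass outside $\{d\leq R\}$ is controlled by $R^{-(\delta-\delta')}$ times the $W^p_{s,\delta}$-norm (the strict inequality $\delta'<\delta$ is what makes this tail go to zero). A diagonalisation over $R\to\infty$ then produces a Cauchy sequence in $W^p_{s-1,\delta'}$. The main subtlety, and the step most at risk of going wrong, is the multiplication property (iii): one must carefully track how the weights behave under the bilinear pairing when both supports meet the same end, since naive bookkeeping of the polynomial weight exponents can lose a factor of $n/p$; the correct way to see this is to first prove the inequality on $\mathbb{R}^n$ by rescaling annuli to dyadic regions of unit size, applying the standard unweighted Moser estimate there, and summing the resulting geometric series in the weight.
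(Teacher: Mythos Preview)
The paper does not supply its own proof of this proposition: it is stated as a collection of known facts, with a parenthetical citation ``(see \cite{CB2},\cite{Maxwell1},\cite{Cantor},\cite{CB-C},\cite{Bartnik})'' immediately preceding the statement, and no argument follows. Your sketch is therefore not competing against a proof in the paper but rather filling in what the cited references contain.

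That said, your outline is correct and is precisely the standard route taken in those references: localise via the partition of unity from (\ref{locsob}), handle the compact core with the classical Sobolev, Morrey, Moser and Rellich--Kondrachov results (the weight being irrelevant there), and handle each end by pulling back to $\mathbb{R}^n$ and invoking the weighted versions of these theorems. Your discussion of the weight bookkeeping in (iii), and of the tail estimate $\sigma^{-p(\delta-\delta')}\leq C R^{-p(\delta-\delta')}$ on $\{d>R\}$ that drives the compactness in (iv), is accurate. It is worth noting that the paper itself carries out exactly this kind of cutoff-plus-tail argument in detail in the very next result, Proposition~\ref{compactembed1}, for the compact embedding $W^p_{1,\delta}\hookrightarrow C^0_{\delta'+n/p}$; your argument for (iv) is the $W^p$-target analogue of that computation.
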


Using the Sobolev embeddings we get that if $p>n$, then $\vert u\vert_e\lesssim \sigma^{-(\delta+\frac{n}{p})}\Vert u\Vert_{W^p_{1,\delta}}$, with $\sigma\doteq (1+d^2_e)^{\frac{1}{2}}$. Notice that if $r$ is a continuous function which in the ends, sufficiently near infinity, agrees with the euclidean radial function $\vert x\vert$, then there are constants $C_1$ and $C_2$, such that $C_1r(x)\leq \sigma(x)\leq C_2 r(x)$. In particular, this implies that one can substitute $r(x)$ for $\sigma$ in (\ref{Ck-norm}) to define an equivalent norm. Furthermore, using the above relations, we get the following.
\begin{prop}\label{Mazzeo3}
Let $(M,g)$ be a $W^p_{2,\delta}$-AE manifold, and consider a function $r$ on $M$ which near infinity, in each end, agrees with $\vert x\vert$. Then, if $p>n$, for any $u\in W^p_{1,\delta}$ the following estimate holds
\begin{align}
\vert u\vert_e\lesssim r^{-(\delta+\frac{n}{p})}\Vert u\Vert_{W^p_{1,\delta}}.
\end{align} 
\end{prop}

The following property will be important for us, and since we have not found it explicitly stated in any standard reference, we will present a proof. 
\begin{prop}\label{compactembed1}
Let $(M,e)$ be a manifold euclidean at infinity and consider the same setting as above. Then, it holds that the embedding $W^p_{1,\delta}\hookrightarrow C^0_{\delta'+\frac{n}{p}}$ is compact for any $\delta'<\delta$ and $p>n$. 
\end{prop}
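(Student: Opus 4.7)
The plan is to combine the continuous embedding of part (ii) of Proposition \ref{sobprop} (which gives $W^p_{1,\delta}\hookrightarrow C^0_{\delta+n/p}$ since $p>n$) with the classical Morrey/Rellich--Kondrachov compact embedding on bounded domains, together with a tail estimate exploiting the weight gap $\delta'<\delta$.

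Let $\{u_k\}\subset W^p_{1,\delta}$ be a bounded sequence. First, I would invoke the continuous inclusion $W^p_{1,\delta}\hookrightarrow C^0_{\delta+n/p}$ to obtain a uniform bound
\begin{align*}
\sup_{x\in M}|u_k(x)|(1+d^2(x))^{\frac{1}{2}(\delta+n/p)}\leq C,
\end{align*}
for some $C$ independent of $k$. Next, I would exhaust $M$ by compact subsets $K_R=\{x\in M:d(x)\leq R\}$, which are manifolds with boundary $\partial K_R=\partial M\cup\{d(x)=R\}$ for $R$ large enough (after a small smoothing, or working with the closed balls in the end charts). Working chart by chart using the partition of unity $\{\eta_i\}$ from (\ref{locsob}), each localized piece $\varphi_i^{-1*}(\eta_i u_k)$ is bounded in the standard $W^{1,p}$ on a bounded euclidean domain, so the classical compact embedding $W^{1,p}\hookrightarrow C^0$ (valid for $p>n$) furnishes a subsequence converging uniformly on each $K_R$. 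A diagonal argument over $R\to\infty$ then produces a subsequence, still denoted $\{u_k\}$, converging uniformly on every compact subset of $M$ to some continuous $u$.

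The heart of the argument is then the tail estimate. For $x$ with $d(x)\geq R$,
\begin{align*}
|u_k(x)-u_j(x)|(1+d^2(x))^{\frac{1}{2}(\delta'+n/p)}
&\leq 2C(1+d^2(x))^{\frac{1}{2}(\delta'-\delta)}\\
&\leq 2C(1+R^2)^{\frac{1}{2}(\delta'-\delta)},
\end{align*}
which tends to $0$ as $R\to\infty$ because $\delta'-\delta<0$. Hence given $\varepsilon>0$ one can first pick $R$ so large that the tail contribution is less than $\varepsilon/2$ for all $k,j$, and then use uniform convergence on the compact set $K_R$ (where the weight $(1+d^2)^{\frac{1}{2}(\delta'+n/p)}$ is bounded) to make the remaining part less than $\varepsilon/2$ for $k,j$ large. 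This shows $\{u_k\}$ is Cauchy in $C^0_{\delta'+n/p}$, and the limit must coincide with $u$, giving compactness.

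The main technical point I expect to be mildly delicate is the diagonal/exhaustion step in the presence of the boundary $\partial M$ and the end structure encoded in (\ref{locsob}): one must verify that the localized pieces $\varphi_i^{-1*}(\eta_i u_k)$ actually satisfy the hypotheses of the classical compact Morrey embedding on the relevant bounded Euclidean (or half-space) domains, and that uniform convergence of these pieces combines to uniform convergence on $K_R$. Once this bookkeeping is in place, the weight gap $\delta'<\delta$ immediately controls the tail and completes the proof.
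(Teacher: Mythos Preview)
Your proposal is correct and follows essentially the same approach as the paper: both combine the continuous embedding $W^p_{1,\delta}\hookrightarrow C^0_{\delta+n/p}$, the classical compact embedding $W^{1,p}\hookrightarrow C^0$ on bounded regions, and the tail estimate coming from the weight gap $\delta'<\delta$. The only organizational difference is that the paper uses a single smooth cutoff $\chi_R$ (supported in $\bar{B}_{2R}$, equal to $1$ on $\bar{B}_R$) to split $u_n=\chi_R u_n + (1-\chi_R)u_n$ and extracts the Cauchy subsequence for a fixed $R_\varepsilon$, whereas you run a diagonal argument over an exhaustion and a partition of unity---both routes yield the same conclusion with the same ingredients.
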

\begin{proof}
Let $R\gg 1$ be a real number, consider spheres $S^i_R$ of radius $R$ contained in each end $E_i$ of $M$ and denote by $B_R$ the open subset of $M$ covering the compact core and such that $\partial B_{R}=\cup_iS^i_{R}$. Since $M$ is complete, then $\overline{B_{2R}}$ is a compact manifold with smooth boundary. Then, consider a cut off function $\chi_R$, which is equal to one on $\overline{B_R}$, equal to zero on $M\backslash\overline{B_{2R}}$ and $0\leq \chi_R\leq 1$. Then, let $\{u_k\}\subset W^p_{1,\delta}$ be a bounded sequence, \textit{i.e}, $\Vert u_k\Vert_{W^p_{1,\delta}}\leq 1$ for all $k$, and split $u_k=\chi_Ru_k + (1-\chi_R)u_k$. Since $\chi_Ru_k$ is a bounded sequence supported in $\overline{B_{2R}}$, then $\{\chi_Ru_k\}\subset W^p_{1}(\overline{B_{2R}})$. Since for $p>n$ $W^p_{1}(\overline{B_{2R}})$ is compactly embedded in $C^0(\overline{B_{2R}})$, there is a subsequence $\{\chi_Ru_{k_{j}}\}_{j=1}^{\infty}$ which is convergent in $C^0(\overline{B_{2R}})$, to which we now restrict. Now, using a partition of unity as in (\ref{locsob}), for $R$ large enough, we write $(1-\chi_{R})u_{k_j}=\sum_i\eta_i(1-\chi_{R})u_{k_j}$. Since these fields are supported in the ends of $M$ and can be extended as fields on $\mathbb{R}^n$, appealing the discussion prior to Proposition \ref{Mazzeo3} let us below consider the equivalent $C^{k}_{\beta}$-norm taken with $r(x)=(1+\vert x\vert^2)^{\frac{1}{2}}$, so that 
\begin{align*}
\Vert (1-\chi_{R})u_{k_j}-(1-\chi_{R})&u_{k_l}\Vert_{C^0_{\delta'+\frac{n}{p}}}\leq \sum_i\sup_{\mathbb{R}^n}\vert {\varphi^{-1}_i}^{*}\big((1-\chi_{R})(\eta_iu_{k_j}-\eta_iu_{k_l})\big)\vert_er^{\delta'+\frac{n}{p}}\\
&\leq (1+R^2)^{-\frac{1}{2}(\delta-\delta')}\sum_i\sup_{\mathbb{R}^n}\vert {\varphi^{-1}_i}^{*}\eta_iu_{k_j} - {\varphi^{-1}_i}^{*}\eta_iu_{k_l}\vert_er^{\delta +\frac{n}{p}} 
\end{align*} 
Now, under our hypotheses, we have a continuous embedding $W^p_{1,\delta}\hookrightarrow C^0_{\delta+\frac{n}{p}}$, which implies that there is a constant $C>0$, depending only on $M$, such that
\begin{align*}
\Vert (1-\chi_{R})u_{k_j}-(1-\chi_{R})u_{k_l}\Vert_{C^0_{\delta'+\frac{n}{p}}}&\leq C(1+R^2)^{-\frac{1}{2}(\delta-\delta')}\Vert u_n-u_m\Vert_{W^p_{1,\delta}(M)},\\
&\leq 2C(1+R^2)^{-\frac{1}{2}(\delta-\delta')}.
\end{align*}
Now, fix $\epsilon>0$ and, since $\delta-\delta'>0$, there is a radius $R_{\epsilon}$ sufficiently large such that $(1+R_{\epsilon}^2)^{-\frac{1}{2}(\delta-\delta')}< \frac{\epsilon}{4C}$. Once we have fixed such an $R_{\epsilon}>0$, from the above discussion, we know that $\{{\chi_R}_{\epsilon}u_{k_j} \}\subset W^p_1(\overline{B_{2R_{\epsilon}}})$ is Cauchy in $C^0(\overline{B_{2R_{\epsilon}}})$, which implies that there is an $N=N(\epsilon)$, such that $\forall$ $j,l\geq N$ it holds that $\Vert\chi_R u_{k_j} - \chi_R u_{k_l}\Vert_{C^0(\overline{B_{2R_{\epsilon}}})}< \frac{\epsilon}{2}$. Furthermore, since on $\overline{B_{2R_{\epsilon}}}$ the $C^0$ and $C^0_{\delta'+\frac{n}{p}}$ norms are equivalent, this last statement also holds with the $C^0(\overline{B_{2R_{\epsilon}}})$ norm changed by $C^0_{\delta'+\frac{n}{p}}(\overline{B_{2R_{\epsilon}}})$.  Thus, we get that
\begin{align*}
\Vert u_{k_j}-u_{k_l}\Vert_{C^0_{\delta'+\frac{n}{p}}(M)}&<\frac{\epsilon}{2} + \frac{\epsilon}{2}=\epsilon,
\end{align*} 
proving that there is a subsequence which is Cauchy in $C^0_{\delta'+\frac{n}{p}}$, and thus is convergent. 
\end{proof}

\begin{coro}\label{compactembed2}
Let $(M,e)$ be a manifold euclidean at infinity and consider the same setting as above. Then, it holds that the embedding $W^p_{2,\delta}\hookrightarrow C^1_{\delta'+\frac{n}{p}}$ is compact for any $\delta'<\delta$ and $p>n$.
\end{coro}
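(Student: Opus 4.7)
The plan is to deduce compactness of the embedding $W^p_{2,\delta} \hookrightarrow C^1_{\delta'+\frac{n}{p}}$ by applying Proposition \ref{compactembed1} twice: once to the sequence itself, and once to its gradient. Concretely, the weighted $C^1$-norm decomposes into a zero-order piece (with weight $\sigma^{\delta'+\frac{n}{p}}$) and a first-order piece (with weight $\sigma^{\delta'+1+\frac{n}{p}}$), and each piece can be controlled by a $C^0$-norm in the appropriate weighted space.

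First I would take a bounded sequence $\{u_n\} \subset W^p_{2,\delta}$. From the definition of the weighted norm in (\ref{norm}), there is a continuous inclusion $W^p_{2,\delta} \hookrightarrow W^p_{1,\delta}$, so $\{u_n\}$ is bounded in $W^p_{1,\delta}$. Since $p>n$ and $\delta'<\delta$, Proposition \ref{compactembed1} (applied to the tensor bundle on which the $u_n$ live, via the local coordinate/partition-of-unity presentation in (\ref{locsob})) produces a subsequence, which I still denote $\{u_n\}$, converging in $C^0_{\delta'+\frac{n}{p}}$.

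Next I would pass to the covariant derivatives. Inspecting (\ref{norm}) again, one sees that
\[
\|Du\|_{W^p_{1,\delta+1}}^p = \int_M |Du|_e^p\,\sigma^{p(\delta+1)}\mu_e + \int_M |D^2u|_e^p\,\sigma^{p(\delta+2)}\mu_e \leq \|u\|_{W^p_{2,\delta}}^p,
\]
so the map $u \mapsto Du$ is a continuous operator $W^p_{2,\delta} \to W^p_{1,\delta+1}$ (into the appropriate tensor bundle). Consequently $\{Du_n\}$ is bounded in $W^p_{1,\delta+1}$, and since $\delta'+1 < \delta+1$, a second application of Proposition \ref{compactembed1} yields a further subsequence along which $Du_n$ converges in $C^0_{\delta'+1+\frac{n}{p}}$.

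Combining these two steps, along the doubly extracted subsequence both terms in
\[
\|u_n - u_m\|_{C^1_{\delta'+\frac{n}{p}}} = \sup_M |u_n-u_m|_e\,\sigma^{\delta'+\frac{n}{p}} + \sup_M |Du_n - Du_m|_e\,\sigma^{\delta'+1+\frac{n}{p}}
\]
tend to zero, giving a Cauchy (hence convergent) subsequence in $C^1_{\delta'+\frac{n}{p}}$. The only mild subtlety — really the single point that requires care — is making sure Proposition \ref{compactembed1} is applied to $Du$ as a section of the relevant tensor bundle rather than to a scalar function, but the localisation scheme in (\ref{locsob}) reduces this to the scalar case on $\mathbb{R}^n$ or $\mathbb{R}^n_+$ chart by chart, which is exactly the setting handled in the previous proposition. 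No additional obstacle arises.
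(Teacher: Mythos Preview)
Your proposal is correct and is precisely the argument the paper has in mind: the corollary is stated without proof as an immediate consequence of Proposition \ref{compactembed1}, and the natural way to obtain it is exactly your two-step application to $u_n$ in $W^p_{1,\delta}$ and to $Du_n$ in $W^p_{1,\delta+1}$, using the decomposition of the $C^1_{\delta'+\frac{n}{p}}$-norm given in (\ref{Ck-norm}). Your remark about Proposition \ref{compactembed1} applying to tensor fields is well taken but not an issue, since the proposition and its proof are already written for arbitrary tensor bundles via the norm (\ref{norm}) and the localisation (\ref{locsob}).
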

Let us furthermore comment that, given a compact manifold with compact boundary, $W^{p}_{k}$-fields have $W^p_{k-\frac{1}{p}}$ traces on the boundary, and the trace map is continuous (see \cite{Holst3} for a review on this topic and \cite{Adams,Grisvard} for a more detailed discussions). In our present case, since the trace map only acts on the compact core of the manifold, we get the same result. Furthermore, we should notice that, given a compact manifold $\Sigma$, the continuous embedding property $W^p_{s}(\Sigma)\hookrightarrow C^k(\Sigma)$ still holds for any real $s$ satisfying $s-k>\frac{n}{p}$, and $1<p<\infty$ (see \cite{Holst3,Adams,Grisvard}). 

As we will shortly see, the PDE system associated with a charged fluid given by (\ref{Conformal-EMSystem.1})-(\ref{boundcondsystems}), involves only two different differential operators with nice linear properties. These are the Laplacian acting on functions and the conformal Killing Laplacian acting on vector fields. The main properties we will need concerning these operators are enclosed in the following two propositions, which follow from results established, for instance, in \cite{Maxwell2,Cantor}.

\begin{prop}\label{isomorphismthm}
Let $(M,\gamma)$ be a $W^p_{2,\rho}$-AE manifold with $p>n$ and $\rho>-\frac{n}{p}$. Consider the operators
\begin{align}
\begin{split}
\mathcal{P}_1:W^{p}_{2,\delta}(M,\mathbb{R})&\mapsto L^p_{\delta+2}(M,\mathbb{R})\times W^p_{1-\frac{1}{p}}(\partial M,\mathbb{R}),\\
u&\mapsto (\Delta_{\gamma}u - au, -(\nu(u) + bu )\vert_{\partial M}),\\
\mathcal{P}_2:W^{p}_{2,\delta}(M,TM)&\mapsto L^p_{\delta+2}(M,T^{*}M)\times W^p_{1-\frac{1}{p}}(\partial M,T^{*}M),\\
X&\mapsto (\Delta_{\gamma,conf}X,\pounds_{\gamma,conf}(\nu,\cdot)\vert_{\partial M})
\end{split}
\end{align}
with $a\in L^p_{\delta+2}, b\in W^p_{1-\frac{1}{p}}$ non-negative functions, $\nu$ the outward pointing normal to $\partial M$ and $\delta>-\frac{n}{p}$. Then, if $-\frac{n}{p}<\delta<n-2-\frac{n}{p}$, both these operators are isomorphism.
\end{prop}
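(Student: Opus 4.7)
The plan is to establish both operators as bounded Fredholm maps of index zero and then show that their kernels are trivial; the open mapping theorem will then deliver the isomorphism. Boundedness is routine: $\Delta_\gamma$ and $\Delta_{\gamma,conf}$ map $W^p_{2,\delta}\to L^p_{\delta+2}$ because the metric is $W^p_{2,\rho}$-AE and the Christoffel symbols sit in suitable weighted spaces, the multiplications $a\, u$ and $b\, u$ are controlled by Proposition \ref{sobprop} (iii), and the trace of $u$ and $\nu(u)$ on the compact boundary $\partial M\subset K$ land in $W^p_{2-\frac{1}{p}}(\partial M)$ and $W^p_{1-\frac{1}{p}}(\partial M)$ respectively (the weight is irrelevant on the compact boundary).

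For Fredholmness I would follow the standard gluing scheme. On each end, after identifying it with $\mathbb R^n\setminus B_1$ and writing $\gamma=e+h$ with $h\in W^p_{2,\rho}$, the operators are compact perturbations of the flat-space $\Delta$ and $\Delta_{e,conf}$; the isomorphism results of \cite{Cantor} and \cite{Maxwell2} apply in the stated range $-\frac{n}{p}<\delta<n-2-\frac{n}{p}$. On the compact core one has a genuine elliptic boundary value problem: for $\mathcal{P}_1$ the Robin condition $\nu(u)+bu$ obviously satisfies Lopatinsky--Shapiro, and for $\mathcal{P}_3$ the condition $\pounds_{\gamma,conf}X(\nu,\cdot)=g$ is the natural Neumann-type covering condition for $\Delta_{\gamma,conf}$ (cf.\ the setup of \cite{Maxwell1,Holst3}). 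Combining interior weighted elliptic estimates on the ends with the $W^p_2$ Agmon--Douglis--Nirenberg estimates on the compact core via a cut-off parametrix yields the semi-Fredholm property; the compactness in Proposition \ref{sobprop} (iv) then promotes it to Fredholm. The index is computed by continuous deformation: the zeroth order multiplications by $a$ and $b$ are compact (again by Proposition \ref{sobprop} (iv)), so they do not change the index, and one can homotope $(a,b)$ to $(0,0)$; the index of the resulting operators agrees with that of the model problems on $\mathbb R^n$ and on the compact core, which is zero in the admissible weight range by the duality $\delta\leftrightarrow -(n+\delta)+(2)$ symmetric about $\frac{n-2}{2}-\frac{n}{p}$.

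Triviality of the kernel is the cleanest step. For $\mathcal{P}_1$, given $u\in\ker\mathcal{P}_1$, Proposition \ref{compactembed1} together with $\delta>-\frac{n}{p}$ forces $u$ to decay at infinity and $\nabla u\in L^p_{\delta+1}$, so all boundary terms at infinity vanish in Green's identity and one obtains
\[
\int_M \bigl(|\nabla u|_\gamma^2+au^2\bigr)\,\mu_\gamma+\int_{\partial M} b\, u^2\, d\sigma_\gamma=0,
\]
which, because $a,b\geq 0$, gives $\nabla u\equiv 0$; constancy plus decay implies $u\equiv 0$. For $\mathcal{P}_3$ the same integration by parts yields
\[
\tfrac{1}{2}\int_M |\pounds_{\gamma,conf}X|_\gamma^2\,\mu_\gamma=0,
\]
so $X$ is a conformal Killing field for $\gamma$; but a conformal Killing field in $W^p_{2,\delta}$ with $\delta>-\frac{n}{p}$ on a $W^p_{2,\rho}$-AE manifold must vanish identically, as is classical on $\mathbb R^n$ and extends to the AE setting (see \cite{Maxwell2}).

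The main obstacle is the Fredholm/index step rather than the kernel step: one has to verify simultaneously that the boundary conditions cover the principal symbol of the elliptic operator on the compact core (in the Lopatinsky--Shapiro sense) and that on each end the full operator lies in the admissible weight range where the model problem is an isomorphism, then glue these two pieces coherently with a parametrix. Everything else, including the integration-by-parts argument that forces trivial kernels, is a straightforward consequence of the weighted Sobolev machinery already summarised in Proposition \ref{sobprop} and Proposition \ref{compactembed1}.
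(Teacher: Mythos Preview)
The paper does not actually prove this proposition: it is stated as a consequence of results in \cite{Maxwell1}, \cite{Maxwell2} and \cite{Cantor}, and immediately after the statement the authors note that the absence of conformal Killing fields in $W^p_{2,\delta}$ for $\delta>-\tfrac{n}{p}$ is Theorem~4 in \cite{Maxwell1}. So there is no ``paper's own proof'' to compare with beyond these citations; your outline is precisely the kind of argument those references carry out (Fredholm via gluing end parametrices to the compact-core boundary problem, index via deformation, kernel via sign conditions).

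One genuine technical point in your kernel step deserves attention. For $\mathcal{P}_1$ you write down the global $L^2$ identity $\int_M(|\nabla u|^2+au^2)+\int_{\partial M}bu^2=0$, but for $u\in W^p_{2,\delta}$ with $-\tfrac{n}{p}<\delta<\tfrac{n-2}{2}-\tfrac{n}{p}$ (the lower half of the admissible range) neither $|\nabla u|^2$ nor the boundary-at-infinity term need be integrable, so the identity is not immediately justified. The references you invoke (and the paper itself, via Lemma~\ref{weakmaxprinc} in Appendix~I) handle this instead with the weak maximum principle: $u\in\ker\mathcal{P}_1$ satisfies $\Delta_\gamma u-au=0$, $-(\nu(u)+bu)=0$ on $\partial M$, and $u\to 0$ at infinity, so applying the maximum principle to $\pm u$ gives $u\equiv 0$ without any integrability issue. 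For $\mathcal{P}_3$ the same decay obstruction affects the naive $L^2$ pairing; the clean route is exactly what the paper cites, namely that a $W^p_{2,\delta}$ conformal Killing field on a $W^p_{2,\rho}$-AE manifold with $p>n$ and $\delta>-\tfrac{n}{p}$ must vanish (Theorem~4 of \cite{Maxwell1}), which is proved there by an asymptotic/unique-continuation argument rather than a global energy identity. With these substitutions your sketch is sound and matches the cited literature.
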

Above, the condition $p>n$ instead of $p>\frac{n}{2}$ is only a sufficient condition to guarantee that the conformal Killing Laplacian is an isomorphism. This could be replaced by $p>\frac{n}{2}$ and demanding that $\gamma$ possesses no conformal Killing fields (CKF). Due to \cite[Theorem 4]{Maxwell1}, we know that if $\gamma$ is a $W^p_{2,\rho}$-AE metric with $p>n$ and $\rho>-\frac{n}{p}$, then $\gamma$ does not admits any CKF in $W^p_{2,\delta}$ for any $\delta>-\frac{n}{p}$. From the above theorem plus standard arguments we can establish the following elliptic estimates, which will be crucial in our analysis.
\begin{prop}\label{injectivityestimate}
Consider the same set up as in the above proposition and for any $u\in W^p_{2,\delta}(M;\mathbb{R})$ and $X\in W^p_{2,\delta}(M;TM)$ write $\mathcal{P}_1u=(f_u,g_u)\in L^p_{\delta+2}(M,\mathbb{R})\times W^p_{1-\frac{1}{p}}(\partial M,\mathbb{R})$ and $\mathcal{P}_2X=(Y_X,Z_X)\in L^p_{\delta+2}(M,T^{*}M)\times W^p_{1-\frac{1}{p}}(\partial M,T^{*}M)$. Then, there are constants $C_1,C_2>0$ such that the following estimates hold for any $u\in W^p_{2,\delta}$ and any $X\in W^p_{2,\delta}$:
\begin{align}
\begin{split}
\Vert u\Vert_{W^p_{2,\delta}}&\leq C_1\big(\Vert f_u\Vert_{L^p_{\delta+2}} + \Vert g_u\Vert_{W^p_{1-\frac{1}{p}}} \big),\\
\Vert X\Vert_{W^p_{2,\delta}}&\leq C_2\big(\Vert Y_X\Vert_{L^p_{\delta+2}} + \Vert Z_X\Vert_{W^p_{1-\frac{1}{p}}} \big).
\end{split}
\end{align}
\end{prop}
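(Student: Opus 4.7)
The plan is to derive these estimates as an immediate consequence of Proposition \ref{isomorphismthm} via the Bounded Inverse Theorem. Since the hypotheses of Proposition \ref{isomorphismthm} guarantee that $\mathcal{P}_1$ and $\mathcal{P}_3$ are bijective bounded linear maps between the stated Banach spaces, Banach's isomorphism theorem (a direct corollary of the Open Mapping Theorem) yields that $\mathcal{P}_1^{-1}$ and $\mathcal{P}_3^{-1}$ are themselves bounded linear operators. The norm of these inverses furnishes the constants $C_1$ and $C_2$.

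More concretely, first I would remark that the continuity of $\mathcal{P}_1$ and $\mathcal{P}_3$ as operators between the relevant Banach spaces is a standard consequence of Proposition \ref{sobprop} (multiplication properties handle the zero-order terms $au$ and $bu$, given the stated regularity $a\in L^p_{\delta+2}$, $b\in W^p_{1-\frac{1}{p}}$) combined with the continuity of the trace operator $W^p_{2,\delta}(M) \to W^p_{1-\frac{1}{p}}(\partial M)$, which, because $\partial M$ is compact, reduces to the classical trace theorem on the compact core. The differential parts $\Delta_\gamma$ and $\Delta_{\gamma,\mathrm{conf}}$ are bounded from $W^p_{2,\delta}$ into $L^p_{\delta+2}$ as second-order operators with $W^p_{1,\rho}$ coefficients, again by the multiplication property iii) of Proposition \ref{sobprop}.

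Combining these facts, Proposition \ref{isomorphismthm} provides continuous bijections, and the Bounded Inverse Theorem gives constants $C_i\doteq \|\mathcal{P}_i^{-1}\|<\infty$ with the property that, for every $u\in W^p_{2,\delta}(M;\mathbb{R})$,
\[
\|u\|_{W^p_{2,\delta}}=\|\mathcal{P}_1^{-1}\mathcal{P}_1 u\|_{W^p_{2,\delta}}\leq C_1\,\|\mathcal{P}_1 u\|_{L^p_{\delta+2}\times W^p_{1-\frac{1}{p}}(\partial M)},
\]
and likewise for $\mathcal{P}_3$ acting on $X\in W^p_{2,\delta}(M;TM)$. Using the equivalence of the product norm with the sum of the component norms on a finite product of Banach spaces, the right-hand side is bounded by $C_1(\|f_u\|_{L^p_{\delta+2}}+\|g_u\|_{W^p_{1-\frac{1}{p}}})$, which is the first claimed estimate; the second is identical with $\mathcal{P}_3$ in place of $\mathcal{P}_1$.

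I do not anticipate any real obstacle, since the heavy lifting (Fredholm theory, absence of CKF, weighted elliptic theory) has already been absorbed into Proposition \ref{isomorphismthm}. The one verification point is that $\mathcal{P}_i$ is indeed continuous into the stated product target space; this is routine, as outlined above. If for some reason one wished to avoid invoking the Open Mapping Theorem, an alternative route would be to combine interior/boundary $L^p$ elliptic estimates with a weighted Bartnik-type estimate at infinity, then use injectivity (from Proposition \ref{isomorphismthm}) and a standard contradiction/compactness argument to absorb the lower-order remainder term; but the Banach-isomorphism route is cleaner and is precisely the ``standard argument'' alluded to in the text preceding the statement.
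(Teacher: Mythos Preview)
Your proposal is correct and is precisely the ``standard argument'' the paper alludes to: the paper does not give an explicit proof but simply states that the estimates follow from Proposition~\ref{isomorphismthm} plus standard arguments, and the Bounded Inverse Theorem applied to the isomorphisms $\mathcal{P}_1$ and $\mathcal{P}_3$ is exactly that argument.
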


\subsection{Fixed point system}

We shall now start by rewriting our PDE system (\ref{Conformal-EMSystem.1})-(\ref{boundcondsystems}) as an elliptic system such that we can construct solutions by iteration. In order to rewrite the electric constraint as a second order equation, we will apply a Helmholtz decomposition to the electric field in order to decompose it as the sum of an exact and co-exact 1-forms. In the case of AE-manifolds without boundary such a decomposition was addressed in \cite[Theorem 7.6]{Cantor}. Our aim is to establish a valid analogous decomposition in the case of manifolds with boundary. We will take some time to do this, since this is an interesting result on its own, and, during this process, we will clarify some points concerning the result established in \cite{Cantor}. We will appeal to two well-established functional analytic results. The following one appears in \cite[Lemma 2.2.]{Cantor}

\begin{thm}\label{Banachsplitting}
Let $T:X\mapsto Y$ and $S:Y\mapsto Z$ be bounded linear operators between Banach spaces. Then, the following are equivalent:
\begin{align*}
&1) \: \mathrm{Ker}(S\circ T)=\mathrm{Ker}(T) \text{ and } \mathrm{Im}(S\circ T)=\mathrm{Im}(S).\\
&2) \: Y=\mathrm{Im}(T)\oplus\mathrm{Ker}(S).
\end{align*}
Furthermore, in case one of the above holds, then $\mathrm{Im}(T)$ is closed in $Y$.
\end{thm}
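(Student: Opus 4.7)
The plan is to establish the equivalence by proving each implication separately, handling the closedness of $\mathrm{Im}(T)$ as a byproduct of a quotient-space argument in the harder direction.

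First I would dispatch $(2)\Rightarrow(1)$ by elementary set-theoretic manipulations. Assuming $Y=\mathrm{Im}(T)\oplus\mathrm{Ker}(S)$, the inclusion $\mathrm{Ker}(T)\subset \mathrm{Ker}(S\circ T)$ is immediate; conversely, if $x\in \mathrm{Ker}(S\circ T)$, then $T(x)\in \mathrm{Im}(T)\cap\mathrm{Ker}(S)=\{0\}$, so $x\in \mathrm{Ker}(T)$. For the image identity, $\mathrm{Im}(S\circ T)\subset \mathrm{Im}(S)$ is obvious, and given $z=S(y)\in\mathrm{Im}(S)$, decomposing $y=T(x)+k$ with $k\in\mathrm{Ker}(S)$ gives $z=S(T(x))\in \mathrm{Im}(S\circ T)$.

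For the converse $(1)\Rightarrow(2)$, I would first establish the purely algebraic decomposition. If $y=T(x)\in\mathrm{Ker}(S)$, then $x\in \mathrm{Ker}(S\circ T)=\mathrm{Ker}(T)$, forcing $y=0$, so $\mathrm{Im}(T)\cap\mathrm{Ker}(S)=\{0\}$. For any $y\in Y$, since $S(y)\in\mathrm{Im}(S)=\mathrm{Im}(S\circ T)$, there is some $x\in X$ with $S(T(x))=S(y)$, whence $y=T(x)+(y-T(x))$ with $y-T(x)\in\mathrm{Ker}(S)$, giving $Y=\mathrm{Im}(T)+\mathrm{Ker}(S)$ with unique decomposition.

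The main obstacle will be promoting this algebraic splitting to a topological direct sum, and in particular showing that $\mathrm{Im}(T)$ is closed. My plan is to exploit the induced map $\hat\phi:X/\mathrm{Ker}(T)\to Y/\mathrm{Ker}(S)$ defined by $\hat\phi([x])=[T(x)]$. Both domain and codomain are Banach spaces since $\mathrm{Ker}(T)$ and $\mathrm{Ker}(S)$ are closed by continuity, and $\hat\phi$ is well-defined, continuous, and (by the algebraic decomposition just obtained) bijective. The open mapping theorem then guarantees that $\hat\phi$ is a topological isomorphism. The projection $P:Y\to Y$ along $\mathrm{Ker}(S)$ onto $\mathrm{Im}(T)$ can thus be realized as $P=\tilde T\circ \hat\phi^{-1}\circ\pi$, where $\pi:Y\to Y/\mathrm{Ker}(S)$ is the quotient map and $\tilde T:X/\mathrm{Ker}(T)\to Y$ is the continuous factorization $\tilde T([x])=T(x)$. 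As a composition of continuous maps $P$ is continuous, so $\mathrm{Im}(T)=\mathrm{Im}(P)=\mathrm{Ker}(I-P)$ is closed in $Y$, yielding the topological direct sum decomposition and the ``Furthermore'' assertion simultaneously.
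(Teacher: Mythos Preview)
The paper does not supply its own proof of this theorem; it is quoted verbatim as Lemma~2.2 of \cite{Cantor} and used as a black box in the Helmholtz decomposition argument. Your proposal is correct: the equivalence $(1)\Leftrightarrow(2)$ is purely algebraic and your set-theoretic manipulations are accurate, while the closedness of $\mathrm{Im}(T)$ is obtained by the standard open-mapping-theorem argument applied to the induced map $X/\mathrm{Ker}(T)\to Y/\mathrm{Ker}(S)$, which is exactly the approach in Cantor's original proof.
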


Also, we will need the following classical result obtained by putting together Theorem 4.7, Theorem 4.12 and Theorem 4.14 of \cite{Rudin}. 
\begin{thm}\label{adjoint-surjectivity}
Suppose that $T:X\mapsto Y$ is a bounded linear transformation between reflexive Banach spaces and that $T$ has closed range. Then, if $T^{*}:Y^{*}\mapsto X^{*}$ denotes the adjoint of $T$, it holds that $\mathrm{Ker}(T)^{\perp}=T^{*}(Y^{*})$. Furthermore, the range of $T^{*}$ is closed and so $\mathrm{Ker}(T^{*})^{\perp}=T(X)$.
\end{thm}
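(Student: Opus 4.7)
The plan is to derive both conclusions from two standard tools: the annihilator duality provided by the Hahn-Banach theorem, and the open mapping theorem applied to the canonical factorization of $T$ through its kernel.

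First I would record the basic annihilator identities valid for any bounded $T$: directly from the defining relation $\langle T^{*}y^{*},x\rangle = \langle y^{*},Tx\rangle$, one sees that $\mathrm{Ker}(T^{*}) = T(X)^{\perp}$ and $T^{*}(Y^{*})\subseteq \mathrm{Ker}(T)^{\perp}$. The Hahn-Banach theorem then gives $\mathrm{Ker}(T^{*})^{\perp} = \overline{T(X)}$, since any functional vanishing on $T(X)$ belongs to $\mathrm{Ker}(T^{*})$, and conversely Hahn-Banach separates any $y\notin \overline{T(X)}$ by such a functional. Since $T(X)$ is closed by hypothesis, this already yields $\mathrm{Ker}(T^{*})^{\perp} = T(X)$, which is the second assertion.

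For the harder equality $\mathrm{Ker}(T)^{\perp} = T^{*}(Y^{*})$, the key step is the canonical factorization $T = \iota\circ \tilde{T}\circ \pi$, where $\pi\colon X\to X/\mathrm{Ker}(T)$ is the quotient map, $\tilde{T}\colon X/\mathrm{Ker}(T)\to T(X)$ is the induced algebraic bijection, and $\iota\colon T(X)\hookrightarrow Y$ is the inclusion. Because $T(X)$ is closed it is a Banach space in its own right, so the open mapping theorem upgrades $\tilde{T}$ to a topological isomorphism. Dualizing one obtains $T^{*} = \pi^{*}\circ \tilde{T}^{*}\circ \iota^{*}$, where $\tilde{T}^{*}$ is an isomorphism, $\pi^{*}$ is an isometric embedding whose image is exactly $\mathrm{Ker}(T)^{\perp}$ via the identification $(X/W)^{*}\cong W^{\perp}$, and $\iota^{*}\colon Y^{*}\to T(X)^{*}$ is surjective by the Hahn-Banach extension theorem. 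Composing these maps gives $T^{*}(Y^{*}) = \pi^{*}\bigl((X/\mathrm{Ker}(T))^{*}\bigr) = \mathrm{Ker}(T)^{\perp}$, which simultaneously proves equality and shows that $T^{*}(Y^{*})$ is closed.

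The main obstacle is not the algebraic bookkeeping but the careful identification of dual spaces; in particular the surjectivity of $\iota^{*}$ genuinely requires Hahn-Banach, and the closedness of $T(X)$ is genuinely needed in order to apply the open mapping theorem to $\tilde{T}$. Reflexivity of $X$ and $Y$ is not strictly used in the above argument; its role in the statement is merely to permit one to iterate the theorem and identify $T^{**}$ with $T$ via the canonical embeddings, which is what makes the result convenient to apply in the reflexive weighted-Sobolev setting used subsequently for the Helmholtz decomposition.
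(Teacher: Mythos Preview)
Your proof is correct. The paper does not supply its own argument for this statement; it simply records it as a classical result assembled from Theorems 4.7, 4.12 and 4.14 of Rudin's \emph{Functional Analysis}. Your route via the canonical factorization $T=\iota\circ\tilde T\circ\pi$ together with the open mapping theorem is the standard textbook derivation of the closed range theorem, so there is nothing substantively different to compare. Your remark that reflexivity is not actually used in the argument is also accurate; in the paper the reflexive hypothesis is a convenience for the intended application to weighted Sobolev spaces, where one wants to identify $T^{**}$ with $T$ when passing back and forth between an operator and its adjoint in the proof of the Helmholtz decomposition.
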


In this setting, referring the reader to the notations established in the introduction after Theorem \ref{HelmholtzThmInto}, we shall establish the following result.

\begin{thm}[Helmholtz decomposition]\label{Helmoholtzdecomposition}
Let $(M^n,g)$ be an $n$-dimensional $W^p_{2,\delta}$-AE manifold, with $n\geq 3$, $p>n$ and $\delta>-\frac{n}{p}$. If $-\frac{n}{p}<\rho<n-1-\frac{n}{p}$, $\rho-1\neq -\frac{n}{p}$, and one of the following two conditions hold\\
1) $-\frac{n}{p}<\rho-1$, or\\
2) $M$ has only one end,\\
then the following decomposition holds:
\begin{align}\label{Hdecomposition}
W^p_{1,\rho}(M;TM)\times W^p_{1-\frac{1}{p}}(\partial M)=\nabla^{N}_g(W^p_{2,\rho-1})\oplus\mathrm{Ker}(\mathcal{L}_{2}).
\end{align}
\end{thm}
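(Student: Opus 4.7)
I would apply Theorem \ref{Banachsplitting} with $T := \mathcal{L}_1 = \nabla^N_g$ and $S := \mathcal{L}_2 = (\mathrm{div}_g,\mathrm{Id})$, so that $S\circ T = \mathcal{L} = (\Delta_g,\nu(\cdot)|_{\partial M})$. The decomposition (\ref{Hdecomposition}) is exactly condition $(2)$ in that lemma; via its equivalence with condition $(1)$, the proof reduces to verifying
\begin{equation*}
\text{(a) }\mathrm{Ker}(\mathcal{L})=\mathrm{Ker}(\mathcal{L}_1),\qquad \text{(b) }\mathrm{Im}(\mathcal{L})=\mathrm{Im}(\mathcal{L}_2).
\end{equation*}
Closedness of $\mathrm{Im}(\mathcal{L}_1)$, needed to upgrade the set-theoretic sum to a topological direct sum, is delivered as a bonus by the splitting lemma.

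In case $(1)$, when $\rho-1>-\tfrac{n}{p}$, the combined bounds $-\tfrac{n}{p}<\rho-1<n-2-\tfrac{n}{p}$ place $\delta:=\rho-1$ precisely in the isomorphism range of Proposition \ref{isomorphismthm} (applied with $a=b=0$). Consequently $\mathcal{L}:W^p_{2,\rho-1}(M)\to L^p_{\rho+1}(M)\times W^p_{1-\frac{1}{p}}(\partial M)$ is a topological isomorphism. Equality (a) is then immediate from $\mathrm{Ker}(\mathcal{L}_1)\subseteq\mathrm{Ker}(\mathcal{L})=\{0\}$; and for (b), surjectivity of $\mathcal{L}_2\circ\mathcal{L}_1$ forces $\mathcal{L}_2$ to be surjective onto the whole codomain, which therefore coincides with $\mathrm{Im}(\mathcal{L})$.

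Case $(2)$ is harder because when $\rho\leq 1-\tfrac{n}{p}$ the weight lies outside the isomorphism range: constants (and possibly slowly-decaying harmonics) enter $\mathrm{Ker}(\mathcal{L})$. Here I would exploit the single-end hypothesis to extract a unique asymptotic constant $c_\infty$ from any $\phi\in\mathrm{Ker}(\mathcal{L})$, so that $\phi-c_\infty$ lies in a faster-decay weight class. Green's identity $\int_M \phi\,\Delta_g\phi = -\int_M|\nabla\phi|^2 + \int_{\partial M}\phi\,\nu(\phi)$, justified by Proposition \ref{Mazzeo3} to make the boundary-at-infinity flux vanish and with the Neumann condition killing the compact boundary term, then forces $\nabla_g\phi=0$, giving (a). For (b), I would invoke Theorem \ref{adjoint-surjectivity}: standard Fredholm estimates for the Neumann Laplacian on AE manifolds yield a closed range for $\mathcal{L}$, hence $\mathrm{Im}(\mathcal{L})=\mathrm{Ker}(\mathcal{L}^*)^{\perp}$. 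By a duality argument parallel to (a), $\mathrm{Ker}(\mathcal{L}^*)$ is spanned by constants, against which any pair $(\mathrm{div}_g X,h)\in\mathrm{Im}(\mathcal{L}_2)$ pairs to zero via the divergence theorem applied to $X$ (the compact-boundary flux of $X$ balances the pairing against $h$).

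The principal obstacle is case $(2)$: the operator is no longer an isomorphism, and one must do genuine asymptotic analysis to describe the kernel and cokernel of the Neumann Laplacian on single-ended AE manifolds in the weight regime outside Proposition \ref{isomorphismthm}. The single-end assumption is crucial: on a manifold with two ends and $\rho\leq 1-\tfrac{n}{p}$, one can build harmonic functions with distinct asymptotic constants on each end, enlarging $\mathrm{Ker}(\mathcal{L}_1)$ strictly more than $\mathrm{Ker}(\mathcal{L})$ and breaking (a) -- this matches the sharpness remark following the statement.
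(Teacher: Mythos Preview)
Your overall architecture---apply Theorem \ref{Banachsplitting} with $T=\mathcal{L}_1$, $S=\mathcal{L}_2$, and verify $\mathrm{Ker}(\mathcal{L})=\mathrm{Ker}(\mathcal{L}_1)$ and $\mathrm{Im}(\mathcal{L})=\mathrm{Im}(\mathcal{L}_2)$---is exactly the paper's, and your treatment of case (1) via Proposition \ref{isomorphismthm} is correct. There are, however, two genuine problems in your case (2) argument.

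\textbf{Part (a), single end.} Your Green's identity shortcut does not close. From $\phi\in W^p_{2,\rho-1}$ with $\rho-1<-\tfrac{n}{p}$ you only get $|\nabla\phi|\lesssim r^{-(\rho+n/p)}$ via Proposition \ref{Mazzeo3}, and since $\rho+\tfrac{n}{p}<n-1$ the sphere flux $\int_{S_R}\phi\,\partial_r\phi$ need not vanish (indeed $\phi$ itself need not even be bounded a priori). The real content is precisely the step you label ``extract a unique asymptotic constant $c_\infty$ so that $\phi-c_\infty$ lies in a faster-decay class'': this requires an argument. The paper does it by cutting off to the end, solving $\Delta_g v=F$ on $\mathbb{R}^n$ with compactly supported $F$ to gain arbitrary decay $v\in W^p_{2,\sigma}$ for all $\sigma<n-2-\tfrac{n}{p}$, and then invoking Bartnik's result that $\dim\mathrm{Ker}(\Delta_g\!:\!W^p_{2,\rho-1}\!\to\!L^p_{\rho+1})=1$ in this weight window, so $\bar u-v$ is constant. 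Once $\phi=c+\tilde\phi$ with $\tilde\phi$ decaying, the paper concludes via Lemma \ref{harmonicfunctions} (the min--max bound on harmonic Neumann functions forces $\phi\equiv c$ when there is one end). A Green's identity on $\tilde\phi$ would also work at that stage, but only \emph{after} the asymptotic expansion is in hand.

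\textbf{Part (b).} Your claim that $\mathrm{Ker}(\mathcal{L}^*)$ is spanned by constants is wrong: constants lie in $L^{p'}_{-(\rho+1)}$ only when $\rho>n-1-\tfrac{n}{p}$, which is excluded by hypothesis. In fact the paper shows $\mathrm{Ker}(\mathcal{L}^*)=\{0\}$ (hence $\mathcal{L}$ is surjective) for every $\rho<n-1-\tfrac{n}{p}$, with no case split and no single-end assumption: one takes $(u,v)\in\mathrm{Ker}(\mathcal{L}^*)$, uses interior elliptic regularity to put $u\in W^{p'}_{2,-(\rho+1)}$ with $\Delta_g u=0$, reads off $\nu(u)=0$ from the trace-extension property, and then applies injectivity of the Neumann Laplacian in the dual weight $-(\rho+1)>-\tfrac{n}{p'}$. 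Your divergence-theorem pairing is therefore unnecessary (and would not balance anyway, since in $\mathrm{Im}(\mathcal{L}_2)$ the boundary datum $h$ is independent of the vector field $X$). Finally, a small slip in your last paragraph: with two ends one has $\mathrm{Ker}(\mathcal{L}_1)\subsetneq\mathrm{Ker}(\mathcal{L})$, not the reverse.
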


\begin{proof}

Let us begin by showing that if $-\frac{n}{p}<\rho<n-1-\frac{n}{p}$, $\rho-1\neq \frac{n}{p}$, and either (1) or (2) hold, then $\mathrm{Ker}(\mathcal{L})=\mathrm{Ker}(\mathcal{L}_1)$. We only need to show the inclusion $\mathrm{Ker}(\mathcal{L})\subset \mathrm{Ker}(\mathcal{L}_1)$, since the other one is trivial. Also, notice that the relevant inclusion is not trivial only because $\rho>-\frac{n}{p}$ and therefore $\mathcal{L}$ can have kernel in $W^p_{2,\rho-1}$, whereas if $\rho-1>-\frac{n}{p}$ this last kernel is trivial. Therefore, on any $W^p_{2,\delta}$-AE manifold with any number of ends, if $\rho-1>-\frac{n}{p}$ then $\mathrm{Ker}(\mathcal{L})=\mathrm{Ker}(\mathcal{L}_1)$. Thus, from now on, we will restrict our analysis to the case $-\frac{n}{p}-1<\rho-1<-\frac{n}{p}$ and $M$ having just one end. Notice that in this interval for $\rho$, $\mathcal{L}$ has closed range as a consequence of arguments similar to those of \cite[Theorem 1.10]{Bartnik}.\footnote{In order to achieve this result in the case of manifolds with boundary one would need to complement the elliptic estimate of \cite[Theorem 1.10]{Bartnik} with the corresponding estimates on the compact core, which follows along the same lines to that of \cite[Proposition 4]{Maxwell1}.} Thus, consider $u\in \mathrm{Ker}(\mathcal{L})\subset W^p_{2,\rho-1}$, and a covering of $M$ by coordinate charts $\{U_i,\varphi_i \}_{i=1}^{m+1}$, where $\{U_i,\varphi_i \}_{i=1}^{m}$ denote the covering for the compact core of the manifold and $(U_{m+1},\varphi_{m+1})$ denotes the end chart. Let $\{\eta_{i} \}_{i=1}^{m+1}$ be a partition of unity subordinate to this cover and therefore write $u=\sum_{i=1}^{m}\eta_iu + \eta_{m+1}u$. From now on, we will distinguish the end chart and just denote $\eta_{m+1}=\eta$. Then, notice that
\begin{align*}
0=\eta\Delta_gu=\Delta_g(\eta u) - u\Delta_g\eta - 2g(\nabla \eta,\nabla u).
\end{align*}  
Therefore, defining $\bar{u}=\eta u$, we get that
\begin{align*}
\Delta_g\bar{u} = u\Delta_g\eta + 2g(\nabla \eta,\nabla u)\doteq F.
\end{align*}
Notice that, via identification with the end chart, the above expression is defined on $E\cong \mathbb{R}^n\backslash B_1$, and $F$ is compactly supported in $E$. Therefore, we can extend $F$ to $\mathbb{R}^n$ by declaring it to be zero inside $B_1$. This implies that $F\in W^p_1(\mathbb{R}^n)$ is compactly supported in $E$, which means that $F\in W^p_{1,\tilde{\sigma}}$ for any $\tilde{\sigma}\in\mathbb{R}$. 

In this context, given $\sigma\in\mathbb{R}$, if we consider the action of $\Delta_g:W^p_{2,\sigma}\mapsto L^p_{\sigma+2}$, then $\Delta^{*}_{g}:L^{p'}_{-\sigma-2}\mapsto W^{p'}_{-2,-\sigma}$ is injective if $-(\sigma+2)>-\frac{n}{p'}=-n(1-\frac{1}{p})$, that is $\sigma<n-2-\frac{n}{p}$. Thus, from Theorem \ref{adjoint-surjectivity}, $\Delta_g:W^p_{2,\sigma}\mapsto L^p_{\sigma+ 2}$ is surjective for any $\sigma<n-2-\frac{n}{p}$ and, since $F\in L^p_{\sigma+2}$ for any such $\sigma$, then, there exists $v\in W^p_{2,\sigma}(\mathbb{R}^n)$ for any $\sigma<n-2-\frac{n}{p}$ satisfying $\Delta_gv=F$, which, in turn, implies that $\Delta_g(\bar{u} - v)=0$.

Now, in \cite[Proposition 2.2]{Bartnik} it is established that on any $W^p_{2,\delta}$-AE manifold without boundary and with $p>n$, it holds that if $-\frac{n}{p}-1<\rho-1<-\frac{n}{p}$, then 
\begin{align}
\mathrm{dim}(\mathrm{Ker}(\Delta_g:W^p_{2,\rho-1}\mapsto L^p_{\rho+1}))=N_0(\rho-1),
\end{align}
where $N_0(\rho-1)$ is the dimension of the space of harmonic polynomials of degree up to $k^{-}(\rho-1)$ of the euclidean Laplacian, where $k^{-}(\rho-1)$ equals the maximum \textit{exceptional} integer\footnote{Following \cite{Bartnik}, the exceptional integers are defined as $\{z\in\mathbb{Z}, z\neq -1,\cdots,3-n \}$.} value $k$ satisfying $-k>\rho-1+\frac{n}{p}$, which, in our case, is $k^{-}(\rho-1)=0$. Therefore, $N_0(\rho-1)=1$. That is, $\mathrm{Ker}(\Delta_g:W^p_{2,\rho-1}\mapsto L^p_{\rho+1})$ is a 1-dimensional vector space, which is parametrized by the constants.

All of the above implies that $\bar{u}-v=c$ for some constant and some function and $v\in W^p_{2,\sigma}$ for any $\sigma<n-2-\frac{n}{p}$. Notice that this implies that $u-c\in W^p_{2,\sigma}$ for any such $\sigma$. Therefore, we can write
\begin{align*}
u=c+\tilde{u}, \;\; \tilde{u}\in W^p_{2,\sigma}.
\end{align*}
This implies that $u$ is a harmonic function with zero Neumann boundary data on $\partial M$, which is asymptotic to a constant function $c$ as we go to infinity in $E$. Therefore, \cite[Proposition A.3]{Holst2} implies that such $u$ is uniquely determined by its asymptotic values, and furthermore it satisfies $\min c\leq u\leq \max c$, that is $u=c$ and therefore $u\in \mathrm{Ker}(\mathcal{L}_1)$.

Now, let us show that if $\rho<n-1-\frac{n}{p}$, then $\mathrm{Im}(\mathcal{L})=\mathrm{Im}(\mathcal{L}_2)$. Similarly to the above claim, we only need to show that $\mathrm{Im}(\mathcal{L}_2)\subset\mathrm{Im}(\mathcal{L})$, since the other inclusion is trivial. In order to do this, let us show that, in fact, under our hypotheses, $\mathcal{L}$ is surjective, which is even stronger. Using Theorem \ref{adjoint-surjectivity}, we know that this is equivalent to proving that 
\begin{align*}
\mathcal{L}^{*}:L^{p'}_{-(\rho+1)}(M)\times W^{p'}_{-(1-\frac{1}{p})}(\partial M)\mapsto \left(W^{p}_{2,(\rho-1)}(M)\right)'
\end{align*}
is injective. Therefore, consider $(u,v)\in \mathrm{Ker}(\mathcal{L}^{*})$. Then, for any $\phi\in W^p_{2,\rho-1}$, it holds that
\begin{align}\label{adjoint0}
0&=\langle(u,v),\mathcal{L}\phi) \rangle =\int_Mu\Delta_g\phi\mu_g + \langle v,\nu(\phi)\rangle_{W^{p'}_{-(1-\frac{1}{p})}(\partial M)\times W^{p}_{1-\frac{1}{p}}(\partial M)}
\end{align}
In particular, the above holds for all $\phi\in C^{\infty}_0(\overset{\circ}{M})$, and for those test functions we get
\begin{align*}
0=\langle \Delta_gu,\phi\rangle \;\; \forall \;\; \phi\in C^{\infty}_{0}(\overset{\circ}{M}),
\end{align*}
where $\Delta_gu$ is to be understood in the sense of distributions. This implies that $\Delta_gu=0$ on $\overset{\circ}{M}$, with $u\in L^{p'}_{loc}$. Therefore, from elliptic regularity, we get $u\in W^{p'}_{2,loc}\cap L^{p'}_{-(\rho+1)}$ with $\Delta_gu=0$ and  hence $u\in W^{p'}_{2,-(\rho+1)}$.\footnote{Here we are using \cite[Theorem 3.1]{NW} and the same observation applied in \cite[Theorem 4.1]{CB-C}. That is, the hypotheses of \cite[Theorem 3.1]{NW} can be weakened so as to suppose $u\in W^p_{k,loc}$ instead of $W^p_{k}$.} Then, after integrating by parts (\ref{adjoint0}), we get that the following holds. 
\begin{align}\label{adjoint1}
0&=\langle v,\nu(\phi)\rangle_{W^{p'}_{-(1-\frac{1}{p})}(\partial M)\times W^{p}_{1-\frac{1}{p}}(\partial M)} + \int_{\partial M}\{u\nu(\phi) - \phi\nu(u)\}dS
\end{align}

Now, suppose the following condition holds:\\

\noindent $\star$) For any $\theta\in W^{p}_{2-\frac{1}{p}}(\partial M)$ and any $\chi\in W^{p}_{1-\frac{1}{p}}(\partial M)$ there is an element $\phi\in W^{p}_{2,\rho-1}(M)$ such that $\phi\vert_{\partial M}=\theta$ and $\nu(\phi)\vert_{\partial M}=\chi$.\\

Then, from (\ref{adjoint1}), we would get that $\forall \theta\in W^{p}_{2-\frac{1}{p}}(\partial M) \text{ and } \chi\in W^{p}_{1-\frac{1}{p}}(\partial M)$
\begin{align}
0&=\langle v,\chi \rangle_{W^{p'}_{-(1-\frac{1}{p})}(\partial M)\times W^{p}_{1-\frac{1}{p}}(\partial M)} + \int_{\partial M}\{u\chi - \theta\nu(u)\}dS\;\; 
\end{align}
This, in turn, by taking $\chi=0$ would imply that $\nu(u)\vert_{\partial M}=0$. Therefore, we would get $\mathcal{P}(u)\doteq (\Delta_gu,\nu(u)\vert_{\partial M})=0$ and $u\in W^{p'}_{2,-(\rho+1)}$. But, from \cite{Maxwell1}, we know that $\mathcal{P}$ is injective if $-(\rho+1)>-\frac{n}{p'}=-n+\frac{n}{p}$. That is, $\mathcal{P}$ is injective for $\rho<n-1-\frac{n}{p}$. Therefore, $u=0$. This implies that
\begin{align*}
0=\langle v,\chi \rangle_{W^{p'}_{-(1-\frac{1}{p})}(\partial M)\times W^{p}_{1-\frac{1}{p}}(\partial M)}  \;\; \forall  \chi\in W^{p}_{1-\frac{1}{p}}(\partial M).
\end{align*}
Thus, $v=0$, which implies that $\mathrm{Ker}(\mathcal{L}^{*})=\{ 0\}$. Therefore, we if we prove $(\star)$, the initial claim follows. In order to do this, first notice that from \cite{Adams}, we know that the trace map $\gamma: W^q_{2}(\mathbb{R}^n)\mapsto W^q_{2-\frac{1}{q}}(\mathbb{R}^{n-1})\times W^q_{1-\frac{1}{q}}(\mathbb{R}^{n-1})$ given by $u\mapsto (u,\nu(u))\vert_{x^n=0}$ defines a continuous isomorphism between $W^q_{2}(\mathbb{R}^n)/\mathrm{Ker}(\gamma)\mapsto W^q_{2-\frac{1}{q}}(\mathbb{R}^{n-1})\times W^q_{1-\frac{1}{q}}(\mathbb{R}^{n-1})$ for any $1<q<\infty$. Since the trace map only acts in a neighbourhood of the boundary, via a partition of unity argument and the use of adapted coordinates to the boundary, we can extend this to a continuous isomorphism between $W^{q}_{2,\rho-1}(M)/\mathrm{Ker}(\gamma)\mapsto W^q_{2-\frac{1}{q}}(\partial M)\times W^q_{1-\frac{1}{q}}(\partial M)$. This, in particular, implies that given arbitrary $(\theta,\chi)\in W^{p}_{2-\frac{1}{p}}(\partial M)\times W^{p}_{1-\frac{1}{p}}(\partial M)$, there is a (non-unique) $\phi\in W^{p}_{2,\rho-1}(M)$ such that $\gamma \phi=(\theta,\chi)$. This proves that $(\star)$ holds and therefore our second claim holds. Finally, putting all of the above together with Theorem \ref{Banachsplitting}, we get the Helmholtz decomposition (\ref{Hdecomposition}).
\end{proof}

\begin{remark}
The above decomposition is sharp with respect to condition (2). That is, if $M$ has two ends, and $-\frac{n}{p}<\rho<-\frac{n}{p}+1$, then $\mathrm{Ker}(\mathcal{L}_1)\varsubsetneq\mathrm{Ker}(\mathcal{L})$. This can be seen from the following balancing-type argument. Consider two different constants $c_1$ and $c_2$ and define $\omega\doteq \eta_1c_1 + \eta_2c_2$, where $\eta_i$ are cut off functions supported on each end $E_i$ respectively, which equal $1$ in a neighbourhood of infinity. Then, since $\Delta_g\omega\in L^p_{\sigma+2}$ for any $\sigma<n-2-\frac{n}{p}$, we know from Proposition \ref{isomorphismthm} that there is an $\omega'\in W^p_{2,\sigma}$ such that $\Delta_g\omega'=\Delta_g\omega$ and $\nu(\omega)=0$ along $\partial M$. Then, define $u\doteq \omega-\omega'$, which implies that
\begin{align}\label{HelmoltzSharp}
\begin{split}
\Delta_gu&=0,\\
\nu(u)&=0 \text{ along } \partial M.
\end{split}
\end{align} 
Now, notice that $\omega'\in W^p_{2,\sigma}\hookrightarrow W^p_{2,\rho-1}$ and, since $\rho<-\frac{n}{p}+1$, constant functions belong to $L^p_{\rho-1}$, which implies that $\omega\in W^p_{2,\rho-1}$. Therefore $u\in W^p_{2,\rho-1}$ and satisfies (\ref{HelmoltzSharp}). That is, $u\in \mathrm{Ker}(\mathcal{L})$. But, notice that $u$ cannot be constant, since it is asymptotic to two different constants at infinity in each end. Therefore, $u\not\in \mathrm{Ker}(\mathcal{L}_1)$, and hence Theorem \ref{Banachsplitting} implies that the decomposition cannot hold. 
\end{remark}

The above theorem implies that choosing $\rho=\delta+1$, with $-\frac{n}{p}<\delta<n-2-\frac{n}{p}$, we can decompose $\tilde{E}\in W^p_{1,\delta+1}$ as $\tilde{E}=df + \vartheta$, for some $f\in W^p_{2,\delta}$ and $\vartheta\in W^p_{1,\delta}$ such that $\mathrm{div}_{\gamma}\vartheta=0$. Taking this into account, we see that the system (\ref{Conformal-EMSystem.1})-(\ref{boundcondsystems}) now reads as a semi-linear second order PDE system for $(\phi,f,X)$, where the electric constraint is translated into the boundary value problem
\begin{align}\label{ElectricPotentialEq}
\begin{split}
\Delta_{\gamma}f = \tilde{q}\phi^{\frac{2n}{n-2}},\;\; \hat{\nu}(f)=E_{\hat{\nu}}.
\end{split}
\end{align}
  
From now on, when referring to the system (\ref{Conformal-EMSystem.1})-(\ref{boundcondsystems}), we will consider this second order boundary value problem for the fully coupled system were the above Helmholtz decomposition for the electric constraint is understood. Also, regarding the Lichnerowicz equation, we are looking for bounded solutions with some prescribed asymptotic behaviour. We shall account for this as follows. First, fix some positive constants $\{ A_j\}_{j=1}^N$ which are meant to represent the asymptotic values of $\phi$ on each end $\{E_j\}_{j=1}^N$ respectively. Then, let $\omega$ be the unique $W^{p}_{2,loc}$ function, $p>n$, satisfying
\begin{align}
\begin{split}
\Delta_{\gamma}\omega&=0,
\end{split}
\end{align}
and such that $\omega\rightarrow A_j \text{ as we move to infinify in } E_j$, which is guaranteed to exist due to \cite[Proposition A.3]{Holst2}. Then, we write $\phi=\omega+\varphi$, for some $\varphi\in W^{p}_{2,\delta}$ to be found by solving the system (\ref{Conformal-EMSystem.1})-(\ref{boundcondsystems}). With this in mind, define the vector bundle $E\doteq (M\times\mathbb{R})\oplus(M\times\mathbb{R})\oplus TM$, and consider $W^p_{2,\delta}$-sections of this vector bundle. Then, we get the following differential operator
\begin{align*}
\begin{split}
&\mathcal{P}:W^p_{2,\delta}(M;E)\mapsto L^p_{\delta+2}(M;E)\times W^p_{1-\frac{1}{p}}(\partial M; E),\\
&(\varphi,f,X)\mapsto (\Delta_{\gamma}\varphi,\Delta_{\gamma}f,\Delta_{\gamma,conf}X, -\nu(\varphi)\vert_{\partial M},-\nu(f)\vert_{\partial M},\pounds_{\gamma,conf}X(\nu,\cdot)\vert_{\partial M})
\end{split}
\end{align*}
Now, denote by $\textbf{F}$ the map taking $(\phi,f,X)\rightarrow \textbf{F}(\phi,f,X)$, where $\textbf{F}(\phi,f,X)$ stands for the function appearing in the right hand side of (\ref{Conformal-EMSystem.1})-(\ref{boundcondsystems}). In this setting, we rewrite the system more compactly as
\begin{align}\label{fixedpointsys.2}
\mathcal{P}(\psi)=\textbf{F}(\psi),
\end{align}
where $\psi\in W^p_{2,\delta}(M;E)$. At this point, the idea is to solve the above problem by solving a sequence of linear problems. Such a procedure requires us to first \emph{perturb} the Lichnerowicz equation so as to preserve the invertibility associated to $\mathcal{P}$ and further gain some necessary monotonicity properties that will allow to control the solutions along the iteration process appealing to maximum principles. In the following section we shall show how this is done for the case of the system (\ref{Conformal-EMSystem.1})-(\ref{boundcondsystems}).

\subsection{Shifted system}

Following the discussion presented above at the end of the previous section, let us now consider the the following \emph{shifted} system:
\begin{align}\label{shiftedsystem}
\begin{split}
\Delta_{\gamma}\varphi - a\varphi&= c_n R_{\gamma}\phi - c_n \vert\tilde{K}\vert^2_{\gamma}\phi^{-\frac{3n-2}{n-2}} - c_n\left(2\epsilon_1  - \frac{n-1}{n}\tau^2\right)\phi^{\frac{n+2}{n-2}} - 2c_n\epsilon_2\phi^{-3} \\
&- 2c_n\epsilon_3\phi^{\frac{n-6}{n-2}} - a\varphi ,\\
\Delta_{\gamma}f &= \tilde{q}\phi^{\frac{2n}{n-2}},\\
\Delta_{\gamma,conf}X  &= \frac{n-1}{n}D\tau \phi^{\frac{2n}{n-2}} + \omega_1\phi^{2\frac{n + 1}{n-2}} - \omega_2,
\end{split}
\end{align}
with boundary conditions:
\begin{align}\label{shiftedboundarycond}
\begin{split}
&\hat{\nu}(\varphi) - b\varphi= -  a_n H \phi  + (d_n\tau + d_n\theta_{-})\phi^{\frac{n}{n-2}} + a_n\left( \frac{1}{2}\vert\theta_{-}\vert - r_{n}\tau\right)v^{\frac{2n}{n-2}}\phi^{-\frac{n}{n-2}} - b\varphi,\\
&\hat{\nu}(f)=E_{\hat{\nu}},\\
&\pounds_{\gamma,conf}X(\hat{\nu},\cdot)=-\left(\left( \frac{1}{2}\vert\theta_{-}\vert - r_{n}\tau\right)v^{\frac{2n}{n-2}} + U(\hat{\nu},\hat{\nu}) \right)\hat{\nu}^{\flat},
\end{split}
\end{align}
with $a\in L^p_{\delta+2}(M)$, $b\in W^p_{1-\frac{1}{p}}(\partial M)$ satisfying $a,b\geq 0$, $\phi=\omega+\varphi$ and $\omega$ is a harmonic function with zero Neumann boundary conditions which captures the behaviour of $\phi$ at infinity, as described in the previous section. We will denote the linear operator appearing in the left-hand side by
\begin{align*}
&\mathcal{P}_{a,b}:W^p_{2,\delta}\mapsto L^p_{\delta+2}(M,E)\times W^{p}_{1-\frac{1}{p}}(\partial M,E),\\
&(\varphi,f,X)\mapsto (\Delta_{\gamma}\varphi-a\varphi,\Delta_{\gamma}f,\Delta_{\gamma,conf}X, -(\nu(\varphi)+b\varphi)\vert_{\partial M},-\nu(f)\vert_{\partial M},\pounds_{\gamma,conf}X(\nu,\cdot)\vert_{\partial M}),
\end{align*}
and $\textbf{F}_{a,b}(\psi)$ by the right-hand side of (\ref{shiftedsystem})-(\ref{shiftedboundarycond}). Furthermore, we will constraint the choices of $\theta_{-}$ and $\tau$ so as to satisfy the constraint (\ref{boundcondsystems}), and we need to show that, given some $v\in W^p_{1-\frac{1}{p}}(\partial M)$, the solutions of the above boundary value problem satisfy $(v-\phi)\vert_{\partial M}\geq 0$, so as to satisfy the marginally trapped surface condition. Then, we can rewrite the shifted system as
\begin{align*}
\mathcal{P}_{a,b}(\psi)=\textbf{F}_{a,b}(\psi).
\end{align*}
Notice that the operator $\mathcal{P}_{a,b}$ as defined above is invertible for $-\frac{n}{p}<\delta<n-2-\frac{n}{p}$, so that fixing some $\psi_0\in W^p_{2,\delta}$, the sequence $\{\psi_k \}_{k=0}^{\infty}\subset W^p_{2,\delta}$ given inductively by $\psi_{k+1}\doteq \mathcal{P}^{-1}_{a,b}(\textbf{F}_{a,b}(\psi_k))$ is well-defined. Furthermore, continuity of both $\mathcal{P}_{a,b}$ and $\textbf{F}_{a,b}$ implies that, if we can extract a $W^p_{2,\delta}$-convergent subsequence with limit $\psi$, then this limit will solve $\mathcal{P}(\bar{\psi})=\textbf{F}(\bar{\psi})$. Now, since $(\Delta_{\gamma}\varphi,\hat{\nu}(\varphi)\vert_{\partial M})=(\Delta_{\gamma}\phi,\hat{\nu}(\phi)\vert_{\partial M})$, we see that such procedure provides us with a solution to the full constraint system with marginally trapped boundary conditions.

The above paragraph points towards the procedure we should apply to our shifted system. Considering the sequence $\{\psi_k \}_{k=0}^{\infty}$ defined inductively as above, in order to extract a convergent subsequence, we can appeal to the compact embedding in $C^1_{\delta'+\frac{n}{p}}$, for $\delta'<\delta$. In order to do this, we need to show that this sequence is uniformly bounded in $W^p_{2,\delta}$. If this is guaranteed, then the compact embedding gives us a limit function $\psi$, such that $\psi_k\xrightarrow[]{} \psi$ in $C^1_{\delta'+\frac{n}{p}}$. Finally, the aim is to prove that elliptic estimates imply that the limit actually holds in $W^p_{2,\delta}$. This lays out the steps towards proving existence of solutions to the coupled constraint system. In order to deal with the first of these steps, we introduce the following concept, which is inspired in the ideas of global barriers given in \cite{Holst,Maxwell2}. 

\begin{defn}\label{strongbarriers}
Consider the Lichnerowicz equation associated to the conformal problem for the Einstein constraint equations, and let us write it as follows.
\begin{align}\label{generalizedlich}
\begin{split}
\Delta_{\gamma}\phi&=\sum_{I}a_I(Y)\phi^{I},\\
-\hat{\nu}(\phi)&=\sum_Jb_J(Y)\phi^{J} \text{ on } \partial M
\end{split}
\end{align}
where $\gamma\in W^p_{2,\delta}$; $\hat{\nu}$ is the outward pointing unit normal with respect to $\gamma$; $Y=(Y_1,\cdots,Y_r)$,with $\{Y_i\}_{i=1}^r$ being a set of tensor fields, denotes the set of fields involved as unknowns in the conformal problem besides the conformal factor and ``$I$" and ``$J$" denote the exponents which define the non-linearities of the Lichnerowicz equation. We will say that $\phi_{-}$ is a \textbf{strong global subsolution} if there are positive numbers $\{M_{Y_i}\}_{i=1}^r$ such that
\begin{align}
\begin{split}
\Delta_{\gamma}\phi_{-}&\geq\sum_{I}a_I(Y)\phi_{-}^{I}\;\;  \;\; \forall \;\; Y\in \times_iB_{M_{Y_i}}\\
-\hat{\nu}(\phi_{-})&\geq \sum_Jb_J(Y)\phi_{-}^{J} \text{ on } \partial M
\end{split}
\end{align}
where $B_{M_{Y_i}}\subset W^p_{2,\delta}$ denotes the closed ball in $W^p_{2,\delta}$ of radius $M_{Y_i}$. A \textbf{strong global supersolution} is defined in the same way with the opposite inequality. Also, if the same set of numbers $\{B_{M_{Y_i}}\}_{i=1}^r$ serve for both the sub and supersolution, and $0< \phi_{-}\leq\phi_{+}$ we will say the the barriers are \textbf{compatible}. 
\end{defn} 


Since the system associated to a charged fluid serves as model for several other physically relevant situation which introduce further coupling between the conformally formulated ECE, we will now broaden the scope of our analysis so as to present an existence criteria based on the existence of strong global barriers which applies to more general systems. Careful analysis actually reveals that this is done only at the expense of carrying along some heavier notations, without further technical difficulties. In particular, we can have in mind boundary value problems of the form: 
\begin{align}\label{generalizedsystem}
\begin{split}
\Delta_{\gamma}\phi&=\sum_{I}a^0_I(Y)\phi^{I},\\
L^{i}(Y^i)&=\sum_{J}a^{i}_{J}(Y)\phi^{J}, \;\; i=1,\cdots,r,\\
-\hat{\nu}(\phi)&=\sum_{K}b^0_K(Y)\phi^{K}, \text{ on } \partial M\\
B^i(Y^i)&=\sum_{L} b^{i}_{L}(Y)\phi^{L}, \;\; i=1,\cdots,r \text{ on } \partial M,
\end{split}
\end{align}
where $(L^{i},B^i)$ represent continuous linear elliptic second order operators with boundary conditions, acting between $W^p_{2,\delta}(M)\mapsto L^p_{\delta+2}(M)\times W^p_{1-\frac{1}{p}}(\partial M)$, which are invertible for $p>n$ and $\delta>-\frac{n}{p}$; $a^{0}_{I},a^{i}_{J}:\times_k W^p_{2,\delta}(M,E_k)\mapsto L^p_{\delta+2}(M,E_{\alpha})$ and $b^{0}_{K},b^{i}_{L}:\times_k W^p_{2,\delta}(M,E_k)\mapsto W^p_{1-\frac{1}{p}}(\partial M,E_{\alpha})$ stand for continuous maps between these spaces, which can depend on $\{Y_i\}$ and $\{DY_i\}$, with $\alpha=0,i$. We will also impose that the coefficients $a^0_{I}$ and $b^{0}_K$ satisfy the following \textit{boundedness} property. Let $M_Y=\sum_iM_{Y^i}$, where $B_{M_{Y^i}}\subset W^p_{2,\delta}(M;E_i)$ denotes the closed ball of radius $M_{Y^i}>0$, then there are functions $f_I\in L^p_{\delta+2}(M)$, $g_K\in W^p_{1-\frac{1}{p}}(\partial M)$ and constants $C_K>0$, independent of $Y$, such that
\begin{align}\label{boundednessprop}
\begin{split}
\vert a^0_I(Y)\vert&\leq f_I \text{ for any } Y\in B_{M_Y},\\
\vert b^0_K(Y)\vert&\leq g_K \text{ for any } Y\in B_{M_Y},\\
\Vert b^0_K(Y)\Vert_{W^p_{1-\frac{1}{p}}}&\leq C_K \text{ for any } Y\in B_{M_Y},
\end{split}
\end{align}
and, furthermore, we will impose that the coefficients $a^{0}_I, b^{0}_{K}, a^{i}_{J}$ and $b^{i}_L$ satisfy the following \textit{compactness} property: Given a bounded $W^p_{2,\delta}$-sequence $\{Y_k \}_{k=1}^{\infty}$ and $-\frac{n}{p}<\delta'<\delta$, if $Y_{k}\xrightarrow[]{} Y$ in $C^1_{\delta'+\frac{n}{p}}$, then it holds that 
\begin{align}\label{compactnessprop}
\begin{split}
a^{\alpha}_{I}(Y_k)\xrightarrow[k\rightarrow \infty]{} a^{\alpha}_{I}(Y) \text{ in } L^p_{\delta+2}\; \text{ and }\; b^{\alpha}_{J}(Y_k)\xrightarrow[k\rightarrow \infty]{} b^{\alpha}_{J}(Y) \text{ in } W^p_{1-\frac{1}{p}}.
\end{split}
\end{align}
Notice that this last property is motivated by the compact embedding $W^p_{2,\delta}\hookrightarrow C^1_{\delta'+\frac{n}{p}}$ for any $p>n$ and $\delta>\delta'$. In order to show that these properties are sensible, let us consider the following lemma.
\begin{lemma}\label{EMcompactness}
Let $(M,\gamma)$ be a $W^p_{2,\delta}$-AE manifold with $p>n$ and $\delta>-\frac{n}{p}$ and consider the system (\ref{Conformal-EMSystem.1})-(\ref{boundcondsystems}) with the electric field $\tilde{E}$ given as in (\ref{ElectricPotentialEq}). Suppose that the prescribed data for the problem satisfies the functional hypotheses $\tilde{u}\in W^{p}_{1,\delta}(M)$, $\mu,\tilde{q}\in W^p_{1,\delta+2}(M)$, $U,\tau,\vartheta,\tilde{F}\in W^p_{1,\delta+1}(M)$, $H,\theta_{-},E_{\hat{\nu}}\in W^p_{1-\frac{1}{p}}(\partial M)$ and $v\in W^p_{2-\frac{1}{p}}(\partial M)$. Then, such system is of the type (\ref{generalizedsystem}) satisfying all the properties required for the coefficients.
\end{lemma}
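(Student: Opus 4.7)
The plan is to match the charged-fluid system (\ref{fixedpointsys})--(\ref{boundcondsystems}) against the abstract template (\ref{generalizedsystem}) and then verify the three functional requirements: that the coefficient maps land in the correct weighted spaces, the boundedness condition (\ref{boundednessprop}), and the compactness condition (\ref{compactnessprop}).

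First I would fix the correspondence by setting $Y=(Y^1,Y^2)=(f,X)$, taking $L^1=\Delta_\gamma$ with boundary operator $B^1=\hat\nu(\cdot)$ and $L^2=\Delta_{\gamma,conf}$ with $B^2=\pounds_{\gamma,conf}X(\hat\nu,\cdot)$. Proposition \ref{isomorphismthm} then provides the required invertibility on $W^p_{2,\delta}$ for $-n/p<\delta<n-2-n/p$ and $p>n$. The coefficients $a^\alpha_I(Y)$, $b^\alpha_J(Y)$ are read off directly from the right-hand sides after substituting $\tilde E=df+\vartheta$ using Theorem \ref{Helmoholtzdecomposition}. Inspection shows that the only coefficients carrying genuine $Y$-dependence are $-c_n|\tilde K|^2_\gamma$ (quadratic in $\pounds_{\gamma,conf}X$), $-c_n|\tilde E|^2_\gamma$ (quadratic in $df$), and $\omega_2=\tilde F\cdot(df+\vartheta)$ (linear in $df$); the rest are polynomial combinations of the prescribed data.

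To see that these coefficients land in $L^p_{\delta+2}(M)$, respectively $W^p_{1-1/p}(\partial M)$, I would apply Proposition \ref{sobprop}(iii) with $s_1=s_2=1$, $s=0$, $\delta_1=\delta_2=\delta+1$; the required inequalities reduce to $p>n/2$ (automatic) and $\delta>-n/p$, both built into our hypotheses. The boundary terms are treated analogously, using that $W^p_s(\partial M)$ is a Banach algebra for $s>(n-1)/p$ together with continuity of the trace map on the compact core. For (\ref{boundednessprop}), Corollary \ref{compactembed2} supplies $|Y|(x)\lesssim M_Y\sigma(x)^{-(\delta+n/p)}$ and $|DY|(x)\lesssim M_Y\sigma(x)^{-(\delta+n/p+1)}$ on $B_{M_Y}$; substituting in the polynomial structure of each $a^0_I$ produces a uniform pointwise majorant that against the $\sigma^{p(\delta+2)}$ weight lies in $L^p_{\delta+2}$. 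The boundary coefficients $b^0_K$ are $Y$-independent in the present case, so that bound is immediate.

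The main obstacle will be the compactness condition (\ref{compactnessprop}). For the representative quadratic term I would expand
\begin{align*}
|\tilde K(X_k)|^2_\gamma-|\tilde K(X)|^2_\gamma=\langle\pounds_{\gamma,conf}(X_k-X),\,\pounds_{\gamma,conf}(X_k+X)+2U\rangle_\gamma,
\end{align*}
and combine the smallness $|\pounds_{\gamma,conf}(X_k-X)|(x)\lesssim\epsilon_k\sigma(x)^{-(\delta'+n/p+1)}$ from the $C^1_{\delta'+n/p}$ convergence with the uniform pointwise bound $|\pounds_{\gamma,conf}(X_k+X)+2U|(x)\lesssim\sigma(x)^{-(\delta+n/p+1)}$, which comes from the $W^p_{2,\delta}$ boundedness of $\{X_k\}$ together with the embedding $W^p_{1,\delta+1}\hookrightarrow C^0_{\delta+n/p+1}$ applied to $U$. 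Integrating the pointwise product against $\sigma^{p(\delta+2)}$ yields
\begin{align*}
\bigl\||\tilde K(X_k)|^2_\gamma-|\tilde K(X)|^2_\gamma\bigr\|^p_{L^p_{\delta+2}}\lesssim\epsilon_k^p\int_M\sigma^{-p(\delta'+2n/p)}\,d\mu_e,
\end{align*}
and the right-hand integral is finite precisely when $\delta'>-n/p$, which is the standing hypothesis on $\delta'$. The same scheme, mutatis mutandis, controls $|\tilde E(f_k)|^2_\gamma-|\tilde E(f)|^2_\gamma$, the bilinear term $\tilde F\cdot(\tilde E_k-\tilde E)$, and their restrictions to $\partial M$ through the continuous trace. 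Together these estimates deliver both (\ref{boundednessprop}) and (\ref{compactnessprop}), completing the identification of (\ref{fixedpointsys})--(\ref{boundcondsystems}) as an instance of the abstract system (\ref{generalizedsystem}).
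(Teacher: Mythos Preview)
Your proposal is correct and follows essentially the same route as the paper: identify the $Y$-dependent coefficients as $|\tilde K|^2_\gamma$, $\epsilon_2$, and $\omega_2$; use the multiplication property (Proposition \ref{sobprop}(iii)) for the mapping properties; use the pointwise decay bound $|DY|\lesssim r^{-(\delta+1+n/p)}\|Y\|_{W^p_{2,\delta}}$ (the paper invokes Proposition \ref{Mazzeo3}, you cite Corollary \ref{compactembed2}) for boundedness; and expand the quadratic difference exactly as you do for compactness, with the same integrability check $\delta'>-n/p$. The only cosmetic slip is the closing remark about treating ``restrictions to $\partial M$ through the continuous trace'': as you yourself observed a few lines earlier, \emph{all} boundary coefficients $b^\alpha_J$ here are $Y$-independent, so there is nothing further to verify on $\partial M$.
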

\begin{proof}
In this case $E=(M\times\mathbb{R})\oplus(M\times\mathbb{R})\oplus TM$ and $Y=(f,X)$ and we just have to check the mapping properties of the coefficients. Notice that if $\mu\in W^p_{1,\delta+2}$, then so are $\epsilon_1$ and $\omega_1$. Also, under our hypotheses on $\delta$ and $p$ we get that $2\epsilon_3=\vert\tilde{F}\vert^2_{\gamma}\in W^p_{1,\delta+2}$ from the multiplication property and the same holds for $\tau^2\in W^p_{1,\delta+2}$. Clearly, we also have that $R_{\gamma},d\tau\in L^p_{\delta+2}$, and all these coefficients do not depend on $Y$. Now, let us consider the coefficients $\tilde{K},\epsilon_2$ and $\omega_2$. Notice that
\begin{align*}
\vert\tilde{K}\vert^2_{\gamma}&=\vert\pounds_{\gamma,conf}X\vert^2_{\gamma} + 2\langle U,\pounds_{\gamma,conf}X \rangle_{\gamma} + \vert U\vert^2_{\gamma},\\
2\epsilon_2&=\vert df\vert^2_{\gamma} + 2\langle \vartheta,df \rangle_{\gamma} + \vert\vartheta\vert^2_{\gamma},\\
\omega_2(Y)&=\tilde{F}(Df,\cdot) + \tilde{F}(\vartheta,\cdot)
\end{align*}
The multiplication property implies that all these coefficients are in $W^p_{1,\delta+2}$. Let us begin by checking the boundedness property (\ref{boundednessprop}). Notice that this property is clear for the coefficients $\epsilon_1,\tau^2,\epsilon_3$ and $R_{\gamma}$ which all belong to $L^p_{\delta+2}$ and are independent of $Y$. Also, from Lemma \ref{Mazzeo3}, a function $u\in W^p_{1,\delta+1}$ satisfies $\vert u\vert\lesssim r^{-(\delta+1+\frac{n}{p})}\Vert u\Vert_{W^p_{1,\delta+1}}$. Then, in the case of $\vert\tilde{K}\vert^2_{\gamma}$ and $\epsilon_2$, notice that we have the following.
\begin{align*}
\vert\tilde{K}(Y)\vert^2_{\gamma}&=\vert\pounds_{\gamma,conf}X\vert^2_{\gamma} + 2\langle U,\pounds_{\gamma,conf}X \rangle_{\gamma} + \vert U\vert^2_{\gamma}\lesssim \vert DX\vert^2_{\gamma} + \vert U\vert_{\gamma}\vert DX\vert_{\gamma} + \vert U\vert^2_{\gamma},\\
&\lesssim r^{-2(\delta+1+\frac{n}{p})}(\Vert DX\Vert^2_{W^p_{1,\delta+1}} + \Vert U\Vert_{W^p_{1,\delta+1}}\Vert DX\Vert_{W^p_{1,\delta+1}} + \Vert U\Vert^2_{W^p_{1,\delta+1}}),\\
&\lesssim (M^2_Y + \Vert U\Vert_{W^p_{1,\delta+1}}M_Y + \Vert U\Vert^2_{W^p_{1,\delta+1}})r^{-(\delta+2+\frac{n}{p})}r^{-(\delta+\frac{n}{p})} \;\; \forall \;\; Y\in B_{M_Y}.
\end{align*}
Since under our hypotheses $r^{-(\delta+2+\frac{n}{p})}r^{-(\delta+\frac{n}{p})}\in L^p_{\delta+2}$, we see that $\vert\tilde{K}\vert_{\gamma}^2$ satisfies the boundedness property. A similar result holds for $\epsilon_2$, implying that they both satisfy (\ref{boundednessprop}). Now, let us examine this property for the boundary coefficients. First notice that none of these coefficients depend on $f$ or $X$. Also, from our choices of functional spaces, we know that $U$ and $\tau$ have $W^p_{1-\frac{1}{p}}$-traces on $\partial M$ and $\theta_{-},H\in W^p_{1-\frac{1}{p}}$. All this together implies that $b^{\alpha}_{J}(Y)\in W^p_{1-\frac{1}{p}}$ for all $J$ and $\alpha=0,i$, and these coefficients are actually independent of $Y$. Therefore, the boundedness property holds.

Now, let us check the compactness property. Thus, considering a bounded sequence $\{Y_k\}_{k=1}^{\infty}\subset W^p_{2,\delta}$, such that $Y_k\xrightarrow[]{} Y$ in $C^1_{\delta'+\frac{n}{p}}$, we get that
\begin{align*}
\!\!\!\big\vert \vert\tilde{K}(Y)\vert^2_{\gamma}-\vert\tilde{K}(Y_k)\vert^2_{\gamma}\big\vert
&\leq \vert\langle \pounds_{\gamma,conf}X,\pounds_{\gamma,conf}(X - X_k) \rangle_{\gamma}\vert + \vert\langle \pounds_{\gamma,conf}X_k,\pounds_{\gamma,conf}(X - X_k) \rangle_{\gamma}\vert\\
& + 2\vert\langle U,\pounds_{\gamma,conf}(X - X_k) \rangle_{\gamma}\vert,\\
&\lesssim \vert DX\vert_e\vert D(X-X_k)\vert_e + \vert DX_k\vert_e\vert D(X-X_k)\vert_e + \vert U\vert_e\vert D(X-X_k)\vert_e,\\
&\lesssim r^{-(\delta+1+\frac{n}{p})}r^{-(\delta'+1+\frac{n}{p})}\left( \Vert DX\Vert_{W^p_{1,\delta+1}} + \Vert DX_k\Vert_{W^p_{1,\delta+1}}+\Vert U\Vert_{W^p_{1,\delta+1}} \right)\cdot\\
&\;\;\;\;\;\Vert D(X-X_k)\Vert_{C^0_{\delta'+1+\frac{n}{p}}},
\end{align*}
thus, since $\delta'>-{\frac{n}{p}}$, then  $r^{-(\delta'+\frac{n}{p})}r^{-\frac{n}{p}}\in L^p$, which, since $\{X_k\}\subset W^p_{2,\delta}$ is supposed to be bounded, implies that 
\begin{align*}
\!\!\big\Vert \vert\tilde{K}(Y)\vert^2_{\gamma}-\vert\tilde{K}(Y_k)\vert^2_{\gamma}\big\Vert_{L^p_{\delta+2}}&\lesssim \left( \Vert DX\Vert_{W^p_{1,\delta+1}} + \Vert DX_k\Vert_{W^p_{1,\delta+1}}+\Vert U\Vert_{W^p_{1,\delta+1}} \right)\Vert D(X-X_k)\Vert_{C^0_{\delta'+1+\frac{n}{p}}},
\end{align*}
where the right-hand side goes to zero by hypothesis. The same line of reasoning proves the analogous statement for $\epsilon_2$, and the coefficients which are independent of $Y$ trivially satisfy this property. We also need to analyse the coefficient $\omega_2(Y)=\tilde{F}(Df,\cdot) + \tilde{F}(\vartheta,\cdot)$. Clearly under our hypotheses $\omega_2\in L^p_{\delta+2}$ and also
\begin{align*}
\vert\omega_2(Y)-\omega_2(Y_k)\vert\lesssim r^{-(\delta+2+\frac{n}{p})}r^{-(\delta'+\frac{n}{p})}\Vert\tilde{F}\Vert_{W^p_{1,\delta+1}}\Vert Df-Df_k\Vert_{C^0_{\delta'+1+\frac{n}{p}}}.
\end{align*}
Thus, using again the fact that $r^{-\frac{n}{p}}r^{-(\delta'+\frac{n}{p})}\in L^p$, we get that
\begin{align*}
\Vert\omega_2(Y)-\omega_2(Y_k)\Vert_{L^p_{\delta+2}}\lesssim \Vert\tilde{F}\Vert_{W^p_{1,\delta+1}}\Vert Df-Df_k\Vert_{C^0_{\delta'+1+\frac{n}{p}}}\rightarrow 0.
\end{align*}
Finally, concerning the $b^{\alpha}_J$-coefficients associated with the boundary conditions, since we have already noticed that none of these coefficients depend on $f$ or $X$, we conclude that compactness property holds, which establishes the lemma.
\end{proof}

In order to get existence results we will need both isomorphism theorems for the linear parts associated to systems of the form of (\ref{generalizedsystem}) and typical elliptic estimates, which are key ingredient to prove Fredholm properties. More explicitly, we are considering operators $(L^i,B^i):W^p_{2,\delta}(M)\mapsto L^p_{\delta+2}(M)\times W^p_{1-\frac{1}{p}}(\partial M)$, with $p>n$ and $-\frac{n}{p}<\delta<n-2-\frac{n}{p}$, satisfying estimates of the form
\begin{align}\label{Fredholmestimate}
\Vert Y^i\Vert_{W^p_{2,\delta}}\leq C^i\Big(\Vert L^iY^i\Vert_{L^p_{\delta+2}(M)} + \Vert B^iY^i\Vert_{W^p_{1-\frac{1}{p}}(\partial M)} \Big).
\end{align}
For the Laplacian and conformal Killing Laplacian these properties were obtained in \cite{Maxwell1}. For compact manifolds with boundary these properties have been widely studied, and we can find a nice summary in \cite[Theorem 1.6.2]{Schwartz}. Putting this together with results for complete AE manifolds, such as those of \cite{Cantor,CB-C,Bartnik}, we can get a very general class of elliptic second order operators on AE manifolds satisfying (\ref{Fredholmestimate}). All this motivates the following definition. 
\begin{defn}\label{einsteintypesystems}
We will say that a system of the form of (\ref{generalizedsystem}) is a \textbf{conformal Einstein-type system} if its coefficients satisfy the mapping properties described below (\ref{generalizedsystem}), together with the boundedness and compactness properties (\ref{boundednessprop})-(\ref{compactnessprop}) and the Fredholm estimate (\ref{Fredholmestimate}). 
\end{defn}
In this context we will still denote the linear part associated to the generalized system (\ref{generalizedsystem}) by $\mathcal{P}$ and the non-linearities appearing in the right-hand side by $\textbf{F}$, and, furthermore, we will associate the shifted operators $\mathcal{P}_{a,b}$
\begin{align*}
\mathcal{P}_{a,b}:W^{p}_{2,\delta}(M;E)&\mapsto L^p_{\delta+2}(M;E)\times W^p_{1-\frac{1}{p}}(\partial M;E),\\
(\phi,Y)&\mapsto (\Delta_{\gamma}\varphi-a\varphi,L^i(Y^i),-(\hat{\nu}(\varphi)+b\varphi)\vert_{\partial M},B^iY^i\vert_{\partial M}).
\end{align*}
for any choice of $a\in L^p_{\delta+2}(M)$, $b\in W^p_{1-\frac{1}{p}}(\partial M)$ satisfying $a,b\geq 0$. 
\begin{thm}\label{conformalexistence}
Let $(M^n,\gamma)$ be a $W^p_{2,\delta}$-AE manifold, with $p>n$ and $-\frac{n}{p}<\delta<n-2-\frac{n}{p}$, and consider a conformal Einstein-type system of the form of (\ref{generalizedsystem}) on $M$. Assume that the Lichnerowicz equation admits a compatible pair of strong global sub and supersolutions given by $\phi_{-}$ and $\phi_{+}$, which are, respectively, asymptotic to harmonic functions $\omega_{\pm}$ tending to positive constants $\{A^{\pm}_{j} \}_{j=1}^N$ on each end $\{E_j \}_{j=1}^{N}$. Fix a harmonic function $\omega$ asymptotic to constants $\{A_j \}_{j=1}^{N}$ on each end satisfying $0<A^{-}_j\leq A_j\leq A^{+}_j$, and suppose that the map 
\begin{align*}
\mathcal{F}_{a,b}:W^p_{2,\delta}(M;E)&\mapsto W^p_{2,\delta}(M;E),\\
\psi=(\varphi,Y)&\mapsto \mathcal{F}_{a,b}(\psi)\doteq \mathcal{P}^{-1}_{a,b}\circ\textbf{F}_{a,b}(\psi).
\end{align*}
is actually invariant on the set $\times_{i}B_{M_{Y^i}}\subset \oplus_{i}W^p_{2,\delta}(M,E_i)$ given in the definition of the barriers $\phi_{-},\phi_{+}$ for any $\phi_{-}\leq\varphi + \omega \leq \phi_{+}$. Then, the system admits a solution $(\phi=\omega+\varphi,Y)$, with $(\varphi,Y)\in W^p_{2,\delta}(M;E)$, and, furthermore, $\phi>0$. 
\end{thm}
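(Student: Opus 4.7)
My plan is to produce the solution by a Schauder fixed-point argument for the map $\mathcal{F}_{a,b}$, with the strong global barriers trapping the scalar component in $C^0$ and the invariance hypothesis of the statement trapping the $Y$-components inside $\times_i B_{M_{Y^i}}$. The first preparatory step is to fix non-negative shifts $a\in L^p_{\delta+2}(M)$, $b\in W^p_{1-\frac{1}{p}}(\partial M)$ chosen large enough that: (i) $\mathcal{P}_{a,b}$ is an isomorphism onto $L^p_{\delta+2}(M;E)\times W^p_{1-\frac{1}{p}}(\partial M;E)$, which on the scalar slot follows from Proposition~\ref{isomorphismthm} and on the vector slots is built into Definition~\ref{einsteintypesystems}; and (ii) for every $Y\in \times_i B_{M_{Y^i}}$, the maps $\phi\mapsto \sum_I a^0_I(Y)\phi^I-a\phi$ and $\phi\mapsto \sum_K b^0_K(Y)\phi^K-b\phi$ are non-increasing on $[\min\phi_-,\max\phi_+]\subset(0,\infty)$. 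A uniform choice is available because the boundedness property~(\ref{boundednessprop}) furnishes $Y$-independent envelopes $f_I,g_K,C_K$ for the coefficients.

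The second step introduces the convex set
\[
\mathcal{U}\doteq\{\psi=(\varphi,Y)\in W^p_{2,\delta}(M;E) : \phi_-\leq \omega+\varphi\leq \phi_+,\ Y\in \times_i B_{M_{Y^i}}\}
\]
and shows $\mathcal{F}_{a,b}(\mathcal{U})\subset\mathcal{U}$. The $Y$-factor is exactly the invariance hypothesis. For the scalar factor, set $\psi_1=\mathcal{F}_{a,b}(\psi)$ and $\phi_1=\omega+\varphi_1$; combining the shifted equation $(\Delta_\gamma-a)\phi_1=\sum_I a^0_I(Y)\phi^I-a\phi$ (which uses $\Delta_\gamma\omega=0$ and $\hat\nu(\omega)=0$ on $\partial M$) with the strong subsolution inequalities of Definition~\ref{strongbarriers} and the monotonicity of step~(ii) yields
\[
(\Delta_\gamma-a)(\phi_--\phi_1)\geq 0\ \text{in}\ M,\qquad -(\hat\nu+b)(\phi_--\phi_1)\geq 0\ \text{on}\ \partial M.
\]
Because $A^-_j\leq A_j$ and both $\omega_-,\omega$ are harmonic with zero Neumann data, Lemma~\ref{harmonicfunctions} in Appendix~I forces $\omega_-\leq \omega$, so the difference $\phi_--\phi_1=(\phi_--\omega_-)-\varphi_1+(\omega_--\omega)$ has non-positive asymptotic limits on every end. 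The weak maximum principle attached to the isomorphism $\mathcal{P}_{a,b}$ then gives $\phi_-\leq \phi_1$, and the symmetric supersolution argument gives $\phi_1\leq \phi_+$.

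The third step is the Schauder application. The barrier $C^0$-bound on $\phi$, the inclusion $Y\in \times_i B_{M_{Y^i}}$, and property~(\ref{boundednessprop}) together imply that $\textbf{F}_{a,b}(\psi)$ is uniformly bounded in $L^p_{\delta+2}\times W^p_{1-\frac{1}{p}}$ for $\psi\in\mathcal{U}$, so the Fredholm estimate~(\ref{Fredholmestimate}) yields an $R>0$ with $\|\mathcal{F}_{a,b}(\mathcal{U})\|_{W^p_{2,\delta}}\leq R$. The set $\mathcal{U}_R\doteq \mathcal{U}\cap \overline{B_R^{W^p_{2,\delta}}}$ is then non-empty, convex, closed, bounded, and invariant under $\mathcal{F}_{a,b}$. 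Corollary~\ref{compactembed2} supplies a $C^1_{\delta'+n/p}$-convergent subsequence from any $W^p_{2,\delta}$-bounded sequence, and then the compactness property~(\ref{compactnessprop}) of the coefficients together with continuity of $\mathcal{P}_{a,b}^{-1}$ makes $\mathcal{F}_{a,b}|_{\mathcal{U}_R}$ a compact operator. Schauder's fixed-point theorem produces $\psi_\infty=(\varphi_\infty,Y_\infty)\in\mathcal{U}_R$ with $\mathcal{F}_{a,b}(\psi_\infty)=\psi_\infty$, equivalently $\mathcal{P}(\psi_\infty)=\textbf{F}(\psi_\infty)$, and the required positivity is immediate from $\phi=\omega+\varphi_\infty\geq \phi_->0$.

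The principal obstacle is the second step: selecting a single pair $(a,b)$ that simultaneously renders $\mathcal{P}_{a,b}$ an isomorphism, enforces monotonicity of the shifted nonlinearity uniformly across $[\phi_-,\phi_+]\times(\times_i B_{M_{Y^i}})$, and supports a clean weak-maximum-principle argument on the AE manifold with compact boundary, in particular handling the asymptotic mismatch $\omega_--\omega$ (which is harmonic but need not lie in $W^p_{2,\delta}$) via a Phragmén--Lindelöf-type comparison at infinity. Once this is secured, the Schauder step is essentially forced by the compactness property already built into Definition~\ref{einsteintypesystems}.
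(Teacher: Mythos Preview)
Your Schauder fixed-point strategy is valid and genuinely different from the paper's proof, which instead runs an explicit iteration $\psi_{k+1}=\mathcal{P}_{a,b}^{-1}\textbf{F}_{a,b}(\psi_k)$ starting at $\phi_0=\phi_-$, traps $\phi_k\in[\phi_-,\phi_+]$ by induction via the same shifted-monotonicity/maximum-principle argument you describe in step two, and then uses the compact embedding $W^p_{2,\delta}\hookrightarrow C^1_{\delta'+n/p}$ together with the compactness property (\ref{compactnessprop}) to show the sequence is Cauchy in $W^p_{2,\delta}$. The two routes share all the essential ingredients (choice of shifts, monotonicity, maximum principle, compact embedding, coefficient compactness); your packaging via Schauder is cleaner conceptually, while the paper's iteration is more constructive and avoids checking convexity/closedness of the domain.

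There is, however, one point in your third step that is not quite right as stated. You claim that the barrier $C^0$-bound on $\phi$ and (\ref{boundednessprop}) give a uniform bound on $\textbf{F}_{a,b}(\psi)$ in $L^p_{\delta+2}\times W^p_{1-1/p}$ for \emph{all} $\psi\in\mathcal{U}$. This is fine for the interior terms, but the boundary contributions $b^0_K(Y)\phi^K$ and $b\varphi$ are measured in $W^p_{1-1/p}(\partial M)$, and this norm is not controlled by $\|\phi\|_{C^0}$ alone: it sees $\|\varphi|_{\partial M}\|_{W^p_{1-1/p}}\lesssim \|\varphi\|_{W^p_{2,\delta}}$, which is not a priori bounded on $\mathcal{U}$. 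The paper handles exactly this issue by extending $b^0_K(Y)$ to $W^p_1(U)$ on a collar $U$, using the algebra property and the interpolation $\|\varphi\|_{W^p_1(U)}\leq \epsilon\|\varphi\|_{W^p_2(U)}+C_\epsilon\|\varphi\|_{L^p(U)}$ together with the $C^0$ bound $\|\varphi\|_{C^0}\leq C_\pm$, to obtain the recursive estimate $\|\varphi_1\|_{W^p_{2,\delta}}\leq \tfrac12\|\varphi\|_{W^p_{2,\delta}}+C$. In your framework this does not give a uniform bound on $\mathcal{F}_{a,b}(\mathcal{U})$, but it does give invariance of $\mathcal{U}_R$ for any $R\geq 2C$, which is all Schauder needs. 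With that correction your argument goes through.
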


\begin{proof}
First of all, consider the shifted system associated with (\ref{generalizedsystem}), where we will pick the shift functions $a\in L^p_{\delta+2}$ and $b\in W^p_{1-\frac{1}{p}}$ below. We have our strong global sub and supersolutions fixed together with the balls $B_{M_{Y^i}}\subset W^p_{2,\delta}(M,E_i)$. From our hypotheses on the barriers $\phi_{\pm}$, we know that these are bounded functions, which implies that there are positive numbers, say $l\leq m$, such that $l\leq\phi_{-}\leq\phi_{+}\leq m$. Thus, consider  $Y\in \times_{i}B_{M_{Y^{i}}}$, and define the functions
\begin{align*}
h^{a}_{Y}(\phi)&\doteq h_{Y}(\phi) - a(\phi-\omega)=\sum_{I}a^0_I(Y)\phi^{I} - a(\phi-\omega),\\
g^{b}_{Y}(\phi)&\doteq g_{Y}(\phi) - b(\phi-\omega)=\sum_{K}b^0_K(Y)\phi^{K} - b(\phi-\omega) \text{ on } \partial M.
\end{align*}
We want to pick the functions $a$ and $b$ such that both $h^a_{Y}(y),g^b_Y(y)$ are decreasing functions on $y\in [l,m]$, for all $Y\in \times_iB_{M_{Y^i}}$. Thus, notice that
\begin{align*}
\frac{\partial}{\partial y}\sum_{I}a^0_I(Y)y^{I}&\leq \sum_{I}\vert I\vert \vert a^0_I(Y)\vert y^{I-1}\\
&\leq \sum_{I}\vert I\vert f_Iy^{I-1}\leq  \sum_{I}\vert I\vert\sup_{l\leq y\leq m}y^{I-1} \;f_I \in L^p_{\delta+2}, \; \forall \; Y\in B_{M_Y}
\end{align*}
where we used the boundedness property (\ref{boundednessprop}). With similar arguments, we get that
\begin{align*}
\frac{\partial}{\partial y}\sum_{K}b^0_K(Y)y^{K}&\leq \sup_{l\leq y\leq m}\sum_{K}\vert K\vert \vert b^0_K(Y)\vert y^{K-1} \leq \sup_{l\leq y\leq m}\sum_{K}\vert K\vert g_{K}y^{K-1},\\
&\leq \sum_{K}\vert K\vert\sup_{l\leq y\leq m}y^{K-1} \; g_{K}\in W^p_{1-\frac{1}{p}}, \; \forall \; Y\in B_{M_{Y}},
\end{align*}
where we have again used the boundedness property (\ref{boundednessprop}). Thus, if we pick $a\in L^p_{\delta+2}(M)$ and $b\in W^p_{1-\frac{1}{p}}(\partial M)$ satisfying
\begin{align}\label{shiftchoice}
\begin{split}
a> \sum_{I}\vert I\vert\sup_{l\leq y\leq m}y^{I-1} \;f_I \;\; ; \;\; b>\sum_{K}\vert K\vert\sup_{l\leq y\leq m}y^{K-1} \; g_{K},
\end{split}
\end{align}
we get that
\begin{align*}
\frac{\partial}{\partial y}h^a_{Y}(y)\leq 0, \text{ and } \frac{\partial}{\partial y}g^b_{Y}(y)\leq 0, \text{ for all } (y,Y)\in [l,m]\times_{i}B_{M_{Y^i}},
\end{align*}
implying that $h^a_{Y}(\phi)$ and $h^b_{Y}(\phi)$ are decreasing functions on the interval $[\phi_{-},\phi_{+}]_{C^0}$.

Now, consider $\psi_0=(\phi_{0},Y_{0})$ with $\phi_{0}=\phi_{-}$ and $Y_0\in \times_iB_{M_{Y^i}}$, and consider the sequence $\{\psi_k=(\phi_{k}=\omega + \varphi_k,Y_k)\}_{k=1}^{\infty}$ defined by an iteration procedure of the form:
\begin{align*}
\mathcal{P}_{a,b}\psi_k=\textbf{F}_{a,b}(\psi_{k-1}).
\end{align*}
From the linear properties associated with the operator $\mathcal{P}_{a,b}$ we know that the sequence is well-defined, since for each step $\textbf{F}_{a,b}(\psi_{k-1})\in L^p_{\delta+2}$. Furthermore, from our hypotheses, we know that $Y_k\in \times_iB_{M_{Y^i}}$ for all $k$ as long as we guarantee that $\phi_{k}=\omega + \varphi_k$ stays in the interval $[\phi_{-},\phi_{+}]_{C^0}$. We can prove this last statement inductively as follows. First consider $\phi_1$, which satisfies
\begin{align*}
\begin{pmatrix} 
\Delta_{\gamma}(\phi_1-\phi_{-}) - a ((\phi_1-\omega)-(\phi_{-}-\omega)) \\
-\big(\hat{\nu}(\phi_1-\phi_{-}) + b(\phi_1-\omega-(\phi_{-}-\omega))\big)
\end{pmatrix}
&=-\begin{pmatrix}
\Delta_{\gamma}\phi_{-} - h_{Y_0}(\phi_{-}) \\
-\hat{\nu}(\phi_{-}) - g_{Y_0}(\phi_{-})
\end{pmatrix}\leq 0.
\end{align*}
where the final inequality is a consequence of $\phi_{-}$ being a strong global subsolution. Notice that $\phi_{-}$ is in $W^p_{2,loc}$ and is asymptotic to $\omega_{-}$, which itself tends to positive constants $\{A_j^{-}\}_{j=1}^N$ in each end $\{E_j\}_{j=1}^N$ respectively, and, by construction, $\phi_1$ is asymptotic to $\omega$ which tends to positive constants $\{A_j\}_{j=1}^N$ in each end, satisfying $A_j>A_j^{-}$. Thus, we get that $\phi_{1}-\phi_{-}\rightarrow A_j-A^{-}_j>0$ in each end $E_j$.  Then, we can apply the weak maximum principle given in \cite[Lemma A.1]{Holst2} and conclude that $\phi_{1}\geq\phi_{-}$. Similarly, we have that
\begin{align*}
\!\!\!\!\!\!\!\!\begin{pmatrix}
\Delta_{\gamma}(\phi_{+}-\phi_{1}) - a ((\phi_{+}-\omega)-(\phi_1-\omega))\\
-\hat{\nu}(\phi_{+}-\phi_{1}) - b((\phi_{+}-\omega)-(\phi_1-\omega))
\end{pmatrix}
&=\begin{pmatrix}
\Delta_{\gamma}\phi_{+}- h_{Y_0}(\phi_{+}) +  h^a_{Y_0}(\phi_{+}) - h^a_{Y_0}(\phi_{-})\\
-\hat{\nu}(\phi_{+}) - g_{Y_0}(\phi_{+})+ g^b_{Y_0}(\phi_{+}) - g^b_{Y_0}(\phi_{-})
\end{pmatrix}\\
&\leq 0
\end{align*}
where, again, the last inequality follows form $\phi_{+}$ being a strong global supersolution, and also $h^a_{Y}(\cdot)$ and $g^a_{Y}(\cdot)$ being a decreasing functions of $\phi\in [\phi_{-},\phi_{+}]_{C^0}$ for any $Y\in \times_iB_{M_{Y^i}}$. Then, noticing again that $\phi_{+}$ is asymptotic to $\omega_{+}$, which tends to constants $\{A^{+}_j\}_{j=1}^N$ in each end, satisfying $A^{+}_j>A_j$, we get that $\phi_{+}-\phi_1\xrightarrow[]{E_j} A^{+}_j-A_j>0$, which implies through the maximum principle that $\phi_{+}\geq\phi_{1}$.

Now, assuming $\phi_{-}\leq\phi_{n}=\omega + \varphi_n\leq\phi_{+}$ and $Y_n\in \times_{i}B_{M_{Y^{i}}}$, then 
\begin{align*}
\!\!\!\!\begin{pmatrix}
\Delta_{\gamma}(\phi_{n+1}-\phi_{-}) - a ((\phi_{n+1}-\omega)-(\phi_{-}-\omega))\\
-\hat{\nu}(\phi_{n+1}-\phi_{-})-b((\phi_{n+1}-\omega)-(\phi_{-}-\omega))
\end{pmatrix}
&=-
\begin{pmatrix}
\Delta_{\gamma}\phi_{-} - h_{Y_n}(\phi_{-}) + h^a_{Y_n}(\phi_{-}) - h^a_{Y_n}(\phi_{n}) \\
- \hat{\nu}(\phi_{-}) - g_{Y_n}(\phi_{-}) +  g^b_{Y_n}(\phi_{-}) - g^b_{Y_n}(\phi_n)
\end{pmatrix}\\
&\leq 0,
\end{align*} 
where the last inequality holds since $\phi_{-}$ is a strong global subsolution. Furthermore, $h^a_{Y}$ and $g^b_{Y}$ are decreasing functions of $\phi\in [\phi_{-},\phi_{+}]_{C^0}$ for any $Y\in \times_{i}B_{M_{Y^i}}$, and from the inductive hypothesis $\phi_{-}\leq \phi_n\leq\phi_{+}$ and $Y_n\in \times_{i}B_{M_{Y^i}}$. Thus, since $\phi_{n+1}-\phi_{-}\xrightarrow[]{E_j} A_j-A^{-}_j>0$, we get $\phi_{n+1}\geq\phi_{-}$. Similarly, 
\begin{align*}
\!\!\!\!\begin{pmatrix}
\Delta_{\gamma}(\phi_{+}-\phi_{n+1}) - a ((\phi_{+}-\omega)-(\phi_{n+1}-\omega))\\
-\hat{\nu}(\phi_{+}-\phi_{n+1}) - b ((\phi_{+}-\omega)-(\phi_{n+1}-\omega))
\end{pmatrix}
&= 
\begin{pmatrix}
\Delta_{\gamma}\phi_{+} - h_{Y_n}(\phi_{+}) + h^a_{Y_n}(\phi_{+}) - h^a_{Y_n}(\phi_{n}) \\
-\hat{\nu}(\phi_{+})- g_{Y_{n}}(\phi_{+}) +  g^b_{Y_{n}}(\phi_{+}) - g^b_{Y_{n}}(\phi_n)
\end{pmatrix}\\
& \leq 0,
\end{align*}
where the last inequality holds because $\phi_{+}$ is a strong global supersolution; $h^a_Y(\cdot)$ and $g^b_Y(\cdot)$ are decreasing function of $\phi\in [\phi_{-},\phi_{+}]_{C^0}$ for all $Y\in \times_{i}B_{M_{Y^i}}$ and the inductive hypotheses. Thus, again, the maximum principle implies that $\phi_{+}\geq\phi_{n+1}$. All this implies that $\phi_{-}\leq\phi_{n+1}\leq\phi_{+}$, which finishes the inductive proof. Hence we have produced the sequence of solutions $\{(\varphi_n,Y_n) \}_{n=0}^{\infty}\subset W^p_{2,\delta}(M;E)$, where $Y_n\in \times_i B_{M_{Y^i}}$ and $\phi_{n}=\varphi_n + \omega\in [\phi_{-},\phi_{+}]_{C^0}$ for all $n$. Notice that this implies
\begin{align*}
\Vert\varphi_n\Vert_{W^p_{2,\delta}}&\lesssim 
\sum_{I}\Vert a^0_I(Y_{n-1})\Vert_{L^p_{\delta+2}}\Vert\phi_{n-1}^I\Vert_{C^0} + \Vert a\Vert_{L^p_{\delta+2}}\Vert \phi_{n-1}\Vert_{C^0} \\
&+  \sum_{K}\Vert b^0_K(Y_{n-1})\phi_{n-1}^K\Vert_{W^p_{1-\frac{1}{p}}} + \Vert b\phi_{n-1}\Vert_{W^p_{1-\frac{1}{p}}},\\
&\lesssim \sum_{K}\Vert b^{0}_K(Y_{n-1})\phi_{n-1}^K\Vert_{W^p_{1-\frac{1}{p}}} + \Vert b\phi_{n-1}\Vert_{W^p_{1-\frac{1}{p}}} + \sum_{I}\Vert f_I\Vert_{L^p_{\delta+2}}\Vert\phi_{\pm}^I\Vert_{C^0} + \Vert a\Vert_{L^p_{\delta+2}}\Vert \phi_{+}\Vert_{C^0},
\end{align*}
Notice that since $b^{0}_K(Y_{n-1})\in W^p_{1-\frac{1}{p}}(\partial M)$ for any $n$ and any $K$, then we know that there are (non-unique) extensions $\tilde{b}^{0}_K(Y_{n-1})\in W^p_{1}(U)$, where $U$ is some smooth neighbourhood of $\partial M$ with compact closure, and we also know that both the extension operator and the trace map are continuous. Therefore, $\Vert b^{0}_K(Y_{n-1})\Vert_{W^p_{1-\frac{1}{p}}(\partial M)}\lesssim \Vert\tilde{b}^{0}_K(Y_{n-1})\Vert_{W^p_{1}(M)}$, and, clearly, $\tilde{b}^{0}_K(Y_{n-1})\phi^{K}\in W^p_{1}(U)$ is and extension of $b^{0}_K(Y_{n-1})\phi^{K}\vert_{\partial M}$. Therefore, using the fact that $W^p_1(U)$ is an algebra under multiplication for $p>n$, we see that 
\begin{align*}
\Vert b^{0}_K(Y_{n-1})\phi^{K}_{n-1}\Vert_{W^p_{1-\frac{1}{p}}(\partial M)}&\lesssim \Vert\tilde{b}^{0}_K(Y_{n-1})\phi^{K}_{n-1}\Vert_{W^p_1(U)}\lesssim  \Vert\tilde{b}^{0}_K(Y_{n-1})\Vert_{W^p_1(U)}\Vert\phi^{K}_{n-1}\Vert_{W^p_1(U)}.
\end{align*}
Notice that 
\begin{align*}
\Vert\phi^{K}_{n-1}\Vert_{W^p_1(U)}&\lesssim \Vert\phi^{K}_{n-1}\Vert_{L^p(U)} + \Vert\nabla \phi^{K}_{n-1}\Vert_{L^p(U)}\lesssim \Vert\phi^{K}_{\pm}\Vert_{C^0} + \Vert \phi^{K-1}_{\pm}\Vert_{C^0} \Vert\nabla\phi_{n-1}\Vert_{L^p(U)},\\
&\lesssim 1 + \Vert\phi_{n-1}\Vert_{W^p_1(U)}\leq 1 + \Vert\omega\Vert_{W^p_1(U)} + \Vert\varphi_{n-1}\Vert_{W^p_1(U)},
\end{align*}
where in the first line $\phi_{\pm}$ stands for the subsolution if the exponents are negative and the supersolution if they are positive, and the implicit constant depends on the barriers, but not on $n$. Using the above and interpolation inequalities, we get that for any $\epsilon>0$ it holds that 
\begin{align*}
\Vert b^{0}_K(Y_{n-1})\phi^{K}_{n-1}\Vert_{W^p_{1-\frac{1}{p}}(\partial M)}&\lesssim \Vert\tilde{b}^{0}_K(Y_{n-1})\Vert_{W^p_1(U)}(1+  	\Vert\omega\Vert_{C^1(U)} + \Vert\varphi_{n-1}\Vert_{W^p_1(U)})\\
&\lesssim \Vert\tilde{b}^{0}_K(Y_{n-1})\Vert_{W^p_1(U)} \\
&+ \Vert\tilde{b}^{0}_K(Y_{n-1})\Vert_{W^p_1(U)}(\epsilon\Vert\varphi_{n-1}\Vert_{W^p_2(U)} + C_{\epsilon}\Vert\varphi_{n-1}\Vert_{L^p(U)}),\\
&\lesssim C_K(1 + \epsilon\Vert\varphi_{n-1}\Vert_{W^p_{2,\delta}(M)} + C_{\epsilon}\Vert\varphi_{n-1}\Vert_{C^0(U)})
\end{align*}
Since $\phi_{-}-\omega\leq \varphi_n \leq \phi_{+} - \omega$  for all $n$, then $\Vert\varphi_{n-1}\Vert_{C^0(U)}\leq C_{\pm}$ for some fixed constant independent of $n$, which depends on the barriers and $\omega$. Therefore, we see that
\begin{align*}
\Vert b^{0}_K(Y_{n-1})\phi^{K}_{n-1}\Vert_{W^p_{1-\frac{1}{p}}(\partial M)}&\lesssim C_K(1  + C_{\epsilon}C_{\pm} + \epsilon\Vert\varphi_{n-1}\Vert_{W^p_{2,\delta}(M)}),
\end{align*}
which implies that
\begin{align*}
\!\!\!\!\!\!\!\!\!\Vert\varphi_n\Vert_{W^p_{2,\delta}}&\lesssim  \sum_{K}C_K(1  + C_{\epsilon}C_{\pm} + \epsilon\Vert\varphi_{n-1}\Vert_{W^p_{2,\delta}(M)})+ \Vert b\Vert_{W^p_{1-\frac{1}{p}}}(1  + C_{\epsilon}C_{\pm} + \epsilon\Vert\varphi_{n-1}\Vert_{W^p_{2,\delta}(M)}) \\
&+ \sum_{I}\Vert f_I\Vert_{L^p_{\delta+2}}\Vert\phi_{\pm}^I\Vert_{C^0} + \Vert a\Vert_{L^p_{\delta+2}}\Vert\phi_{+}\Vert_{C^0}.
\end{align*}
Now, choosing $\epsilon$ sufficiently small, we can write
\begin{align}
\Vert\varphi_n\Vert_{W^p_{2,\delta}}&\leq  \frac{1}{2}\Vert\varphi_{n-1}\Vert_{W^p_{2,\delta}} + C,
\end{align}
where $C$ is a fixed constant that only depends on the parameters of the problem, that is, it depends on the barriers, the functions $f_I$, the shift functions $a$ and $b$, the constants $C_K$ and the choice of $\epsilon$. We can iterate the above procedure to get that
\begin{align}
\Vert\varphi_n\Vert_{W^p_{2,\delta}}&\leq  \frac{1}{2^n}\Vert\varphi_{-}\Vert_{W^p_{2,\delta}} + C\sum_{j=0}^{n-1} 2^{-j}\leq \Vert\varphi_{-}\Vert_{W^p_{2,\delta}} + 2C \;\; \forall \;\; n.
\end{align}
The above estimate implies that there is a constant $M_{\varphi}>0$, depending on the barriers $\phi_{\pm}$, the norms of the functions $f_I$ and the constants $C_K$, such that $\{\varphi_n \}_{n=0}^\infty\subset B_{M_{\varphi}}\subset W^p_{2,\delta}(M;\mathbb{R})$, where $B_{M_{\varphi}}$ stands for the closed ball of radius $M_{\varphi}$. This implies that the sequence $\{(\varphi_n,Y_n) \}_{n=0}^{\infty}\subset B_{M_{\varphi}}\times_iB_{M_{Y^i}}$. Thus, since the embedding $W^p_{2,\delta}\hookrightarrow C^1_{\delta'+\frac{n}{p}}$ is compact for any $-\frac{n}{p}<\delta'<\delta$ (see Lemma \ref{compactembed1} and Corollary \ref{compactembed2}), we get that, up to restricting to a subsequence, 
\begin{align*}
(\varphi_n,Y_n)\xrightarrow[n\rightarrow\infty]{} (\varphi,Y) \text{ in } C^1_{\delta'+\frac{n}{p}}.
\end{align*}
Thus, using our elliptic estimates, notice that
\begin{align*}
\Vert(\varphi_n,Y_n)-(\varphi_m,Y_m)\Vert_{W^p_{2,\delta}}
&\lesssim \sum_I\Vert a^0_{I}(Y_{n-1})\phi_{n-1}^{I}-a^0_{I}(Y_{m-1})\phi_{m-1}^{I}\Vert_{L^p_{\delta+2}} \\
&+ \Vert a\Vert_{L^p_{\delta+2}}\Vert\phi_{n-1}-\phi_{m-1}\Vert_{C^0}\\
&+ \sum_K\Vert b^0_{K}(Y_{n-1})\phi_{n-1}^{K}-b^0_{K}(Y_{m-1})\phi_{m-1}^{K}\Vert_{W^p_{1-\frac{1}{p}}} \\
&+ \Vert b(\phi_{n-1}-\phi_{m-1})\Vert_{W^p_{1-\frac{1}{p}}}\\
&+\sum_i\Big\{\sum_J\Vert a^i_{J}(Y_{n-1})\phi_{n-1}^{J}-a^i_{J}(Y_{m-1})\phi_{m-1}^{J}\Vert_{L^p_{\delta+2}} \\
&+  \sum_L\Vert b^i_{L}(Y_{n-1})\phi_{n-1}^{L}-b^i_{L}(Y_{m-1})\phi_{m-1}^{L}\Vert_{W^p_{1-\frac{1}{p}}}\Big\}.
\end{align*}
Now, consider the following estimates
\begin{align*}
\Vert a^0_{I}(Y_{n-1})\phi_{n-1}^{I}-a^0_{I}(Y_{m-1})\phi_{m-1}^{I}\Vert_{L^p_{\delta+2}}&\leq \Vert a^0_{I}(Y_{n-1}) - a^0_{I}(Y_{m-1})\Vert_{L^p_{\delta+2}}\Vert\phi_{n-1}^{I}\Vert_{C^0} \\
&+ \Vert a^0_{I}(Y_{m-1})\Vert_{L^p_{\delta+2}}\Vert\phi_{n-1}^{I} - \phi_{m-1}^{I}\Vert_{C^0}. 
\end{align*}
Therefore, from the compactness property and the boundedness of $\{Y_n \}$ in $W^p_{2,\delta}$, we know that $\{a^{0}_I(Y_n)\}_{n=0}^{\infty}$ is Cauchy $L^p_{\delta+2}$. In particular this implies that this sequence converges and thus it is bounded. We also have that $C^1_{\delta'+\frac{n}{p}}\hookrightarrow C^0$, since $\delta'>-\frac{n}{p}$, which implies $C^0$-convergence of $\phi_{n}\rightarrow \phi $, we get that
\begin{align*}
\Vert a^{0}_{I}(Y_{n-1})\phi_{n-1}^{I}-a^{0}_{I}(Y_{m-1})\phi_{m-1}^{I}\Vert_{L^p_{\delta+2}}\xrightarrow[n,m\rightarrow\infty]{} 0,
\end{align*}
and a similar statement holds for the terms involving $a^i_{J}$. Concerning the coefficients $b^{0}_K$, similarly to what we did previously, we have that
\begin{align*}
\!\!\!\!\!\!\Vert b^{0}_{K}(Y_{n-1})\phi_{n-1}^{K} - b^{0}_{K}(Y_{m-1})\phi_{m-1}^{K}\Vert_{W^p_{1-\frac{1}{p}}} &\leq \Vert(b^0_{K}(Y_{n-1}) - b^0_{K}(Y_{m-1}))\phi_{n-1}^{K}\Vert_{W^p_{1-\frac{1}{p}}} \\
&+ \Vert b^0_{K}(Y_{m-1})(\phi_{n-1}^{K} - \phi_{m-1}^{K})\Vert_{W^p_{1-\frac{1}{p}}},\\
&\lesssim \Vert (b^0_{K}(Y_{n-1}) - b^0_{K}(Y_{m-1}))\Vert_{W^p_{1-\frac{1}{p}}}\Vert\phi_{n-1}^{K}\Vert_{W^p_{1}(U)} \\
&+ \Vert b^0_{K}(Y_{m-1})\Vert_{W^p_{1-\frac{1}{p}}}\Vert \phi_{n-1}^{K} - \phi_{m-1}^{K}\Vert_{W^p_{1}(U)},
\end{align*}
which goes to zero because of the compactness hypotheses for the coefficients and the fact that $\varphi_n\xrightarrow[]{} \varphi$ in $C^1_{\delta'+\frac{n}{p}}$ implies that $\phi_n\xrightarrow[]{} \phi$ in $C^1$ on compact sets. A similar thing occurs with the $b^{i}_L$-terms. Finally, the remaining terms also go to zero as $n$ and $m$ go to infinity because $a\in L^p_{\delta+2}$, $b\in W^p_{1-\frac{1}{p}}$, $\phi_n\xrightarrow[]{}\phi$ in $C^0$ and $\phi_n\xrightarrow[]{} \phi$ in $C^1$ on compact sets. All this implies that $\{(\phi_n,Y_n)\}_{n=0}^{\infty}$ is actually Cauchy in $W^p_{2,\delta}$, which improves the convergence and finishes the proof.
 
\end{proof}


Putting together the Theorem \ref{conformalexistence} with Lemma \ref{EMcompactness} we obtain an existence criteria for the charged fluid constraints (\ref{Conformal-EMSystem.1})-(\ref{boundcondsystems})-(\ref{ElectricPotentialEq}). Thus, we see that what we need to do is to construct strong global barriers for the constraint system, and to prove invariance of the map $\mathcal{F}_{a,b}(\varphi,\cdot,\cdot)$ on the balls $B_{M_f}\times B_{M_Y}$ for any $\varphi\in [\varphi_{-},\varphi_{+}]_{C^0}$. To achieve this, we will use some explicit elliptic estimates associated to the Einstein-Maxwell constraint system.

Let us highlight that the invariance property of the map $\mathcal{F}_{a,b}$ associated to the shifted system is actually independent of the shift functions $a$ and $b$, in the sense that if this property holds for the map $\mathcal{F}$, then it holds for $\mathcal{F}_{a,b}$ as well. This can be seen as follows. Since the shifts only affect the Lichnerowicz equation, if $(\phi,Y)=\mathcal{F}_{a,b}(\bar{\phi},\bar{Y})$, then $Y^i={\mathcal{P}^{-1}}^i(\bar{\phi},\bar{Y})$ and furthermore $\mathcal{F}(\bar{\phi},\bar{Y})=(\hat{\phi},Y)$, where only the first component is affected by the presence of the shifts. Thus, if there are radii $M_{Y^i}$ such that the map $\mathcal{F}$ is invariant on $\times_iB_{M_i}$ for any $\phi_{-}\leq \phi\leq \phi_{+}$, then this means that $\Vert Y^{i}\Vert_{W^{p}_{2,\delta}}\leq M_{Y^i}$ for any $(\bar{\phi},\bar{Y})\in [\phi_{-},\phi_{+}]_{C^{0}}\times_iB_{M_i}$, proving the invariance property of $\mathcal{F}_{a,b}$ stated in the above theorem. Therefore, this property can be recast as in Theorem \ref{ExistenceMetathmInto}. Furthermore, in practice, the radii $M_{Y^i}$ of the corresponding balls will be fixed via the elliptic estimates (\ref{Fredholmestimate}) applied to such a solution, estimating $\bar{\phi}$ in the right-hand side of such an inequality through the barriers. Estimating the coefficients depending of $(Y,DY)$ relies on the structure of each specific system. As we shall see below, the electromagnetic system can be done by first fixing the radius $M_f$ following this recipe and exploiting that the coefficients in (\ref{ElectricPotentialEq}) are independent of $X$. Then, one fixes $M_X$ through a priori estimates as above, where the coefficients involving $(f,Df)$ will be estimated appealing to $M_f$. This example will show that this kind of \emph{triangular systems} are particularly well-suited for applications of the above theorem.

\subsubsection*{Electromagnetic constraint}

Within the kind of iteration scheme described above, we need to get a global fixed estimates for solutions of linear equations of the form
\begin{align}\label{electriciteration}
\begin{split}
\Delta_{\gamma}f &= \tilde{q}\bar{\phi}^{\frac{2n}{n-2}},\\
-\hat{\nu}(f)&=E_{\hat{\nu}} \text{ on } \partial M,
\end{split}
\end{align}
where, in the right-hand side, the functions $\bar{\phi}=\omega+\varphi$, $\varphi\in W^p_{2,\delta}(M), \tilde{q}\in L^p_{\delta+2}(M)$ and $E_{\hat{\nu}}\in W^p_{1-\frac{1}{p}}(\partial M)$, with $p>n$ and $-\frac{n}{p}<\delta<n-2-\frac{n}{p}$, are considered as a given data. Thus, the right hand is in $L^p_{\delta+2}(M)\times W^p_{1-\frac{1}{p}}(\partial M)$, and therefore any $W^p_{2,\delta}$-solution of (\ref{electriciteration}) satisfies the the following elliptic estimate:
\begin{align}\label{f-estimate}
\Vert f\Vert_{W^p_{2,\delta}}\leq C \Big\{\Vert\tilde{q}\Vert_{L^p_{\delta+2}}\Vert\bar{\phi}^{\frac{2n}{n-2}}\Vert_{C^0} + \Vert E_{\hat{\nu}}\Vert_{W^p_{1-\frac{1}{p}}} \Big\}.
\end{align}
In particular, in the above estimate we can get rid of the dependence on the specific $\bar{\phi}$ by admitting the existence of global supersolution.

\subsubsection*{Momentum constraint}

Similarly to what we did above, we want to get uniform estimates on the sequence of solutions generated by the momentum constraint, given by:
\begin{align}\label{MomentumConstraintEstimate0}
\begin{split}
\Delta_{\gamma,conf}X&=r_nD\tau \phi^{\frac{2n}{n-2}} + \omega_1\phi^{2\frac{n + 1}{n-2}} - \omega_2,\\
\pounds_{\gamma,conf}X(\hat{\nu},\cdot)&=-\Big(\big( \frac{1}{2}\vert\theta_{-}\vert - r_{n}\tau\big)v^{\frac{2n}{n-2}} + U(\hat{\nu},\hat{\nu}) \Big)\hat{\nu} \text{ on } \partial M,
\end{split}
\end{align}
where ${\omega_1}_k=\mu\left( 1 + \vert\tilde{u}\vert^2_{\gamma} \right)^{\frac{1}{2}}\tilde{u}_k $ and ${\omega_2}_{k}=\tilde{F}_{ik}\tilde{E}^i$. Let us suppose that $\tau, U\in W^p_{1,\delta+1}$ and that $\gamma$ being a $W^p_{2,\delta}$-AE metric are given, where, as above, $p>n$ and $-\frac{n}{p}<\delta<n-2-\frac{n}{p}$, and consider that $\phi=\omega+\varphi$, with $\varphi\in W^p_{2,\delta}$, and finally that $f\in W^p_{2,\delta}$, $\vartheta\in W^p_{1,\delta+1}$, $\theta_{-}\in W^p_{1-\frac{1}{p}}$ and $v\in W^p_{2-\frac{1}{p}}$. Also, suppose that $\mu\in L^p_{\delta+2}$ and $\tilde{F}\in W^p_{1,\delta+1}$, which implies that $\tilde{F}\otimes \tilde{E}\in W^p_{1,\delta+2}$, since $\tilde{E}=df + \vartheta \in W^p_{1,\delta+1}$. Then, since the right hand side of the above equation is in $L^p_{\delta+2}\times W^p_{1-\frac{1}{p}}$, following Proposition \ref{isomorphismthm}, we can associate a unique solution to it, say $X_{\phi,f}\in W^{p}_{2,\delta}$, and, using Proposition \ref{injectivityestimate}, we can estimate the $W^{p}_{2,\delta}$-norm of $X_{\phi,f}$ in terms of $\phi$, $f$ and the free data. That is,
\begin{align*}
\begin{split}
\Vert X_{\phi,f}\Vert_{W^{p}_{2,\delta}}
&\lesssim \Vert D\tau\Vert_{L^p_{\delta+2}} \Vert\phi^{\frac{2n}{n-2}}\Vert_{C^0} +  \Vert\omega_1\Vert_{L^p_{\delta+2}} \Vert\phi^{2\frac{n + 1}{n-2}}\Vert_{C^0} + \Vert{\omega_2}_f\Vert_{L^p_{\delta+2}},\\
&+ \Vert\frac{1}{2}\vert\theta_{-}\vert - r_{n}\tau\Vert_{W^p_{1-\frac{1}{p}}}\Vert v^{\frac{2n}{n-2}}\Vert_{C^0} + \Vert U\Vert_{W^p_{1-\frac{1}{p}}}
\end{split}
\end{align*}
where we have used that $W^p_{2,\delta}\hookrightarrow C^0$. Notice that $\Vert \omega_2\Vert_{L^p_{\delta+2}}\lesssim \Vert\tilde{F}\otimes \tilde{E}\Vert_{W^{p}_{1,\delta+2}}\lesssim \Vert\tilde{F}\Vert_{W^{p}_{1,\delta+2}}\Vert df\Vert_{W^{p}_{1,\delta+2}} + \Vert\tilde{F}\Vert_{W^{p}_{1,\delta+2}}\Vert\vartheta\Vert_{W^{p}_{1,\delta+2}}$, which explicitly gives us that
\begin{align}\label{Y-estimate}
\begin{split}
\Vert X_{\phi,f}\Vert_{W^{p}_{2,\delta}}&\leq \kappa\Big\{\Vert D\tau\Vert_{L^p_{\delta+2}} \Vert\phi^{\frac{2n}{n-2}}\Vert_{C^0} +  \Vert\omega_1\Vert_{L^p_{\delta+2}} \Vert\phi^{2\frac{n + 1}{n-2}}\Vert_{C^0} + \Vert\tilde{F}\Vert_{W^p_{1,\delta+1}}\Vert df\Vert_{W^p_{1,\delta+1}} \\
&+ \Vert \tilde{F}\Vert_{W^p_{1,\delta+1}}\Vert\vartheta \Vert_{W^p_{1,\delta+1}} + \big\Vert \vert\theta_{-}\vert - 2r_{n}\tau\big\Vert_{W^p_{1-\frac{1}{p}}}\Vert v^{\frac{2n}{n-2}}\Vert_{C^0} + \Vert U\Vert_{W^p_{1-\frac{1}{p}}}\Big\}.
\end{split}
\end{align}

In case we have a pair of compatible strong global barriers $\phi_{-}\leq \phi_{+}$, then, there are radii $M_{f},M_{Y}\subset W^p_{2,\delta}$ such that Definition \ref{strongbarriers} works for the Lichnerowicz equation for any $f\in B_{M_f}$ and any $X\in B_{M_X}$. In such a case, notice that for any $f\in B_{M_f}$ and any $\phi_{-}\leq\phi\leq\phi_{+}$, we get that
\begin{align*}
\begin{split}
\Vert X_{\phi,f}&\Vert_{W^{p}_{2,\delta}}\lesssim \Vert D\tau\Vert_{L^p_{\delta+2}} \Vert\phi_{+}^{\frac{2n}{n-2}}\Vert_{C^0} +  \Vert\omega_1\Vert_{L^p_{\delta+2}} \Vert\phi_{+}^{2\frac{n + 1}{n-2}}\Vert_{C^0}\\
& + \Vert\tilde{F}\Vert_{W^p_{1,\delta+1}}\left(M_f  + \Vert\vartheta\Vert_{W^p_{1,\delta+1}}\right) + \Vert \vert\theta_{-}\vert - 2r_{n}\tau\Vert_{W^p_{1-\frac{1}{p}}}\Vert v^{\frac{2n}{n-2}}\Vert_{C^0} + \Vert U\Vert_{W^p_{1-\frac{1}{p}}},
\end{split}
\end{align*}

We shall need one further estimate related to the momentum constraint. Suppose that $X_{\phi,f}$ is a solution of (\ref{MomentumConstraintEstimate0}) for source functions $(\phi,f)$. Since $\vert\pounds_{\gamma,conf}X_{\phi,f}\vert_e\lesssim \vert DX_{\phi,f}\vert_e$, if $p>n$ and $X\in W^p_{2,\delta}$, we can appeal to Proposition \ref{Mazzeo3} to estimate $\vert DX_{\phi,f}\vert_e\lesssim r^{-(\delta+1+\frac{n}{2})}\Vert X_{\phi,f}\Vert_{W^p_{2,\delta}}$, implying that
\begin{multline}
\begin{split}
\!\!\!\!\!\!\vert\pounds&_{\gamma,conf}X_{\phi,f}\vert_e\lesssim r^{-(\delta+1+\frac{n}{2})}\Big\{ r_n\Vert D\tau\Vert_{L^p_{\delta+2}} \Vert\phi\Vert_{C^0}^{\frac{2n}{n-2}} +  \Vert\omega_1\Vert_{L^p_{\delta+2}} \Vert\phi\Vert_{C^0}^{{2\frac{n + 1}{n-2}}} + \Vert\tilde{q}\Vert_{L^p_{\delta+2}}\Vert\tilde{F}\Vert_{W^p_{1,\delta+1}}{\Vert\phi\Vert_{C^0}}^{\frac{2n}{n-2}} \\
& + \Vert E_{\hat{\nu}}\Vert_{W^p_{1-\frac{1}{p}}}\Vert\tilde{F}\Vert_{W^p_{1,\delta+1}} + \Vert\vartheta\Vert_{W^p_{1,\delta+1}}\Vert\tilde{F}\Vert_{W^p_{1,\delta+1}} + \big\Vert \vert\theta_{-}\vert - 2c_{n}\tau\big\Vert_{W^p_{1-\frac{1}{p}}}\Vert v^{\frac{2n}{n-2}}\Vert_{C^0} + \Vert U\Vert_{W^p_{1-\frac{1}{p}}} \big\}
\end{split}
\end{multline}


\subsubsection*{Barriers for the Lichnerowicz equation}

Consider the hamiltonian constraint, given by
\begin{align*}
\begin{split}
\!\!\!\!\!\!\!\!\!\!\!\Delta_{\gamma}\phi &= c_n R_{\gamma}\phi - c_n \vert\tilde{K}\vert^2_{\gamma}\phi^{-\frac{3n-2}{n-2}} - c_n\left(2\epsilon_1  - r_n\tau^2\right)\phi^{\frac{n+2}{n-2}} - 2c_n\epsilon_2\phi^{-3} - 2c_n\epsilon_3\phi^{\frac{n-6}{n-2}},\\
\!\!\!\!\!\!- \hat{\nu}(\phi)&=  a_n H \phi  - (d_n\tau + a_n\theta_{-})\phi^{\frac{n}{n-2}} - a_n\left( \frac{1}{2}\vert\theta_{-}\vert - r_{n}\tau\right)v^{\frac{2n}{n-2}}\phi^{-\frac{n}{n-2}}, \text{ on } \partial M
\end{split}
\end{align*}
According to the analysis made in the previous sections, we need to establish the following two properties:
\begin{itemize}
\item The above Lichnerowicz equation admits a compatible pair of strong global sub and super-solutions $0<\phi_{-}\leq\phi_{+}$, where $\phi_{\pm}=\omega_{\pm}+\varphi_{\pm}$ with $\varphi_{\pm}\in W^p_{2,\delta}$ and $\omega_{\pm}$ are as in Theorem \ref{conformalexistence}, and these barriers work for any $f\in B_{M_f}$ and $X\in B_{M_X}$ for suitable $M_f,M_Y>0$.
\item The map $\mathcal{F}_{a,b}:[\varphi_{-},\varphi_{+}]_{C^0}\times B_{M_f}\times B_{M_{X}}\mapsto W^p_{2,\delta}(M;E)$ associated to the shifted system must be invariant on $B_{M_f}\times B_{M_{X}}$ for any $\varphi\in [\varphi_{-},\varphi_{+}]_{C^0}$. 
\end{itemize}

The construction of the barriers will strongly depend on the Yamabe class of $\gamma$, which is something to be expected following previous works \cite{Maxwell1,Holst2}. Thus, let us recall that the Yamabe quotient related with an AE manifold $(M,g)$ with boundary $\partial M$ is given by the number:
\begin{align}
\mathcal{Q}_g(f)\doteq \frac{\int_M(\vert\nabla f\vert^2_g + c_nR_gf^2)dV + \int_{\partial M}a_nH_gf^2dA}{\Vert f\Vert^2_{L^{\frac{2n}{n-2}}}},\;\; \forall\; f\in C^{\infty}_{0}(M).
\end{align}
Then, we define the Yamabe invariant $\lambda_g$ as follows.
\begin{align}
\lambda_g\doteq \inf_{\underset{f\not\equiv 0}{f\in C^{\infty}_{0}}}\mathcal{Q}_g(f).
\end{align}

From \cite[Proposition 3 and Corollary 1]{Maxwell1}, we can extract the following important result concerning the Yamabe characterization in this context.

\begin{prop}
If $(M^n,g)$ is a $W^p_{2,\delta}$-Yamabe positive AE manifold with $p>\frac{n}{2}$, $-\frac{n}{p}<\delta<n-2-\frac{n}{p}$ and $n\geq 3$, then there is a conformal transformation $g'=\phi^{\frac{4}{n-2}}g$, with $\phi-1\in W^p_{2,\delta}$, such that $\lambda'_g>0$ making $(M,g')$ a scalar-flat $W^p_{2,\delta}$-AE manifold with zero boundary mean curvature.
\end{prop}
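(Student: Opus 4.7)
The goal is to find a positive conformal factor $\phi$ with $\phi - 1 \in W^p_{2,\delta}(M)$ realizing $R_{g'} = 0$ and $H_{g'} = 0$ for $g' = \phi^{4/(n-2)} g$. Using the standard conformal transformation laws for scalar and boundary mean curvature, this amounts to solving the linear boundary value problem $\Delta_g \phi - c_n R_g \phi = 0$ in $M$ and $\hat{\nu}(\phi) + a_n H_g \phi = 0$ on $\partial M$. Setting $\phi = 1 + u$, this becomes the inhomogeneous problem
\[
T_g u \doteq \bigl(\Delta_g u - c_n R_g u,\; \hat{\nu}(u) + a_n H_g u\bigr) = (c_n R_g,\; -a_n H_g),
\]
whose right-hand side lies in $L^p_{\delta+2}(M) \times W^p_{1-1/p}(\partial M)$ since $R_g \in L^p_{\delta+2}$ and $H_g \in W^p_{1-1/p}(\partial M)$ because $\gamma$ is $W^p_{2,\delta}$-AE.

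The core step is to show that $T_g \colon W^p_{2,\delta}(M) \to L^p_{\delta+2}(M) \times W^p_{1-1/p}(\partial M)$ is an isomorphism. It is a compact perturbation of the Laplace--Neumann operator, which is itself an isomorphism in the given weight range by (the scalar-Laplacian version of) Proposition~\ref{isomorphismthm}; hence $T_g$ is Fredholm of index zero. Compactness of the zero-order perturbation follows from factoring multiplication by $R_g$ through the compact embedding $W^p_{2,\delta} \hookrightarrow C^0_{\delta'+n/p}$ of Proposition~\ref{compactembed1}, with the analogous argument for the boundary term using continuity of the trace and compactness of boundary Sobolev embeddings. Injectivity is where the Yamabe hypothesis enters: if $T_g u = 0$, integrating $u \cdot (\Delta_g u - c_n R_g u)$ by parts and inserting the Robin condition gives
\[
\int_M \bigl(|\nabla u|^2_g + c_n R_g u^2\bigr)\, dV_g + a_n \int_{\partial M} H_g u^2 \, dA_g = 0,
\]
which, after approximation of $u$ by $C^\infty_0(M)$ functions, equals $\mathcal{Q}_g(u)\,\|u\|_{L^{2n/(n-2)}}^2$. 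Since $\lambda_g > 0$ forces $\mathcal{Q}_g(u) > 0$ for every nontrivial admissible test function, we conclude $u \equiv 0$. Hence $T_g$ is an isomorphism and we obtain a unique $u \in W^p_{2,\delta}$.

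It remains to prove that $\phi = 1 + u > 0$. Set $\phi^- \doteq \max(-\phi, 0)$, which is compactly supported (as $\phi \to 1$ at infinity), Lipschitz, and therefore admissible in the Yamabe functional. A computation analogous to the one above---testing $\Delta_g \phi = c_n R_g \phi$ against $\phi^-$, integrating by parts on $\{\phi < 0\}$, and using the Robin condition at the boundary---produces
\[
\int_M |\nabla \phi^-|^2_g\, dV_g + c_n \int_M R_g (\phi^-)^2\, dV_g + a_n \int_{\partial M} H_g (\phi^-)^2\, dA_g = 0,
\]
and once more $\lambda_g > 0$ forces $\phi^- \equiv 0$, i.e., $\phi \geq 0$. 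The strong maximum principle applied to $\Delta_g \phi - c_n R_g \phi = 0$ (using $\phi \not\equiv 0$, since $\phi \to 1$) rules out interior zeros, while at any hypothetical boundary zero $x_0$ Hopf's boundary lemma would give $\hat{\nu}(\phi)(x_0) < 0$, contradicting the Robin condition $\hat{\nu}(\phi)(x_0) = -a_n H_g(x_0)\phi(x_0) = 0$. So $\phi > 0$ everywhere, and $g' = \phi^{4/(n-2)} g$ is scalar-flat with zero boundary mean curvature, $W^p_{2,\delta}$-AE by the algebra structure of $W^p_{2,\delta}$ (Proposition~\ref{sobprop}), and Yamabe positive by conformal invariance of the Yamabe invariant.

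The main technical obstacle is the injectivity step: one must carefully justify the integration by parts for $u \in W^p_{2,\delta}$---including the decay of boundary terms at infinity, which depends on $\delta > -n/p$---and verify that $u$ belongs to the natural completion of $C^\infty_0(M)$ in which the infimum $\lambda_g$ is defined, so that the inequality $\mathcal{Q}_g(u) \geq \lambda_g$ is available. The remaining steps are standard once the isomorphism property of $T_g$ is in hand.
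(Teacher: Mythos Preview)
The paper does not supply its own proof of this proposition: it is quoted directly from \cite{Maxwell1} (Proposition~3 and Corollary~1 therein), so there is no in-paper argument to compare against. Your proof is the standard one and is essentially correct; it is, in outline, the argument Maxwell gives.

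Two small technical points are worth tightening. First, your appeal to Proposition~\ref{compactembed1} for the compact embedding is stated there only for $p>n$, whereas the proposition allows $p>\tfrac{n}{2}$; the continuous embedding $W^p_{2,\delta}\hookrightarrow C^0_{\delta+n/p}$ you actually need is item (ii) of Proposition~\ref{sobprop} (valid for $2>\tfrac{n}{p}$), and the compactness upgrade follows by the same exhaustion argument. Second, the strong maximum principle in Lemma~\ref{strongmaxprinc} requires a nonnegative zero-order coefficient, which $c_nR_g$ need not be; to rule out interior zeros of $\phi$ you should first split $R_g=R_g^+-R_g^-$ and observe that $\Delta_g\phi - c_nR_g^+\phi = -c_nR_g^-\phi\le 0$ once $\phi\ge 0$ is known, after which Lemma~\ref{strongmaxprinc} applies with $a=c_nR_g^+\ge 0$. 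The Hopf-lemma step at a boundary zero is similarly repaired. With these adjustments the argument goes through.
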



We can now construct strong global barriers for the Lichnerowicz equations associated with a charged fluid. Our constructions will be an adaptation of the barriers constructed in \cite{Holst2} to our present context, starting with far from CMC barriers for in the Yamabe positive case. Notice that the marginally trapped condition required that the data $\theta_{-}$ and $\tau$ satisfy $\frac{1}{2}\vert\theta_{-}\vert-r_n\tau\geq 0$. Then, notice that
\begin{align*}
a_n\vert\theta_{-}\vert - d_n\tau 
&=\frac{1}{2}\frac{n-2}{n-1}(\vert\theta_{-}\vert-\frac{n-1}{n}\tau)=a_n(\vert\theta_{-}\vert - r_n\tau)\geq a_n(\frac{1}{2}\vert\theta_{-}\vert - r_n\tau).
\end{align*}
Therefore, the (marginally) trapped condition $\frac{1}{2}\vert\theta_{-}\vert-r_n\tau\geq 0$ implies that $a_n\vert\theta_{-}\vert - d_n\tau\geq 0$.

\begin{lemma}\label{Yamabeposbarriers}
Let $(M,\gamma)$ be a $W^p_{2,\delta}$-Yamabe positive AE-manifold, with $p>n$, $n\geq 3$ and $-\frac{n}{p}<\delta<n-2-\frac{n}{p}$. Assuming the functional hypotheses given in Lemma \ref{EMcompactness}, if additionally $\mu\in W^{p}_{1,2\delta+1+\frac{n}{p}}$, $\left( \frac{1}{2}\vert\theta_{-}\vert - r_{n}\tau\right)\geq 0$ along $\partial M$ and $v>0$, then, under smallness assumptions on $\mu,\tilde{q},\tilde{F},E_{\hat{\nu}},\vartheta,v$ and $U$, the hamiltonian constraint associated to (\ref{Conformal-EMSystem.1})-(\ref{boundcondsystems}) admits compatible strong global barriers $0<\phi_{-}\leq\phi_{+}$, such that $\phi_{\pm}-\omega_{\pm}\in W^p_{2,\delta}$ for some harmonic functions $\omega_{\pm}$ which tend to positive constants $\{A_j^{\pm}\}_{j=1}^N$ on each end. Also, the map $\mathcal{F}_{a,b}$ associated to the system is invariant on balls $B_{M_f},B_{M_X}\subset W^p_{2,\delta}$ where the strong global barriers work for any $(a,b)\in L^p_{\delta+2}(M)\times W^{p}_{1-\frac{1}{p}}(\partial M)$ such that $a,b\geq 0$. Furthermore, under a smallness condition on $\big\Vert \vert\theta_{-}\vert - 2r_{n}\tau\big\Vert_{W^p_{1-\frac{1}{p}}}$, the choice $v\doteq \phi_{+}\vert_{\partial M}$ is compatible with this construction.
\end{lemma}
\begin{proof}
Let us begin by proving the existence of a strong global subsolution. We need to satisfy
\begin{align*}
\mathcal{H}^{1}_{f,X}(\phi_{-})&\doteq \Delta_{\gamma}\phi_{-} - c_n R_{\gamma}\phi_{-} - c_n\left( r_n\tau^2 - 2\epsilon_1\right)\phi_{-}^{\frac{n+2}{n-2}} + c_n \vert\tilde{K}(X)\vert^2_{\gamma}\phi_{-}^{-\frac{3n-2}{n-2}} \\
&+ 2c_n\epsilon_2(f)\phi_{-}^{-3} + 2c_n\epsilon_3\phi_{-}^{\frac{n-6}{n-2}}\geq 0, \\
\mathcal{H}^2_{f,X}(\phi_{-})&\doteq - \hat{\nu}(\phi_{-}) - a_n H \phi_{-}  + (d_n\tau + a_n\theta_{-})\phi_{-}^{\frac{n}{n-2}} + a_n\left( \frac{1}{2}\vert\theta_{-}\vert - r_{n}\tau\right)v^{\frac{2n}{n-2}}\phi_{-}^{-\frac{n}{n-2}}\geq 0,
\end{align*}
for all $f\in B_{M_f}$ and $X\in B_{M_{X}}$. Since we are assuming $\gamma$ to be Yamabe positive, we can deform $\gamma$ conformally to a metric with zero scalar curvature and zero boundary mean curvature. Thus, we can begin by assuming that $R_{\gamma}=0$ and $H_{\gamma}=0$ on $\partial M$. Now, since $\tau\in W^p_{1,\delta+1}$, we get that $\tau^2\in L^p_{\delta+2}$, also, since $(a_n\vert\theta_{-}\vert - d_n\tau)\geq 0$, we can define $\varphi_{-}\in W^p_{2,\delta}$ as the unique solution to
\begin{align}
\begin{split}
\Delta_{\gamma}\varphi_{-}-b_{n}\tau^2\varphi_{-}&=b_{n}\omega \tau^2,\\
-\hat{\nu}(\varphi_{-})-(a_n\vert\theta_{-}\vert - d_n\tau)\varphi_{-}&=\omega (a_n\vert\theta_{-}\vert - d_n\tau),
\end{split}
\end{align}
where $b_{n}\doteq \frac{n-1}{n}c_n$, and $\omega$ is a harmonic function with homogeneous Neumann boundary conditions, asymptotic to positive constants $\{A_j \}_{j=1}^N$ on each end. Then, define $\phi_{-}\doteq \alpha(\omega + \varphi_{-})$, where $\alpha>0$ is a constant to be fixed. Notice that 
\begin{align*}
\Delta_{\gamma}\phi_{-}&=\alpha\Delta_{\gamma}\varphi_{-}=\alpha b_n\tau^2(\omega+\varphi_{-})=b_{n}\tau^2\phi_{-},\\
-\hat{\nu}(\phi_{-})&=-\alpha\hat{\nu}(\varphi_{-})=\alpha (a_n\vert\theta_{-}\vert - d_n\tau)(\omega + \varphi_{-})= (a_n\vert\theta_{-}\vert - d_n\tau)\phi_{-},
\end{align*}
thus implying that
\begin{align}
\begin{split}
\Delta_{\gamma}\phi_{-}-b_n\tau^2\phi_{-}&=0,\\
-\hat{\nu}(\phi_{-})- (a_n\vert\theta_{-}\vert - d_n\tau)\phi_{-}&=0.
\end{split}
\end{align}
Then, because of the weak maximum principle given in \cite[Lemma A.1]{Holst2}, we known that $\phi_{-}\geq 0$, and then the strong maximum principle, given in \cite[Lemma 4]{Maxwell1}, guarantees that $\phi_{-}> 0$. Now, consider
\begin{align*}
\mathcal{H}^1_{f,X}(\phi_{-})&= b_n\tau^2\phi_{-} - b_n \tau^2\phi_{-}^{\frac{n+2}{n-2}} + 2c_n\epsilon_1\phi_{-}^{\frac{n+2}{n-2}} + c_n \vert\tilde{K}(X)\vert^2_{\gamma}\phi_{-}^{-\frac{3n-2}{n-2}} + 2c_n\epsilon_2(f)\phi_{-}^{-3} \\
&+ 2c_n\epsilon_3\phi_{-}^{\frac{n-6}{n-2}},\\
\mathcal{H}^2_{f,X}(\phi_{-})&=(a_n\vert\theta_{-}\vert - d_n\tau)\phi_{-} - (a_n\vert\theta_{-}\vert - d_n\tau)\phi_{-}^{\frac{n}{n-2}} + \left( \frac{1}{2}\vert\theta_{-}\vert - r_{n}\tau\right)v^{\frac{2n}{n-2}}\phi_{-}^{-\frac{n}{n-2}}
\end{align*}
Since $\sup_M(\omega+\varphi_{-})< \infty$, then, there exists $0<\alpha\ll 1$ such that
\begin{align*}
\alpha(\omega+\varphi_{-}) - \alpha^{\frac{n+2}{n-2}}(\omega+\varphi_{-})^{\frac{n+2}{n-2}}\geq 0 \text{ on } M,\\
\alpha(\omega+\varphi_{-}) - \alpha^{\frac{n}{n-2}}(\omega+\varphi_{-})^{\frac{n}{n-2}}\geq 0 \text{ on } \partial M,
\end{align*}
In fact, it is enough to take $\alpha$ sufficiently small so as to satisfy the following two conditions: 
\begin{align*}
\alpha^{\frac{n+2}{n-2}-1}=\alpha^{\frac{4}{n-2}}\leq \inf_M(\omega+\varphi_{-})^{1-\frac{n+2}{n-2}}=\inf_M(\omega+\varphi_{-})^{-\frac{4}{n-2}},\\
\alpha^{\frac{n}{n-2}-1}=\alpha^{\frac{2}{n-2}}\leq \inf_{\partial M}(\omega+\varphi_{-})^{1-\frac{n}{n-2}}=\inf_{\partial M}(\omega+\varphi_{-})^{-\frac{2}{n-2}}
\end{align*}
Notice that, since $0<\omega+\varphi_{-}\in C^0$ and it tends to some positive constants $\{A_j \}_{j=1}^{N}$ at infinity in each end $E_j$, then $0<\inf_M\;(\omega+\varphi_{-})^{-1}<\infty$. Such choice of $\alpha>0$ guarantees that $\mathcal{H}^{1,2}_{f,X}(\phi_{-})\geq 0$ $\forall$ $f,X\in W^p_{2,\delta}$, proving that $\phi_{-}$ is a strong global subsolution for the hamiltonian constraint.

Now, let us consider the supersolution. In this case, we need to find $\phi_{+}$ satisfying
\begin{align*}
\mathcal{H}_{f,X}(\phi_{+})\leq 0, \;\; \forall \;\; f\in B_{M_f}, X\in B_{M_{X}},
\end{align*}
for some radii $M_f,M_X>0$. In order to do this, let $\Lambda\in L^p_{\delta+2}$ be a positive function which agrees with $r^{-(\delta+\frac{n}{p})}r^{-(\delta+2+{\frac{n}{p}})}$ in a neighbourhood of infinity, in each end, and $\lambda\in W^p_{1-\frac{1}{p}}(\partial M)$ a positive function on the boundary. Define $\varphi_{+}\in W^p_{2,\delta}$ as the unique solution to 
\begin{align*}
\Delta_{\gamma}\varphi_{+}&=-\Lambda,\\
\hat{\nu}(\varphi_{+})&=\lambda \text{ on } \partial M,
\end{align*} 
Then, define $\phi_{+}\doteq \beta(\omega+\varphi_{+})$, where $\beta$ is some positive constant to be determined, and notice that 
\begin{align*}
\Delta_{\gamma}\phi_{+}&=-\beta\Lambda\leq 0,\\
-\hat{\nu}(\phi_{+})&=-\beta\lambda\leq 0 \text{ on } \partial M,
\end{align*}
which, appealing to the maximum principles given in \cite[Lemma A.1]{Holst2} and \cite[Lemma 4]{Maxwell1}, implies that $\phi_{+}> 0$. Being aware of the existence of strong global subsolutions of the form $\phi_{-}=\alpha(\omega+\varphi_{-})$ whenever $\alpha$ is sufficiently small, let us fix a relation between $\alpha$ and $\beta$ such that $\phi_{-}<\phi_{+}$. In order to do this, whatever $\beta$ is, consider $0<\alpha< \beta\inf_M\frac{\omega+\varphi_{+}}{\omega+\varphi_{-}}$. Now, consider the following
\begin{align*}
\mathcal{H}^{1}_{f,X}(\phi_{+})
&\leq -\beta\Lambda + 2c_n\epsilon_1\phi_{+}^{\frac{n+2}{n-2}} + c_n \vert\tilde{K}(X)\vert^2_{\gamma}\phi_{+}^{-\frac{3n-2}{n-2}} + 2c_n\epsilon_2(f)\phi_{+}^{-3} + 2c_n\epsilon_3\phi_{+}^{\frac{n-6}{n-2}},\\
\mathcal{H}^{2}_{f,X}(\phi_{+})&=- \beta\lambda - (a_n\vert\theta_{-}\vert - d_n\tau)\phi_{+}^{\frac{n}{n-2}} + \left( \frac{1}{2}\vert\theta_{-}\vert - c_{n}\tau\right)v^{\frac{2n}{n-2}}\phi_{+}^{-\frac{n}{n-2}},
\end{align*}
Since, given numbers $a,b$, it holds $(a+b)^2\leq 2a^2 + 2b^2$, then we get that
\begin{align*}
\vert\tilde{K}\vert^2_{\gamma}\leq 2 \vert\pounds_{\gamma,conf}X\vert^2_{\gamma} + 2\vert U\vert^2_{\gamma}.
\end{align*}
Now, fix the numbers
\begin{align}\label{radii}
\begin{split}
M_{f}&=C \Big\{\Vert\tilde{q}\Vert_{L^p_{\delta+2}}\Vert\phi_{+}^{\frac{2n}{n-2}}\Vert_{C^0} + \Vert E_{\hat{\nu}}\Vert_{W^p_{1-\frac{1}{p}}} \Big\},\\
M_{X}&=\kappa\Big\{\Vert D\tau\Vert_{L^p_{\delta+2}} \Vert\phi_{+}^{\frac{2n}{n-2}}\Vert_{C^0} +  \Vert\omega_1\Vert_{L^p_{\delta+2}} \Vert\phi_{+}^{2\frac{n + 1}{n-2}}\Vert_{C^0} + \Vert\tilde{F}\Vert_{W^p_{1,\delta+1}}\left(M_f  + \Vert\vartheta\Vert_{W^p_{1,\delta+1}}\right)\\
&  + \Vert \vert\theta_{-}\vert - 2r_{n}\tau\Vert_{W^p_{1-\frac{1}{p}}}\Vert v^{\frac{2n}{n-2}}\Vert_{C^0} + \Vert U\Vert_{W^p_{1-\frac{1}{p}}}\Big\},
\end{split}
\end{align}
where  $C$ and $\kappa$ are the constant appearing in the estimates (\ref{f-estimate}) and (\ref{Y-estimate}) respectively, and consider the balls $B_{M_f}$ and $B_{M_X}$ in $W^p_{2,\delta}$. Then, since $\vert\pounds_{\gamma,conf}X\vert_{\gamma}\lesssim \vert DX\vert_{\gamma}$, we get that for any $X\in B_{M_X}$, it holds that 
\begin{align*}
\vert\pounds&_{\gamma,conf}X\vert^2_{\gamma}\lesssim r^{-2(\delta+1+\frac{n}{p})}\Vert X\Vert^2_{W^p_{2,\delta}},\\
&\lesssim \kappa^2r^{-2(\delta+1+\frac{n}{p})}\Big\{ \Vert D\tau\Vert_{L^p_{\delta+2}} \Vert\phi_{+}^{\frac{2n}{n-2}}\Vert_{C^0} +  \Vert\omega_1\Vert_{L^p_{\delta+2}} \Vert\phi_{+}^{2\frac{n + 1}{n-2}}\Vert_{C^0} \\
& + \Vert\tilde{F}\Vert_{W^p_{1,\delta+1}}\left(M_f  + \Vert\vartheta\Vert_{W^p_{1,\delta+1}}\right) + \big\Vert \vert\theta_{-}\vert - 2r_{n}\tau\Vert_{W^p_{1-\frac{1}{p}}}\Vert v^{\frac{2n}{n-2}}\Vert_{C^0} + \Vert U\Vert_{W^p_{1-\frac{1}{p}}} \Big\}^2.
\end{align*}
We would like to separate the terms involving $D\tau$ from those not involving it. In order to do this, notice that the above implies that
\begin{align*}
\vert\pounds_{\gamma,conf}X\vert^2_{\gamma}&\lesssim \kappa^2r^{-2(\delta+1+\frac{n}{p})}\Big\{ \Vert D\tau\Vert^2_{L^p_{\delta+2}} \Vert\phi_{+}^{\frac{4n}{n-2}}\Vert_{C^0} +  \Big\{ \Vert \tilde{q}\Vert_{L^p_{\delta+2}}\Vert\tilde{F}\Vert_{W^{p}_{1,\delta+1}}\Vert\phi_{+}^{2\frac{n+2}{n-2}}\Vert_{C^0} \\
& + \Vert\tilde{F}\Vert_{W^p_{1,\delta+1}}\Vert\vartheta\Vert_{W^p_{1,\delta+1}}  + \Vert\omega_1\Vert_{L^p_{\delta+2}} \Vert\phi_{+}^{2\frac{n + 1}{n-2}}\Vert_{C^0} + \Vert\tilde{F}\Vert_{W^p_{1,\delta+1}}\Vert E_{\hat{\nu}}\Vert_{W^p_{1-\frac{1}{p}}}\\
& + \Vert \vert\theta_{-}\vert - 2r_{n}\tau\Vert_{W^p_{1-\frac{1}{p}}}\Vert v^{\frac{2n}{n-2}}\Vert_{C^0} + \Vert U\Vert_{W^p_{1-\frac{1}{p}}} \Big\}^2\Big\}.
\end{align*}
This estimate implies that
\begin{align*}
\begin{split}
\!\!\!\!\!\!\vert\tilde{K}(X)\vert^2_{\gamma}&\lesssim r^{-2(\delta+1+\frac{n}{p})}\Big\{\kappa^2\Big\{ \Vert D\tau\Vert^2_{L^p_{\delta+2}} \Vert\phi_{+}^{\frac{4n}{n-2}}\Vert_{C^0} +  \Big\{ \Vert\tilde{q}\Vert_{L^{p}_{\delta+2}}\Vert\tilde{F}\Vert_{W^{p}_{1,\delta+1}}\Vert\phi_{+}^{2\frac{n+2}{n-2}}\Vert_{C^0} \\
& + \Vert\tilde{F}\Vert_{W^p_{1,\delta+1}}\Vert\vartheta\Vert_{W^p_{1,\delta+1}}  + \Vert\omega_1\Vert_{L^p_{\delta+2}} \Vert\phi_{+}^{2\frac{n + 1}{n-2}}\Vert_{C^0} + \Vert\tilde{F}\Vert_{W^p_{1,\delta+1}}\Vert E_{\hat{\nu}}\Vert_{W^p_{1-\frac{1}{p}}}\\
& + \Vert \vert\theta_{-}\vert - 2r_{n}\tau\Vert_{W^p_{1-\frac{1}{p}}}\Vert v^{\frac{2n}{n-2}}\Vert_{C^0} + \Vert U\Vert_{W^p_{1-\frac{1}{p}}} \Big\}^2\Big\} + \Vert U\Vert^2_{W^p_{1,\delta+1}}\Big\},
\end{split}
\end{align*}
where, again, we have used that, if $U\in W^p_{1,\delta+1}$, then $\vert U\vert_e\lesssim r^{-(\delta+1+\frac{n}{p})}\Vert U\Vert_{W^p_{1,\delta+1}}$. Now, going back to the estimate of the hamiltonian constraint, we get that
\begin{align*}
\mathcal{H}^{1}(\phi_{+})&\leq -\beta\Lambda  +  C_nr^{-2(\delta+1+\frac{n}{p})}\Vert D\tau\Vert^2_{L^p_{\delta+2}} \Vert\phi_{+}^{\frac{4n}{n-2}}\Vert_{C^0} \phi_{+}^{-\frac{3n-2}{n-2}} \\
&+  C_nr^{-2(\delta+1+\frac{n}{p})}\Big\{ \Vert\tilde{q}\Vert_{L^{p}_{\delta+2}}\Vert\tilde{F}\Vert_{W^{p}_{1,\delta+1}}\Vert\phi_{+}^{2\frac{n+2}{n-2}}\Vert_{C^0} + \Vert\omega_1\Vert_{L^p_{\delta+2}} \Vert\phi_{+}^{2\frac{n + 1}{n-2}}\Vert_{C^0}\\
& + \Vert\tilde{F}\Vert_{W^p_{1,\delta+1}}\Vert\vartheta\Vert_{W^p_{1,\delta+1}} + \Vert\tilde{F}\Vert_{W^p_{1,\delta+1}}\Vert E_{\hat{\nu}}\Vert_{W^p_{1-\frac{1}{p}}} + \Vert \vert\theta_{-}\vert - 2r_{n}\tau\Vert_{W^p_{1-\frac{1}{p}}}\Vert v^{\frac{2n}{n-2}}\Vert_{C^0}\\
& + \Vert U\Vert_{W^p_{1-\frac{1}{p}}}  \Big\}^2\phi_{+}^{-\frac{3n-2}{n-2}} + C_nr^{-2(\delta+1+\frac{n}{p})}\Vert U\Vert^2_{W^p_{1,\delta+1}}\phi_{+}^{-\frac{3n-2}{n-2}} + 2c_n\epsilon_1\phi_{+}^{\frac{n+2}{n-2}} \\
& + 2c_n\epsilon_2(f)\phi_{+}^{-3} + 2c_n\epsilon_3\phi_{+}^{\frac{n-6}{n-2}}.
\end{align*}
Also, recall that, for any $f\in B_{M_f}$ it holds that
\begin{align*}
\begin{split}
2\epsilon_2(f)&=\vert df + \vartheta\vert^2_{\gamma}\leq r^{-2(\delta+1+\frac{n}{2})}\Vert df + \vartheta\Vert^2_{W^p_{1,\delta+1}},\\
&\leq 2r^{-2(\delta+1+\frac{n}{2})}\left(2C^2\Vert \tilde{q}\Vert^2_{L^p_{\delta+2}}\Vert{\phi_{+}}^{\frac{2n}{n-2}}\Vert^2_{C^0} + 2C^2\Vert E_{\hat{\nu}}\Vert^2_{W^p_{1-\frac{1}{p}}} + \Vert\vartheta\Vert^2_{W^p_{1,\delta+1}} \right).
\end{split}
\end{align*}
Thus, in general, we get that
\begin{align*}
\!\!\!\!\mathcal{H}&^1(\phi_{+})\leq -\beta\Lambda  + C_nr^{-2(\delta+1+\frac{n}{p})}\Vert D\tau\Vert^2_{L^p_{\delta+2}} \Vert\phi_{+}^{\frac{4n}{n-2}}\Vert_{C^0} \phi_{+}^{-\frac{3n-2}{n-2}} \\
&+  C_nr^{-2(\delta+1+\frac{n}{p})}\Big\{ \Vert\tilde{q}\Vert_{L^{p}_{{\delta+2}}}\Vert\tilde{F}\Vert_{W^{p}_{1,\delta+1}}\Vert\phi_{+}^{2\frac{n+2}{n-2}}\Vert_{C^0}  + \Vert\omega_1\Vert_{L^p_{\delta+2}} \Vert\phi_{+}^{2\frac{n + 1}{n-2}}\Vert_{C^0}\\
& + \Vert\tilde{F}\Vert_{W^p_{1,\delta+1}}\Vert\vartheta\Vert_{W^p_{1,\delta+1}}  + \Vert\tilde{F}\Vert_{W^p_{1,\delta+1}}\Vert E_{\hat{\nu}}\Vert_{W^p_{1-\frac{1}{p}}} + \Vert \vert\theta_{-}\vert - 2r_{n}\tau\Vert_{W^p_{1-\frac{1}{p}}}\Vert v^{\frac{2n}{n-2}}\Vert_{C^0}\\
&  + \Vert U\Vert_{W^p_{1-\frac{1}{p}}}  \Big\}^2\phi_{+}^{-\frac{3n-2}{n-2}} + C_nr^{-2(\delta+1+\frac{n}{p})}\Vert U\Vert^2_{W^p_{1,\delta+1}}\phi_{+}^{-\frac{3n-2}{n-2}} + 2c_n\epsilon_1\phi_{+}^{\frac{n+2}{n-2}} \\
& + 2c_nr^{-2(\delta+1+\frac{n}{2})}\left(2C^2\left(\Vert\tilde{q}\Vert^2_{L^p_{\delta+2}}\Vert{\phi_{+}}^{\frac{2n}{n-2}}\Vert_{C^0} + \Vert E_{\hat{\nu}}\Vert^2_{W^p_{1-\frac{1}{p}}}\right) + \Vert\vartheta\Vert^2_{W^p_{1,\delta+1}} \right)\phi_{+}^{-3} \\
&+ 2c_n\epsilon_3\phi_{+}^{\frac{n-6}{n-2}},
\end{align*}
In order to construct a far from CMC supersolution, notice that
\begin{align*}
\Vert\phi_{+}^{\frac{4n}{n-2}}\Vert_{C^0} \phi_{+}^{-\frac{3n-2}{n-2}}=\beta^{\frac{n+2}{n-2}}\Vert(\omega + \varphi_{+})^{\frac{4n}{n-2}}\Vert_{C^0} (\omega + \varphi_{+})^{-\frac{3n-2}{n-2}}.
\end{align*}
Near infinity we have that $\Lambda= r^{-2(\delta+1+\frac{n}{p})}$. Thus, if we pick $\beta$ sufficiently small, independently of how large $\Vert D\tau\Vert_{L^{p}_{\delta+2}}$ might be, we get that
\begin{align*}
-\beta\Lambda  +  C_nr^{-2(\delta+1+\frac{n}{p})}\Vert D\tau\Vert^2_{L^p_{\delta+2}}\Vert\phi_{+}^{\frac{4n}{n-2}}\Vert_{C^0} \phi_{+}^{-\frac{3n-2}{n-2}}< 0.
\end{align*} 
In fact, supposing $\Vert D\tau\Vert\neq 0$, we just need to satisfy
\begin{align*}
0<\beta^{\frac{n+2}{n-2}-1}=\beta^{\frac{4}{n-2}}<\inf_M\frac{\Lambda r^{2(\delta+1+\frac{n}{p})}}{C_n\Vert D\tau\Vert^2_{L^p_{\delta+2}}\Vert(\omega + \varphi_{+})^{\frac{4n}{n-2}}\Vert_{C^0} }(\omega + \varphi_{+})^{\frac{3n-2}{n-2}}
\end{align*}
Finally, notice that, if we consider $\mu\in W^p_{1,2\delta+2+\frac{n}{p}}$ being a non-negative function, then
\begin{align}\label{supersol2}
\begin{split}
\!\!\!\!\epsilon_1&=\mu\left( 1 + \vert\tilde{u}\vert^2_{\gamma} \right)\lesssim r^{-2(\delta+1+\frac{n}{p})}\Vert\mu\Vert_{W^p_{1,2\delta+2+\frac{n}{p}}}\left( 1 + \vert\tilde{u}\vert^2_{\gamma} \right),\\
\epsilon_3&=\frac{1}{4}\vert\tilde{F}\vert^2_{\gamma}\lesssim r^{-2(\delta+1+\frac{n}{p})}\Vert\tilde{F}\Vert^2_{W^p_{1,\delta+1}},\\
\Vert{\omega_1}\Vert_{L^p_{\delta+2}}&\leq \Vert\mu\Vert_{L^p_{\delta+2}}\Vert\left( 1 + \vert\tilde{u}\vert^2_{\gamma} \right)^{\frac{1}{2}}\tilde{u}\Vert_{C^0}.
\end{split}
\end{align}
Thus, we finally get that
\begin{align*}
\mathcal{H}&^{1}(\phi_{+})\leq -\beta\Lambda  + C_nr^{-2(\delta+1+\frac{n}{p})}\Vert D\tau\Vert^2_{L^p_{\delta+2}} \Vert \phi_{+}^{\frac{4n}{n-2}}\Vert_{C^0} \phi_{+}^{-\frac{3n-2}{n-2}} \\
&+  C_nr^{-2(\delta+1+\frac{n}{p})}\Big\{ \Vert\tilde{q}\Vert_{L^{p}_{\delta+2}}\Vert\tilde{F}\Vert_{W^{p}_{1,\delta+1}}\Vert\phi_{+}^{2\frac{n+2}{n-2}}\Vert_{C^0}  + \Vert\omega_1\Vert_{L^p_{\delta+2}} \Vert\phi_{+}^{2\frac{n + 1}{n-2}}\Vert_{C^0}\\
& + \Vert\tilde{F}\Vert_{W^p_{1,\delta+1}}\Vert\vartheta\Vert_{W^p_{1,\delta+1}}  + \Vert\tilde{F}\Vert_{W^p_{1,\delta+1}}\Vert E_{\hat{\nu}}\Vert_{W^p_{1-\frac{1}{p}}} + \Vert \vert\theta_{-}\vert - 2r_{n}\tau\Vert_{W^p_{1-\frac{1}{p}}}\Vert v^{\frac{2n}{n-2}}\Vert_{C^0} \\
& + \Vert U\Vert_{W^p_{1-\frac{1}{p}}}  \Big\}^2\phi_{+}^{-\frac{3n-2}{n-2}} + C_nr^{-2(\delta+1+\frac{n}{p})}\Vert U\Vert^2_{W^p_{1,\delta+1}}\phi_{+}^{-\frac{3n-2}{n-2}}  \\
& + C^{(1)}_nr^{-2(\delta+1+\frac{n}{p})}\Vert\mu\Vert_{W^p_{1,2\delta+2+\frac{n}{p}}}\left( 1 + \vert \tilde{u}\vert^2_{\gamma} \right)\phi_{+}^{\frac{n+2}{n-2}} + C^{(3)}_nr^{-2(\delta+1+\frac{n}{p})}\Vert\tilde{F}\Vert^2_{W^p_{1,\delta+1}}\phi_{+}^{\frac{n-6}{n-2}} \\
&+ 2c_nr^{-2(\delta+1+\frac{n}{2})}\left(2C^2\left(\Vert\tilde{q}\Vert^2_{L^p_{\delta+2}}\Vert{\phi_{+}}^{\frac{2n}{n-2}}\Vert_{C^0} + \Vert E_{\hat{\nu}}\Vert^2_{W^p_{1-\frac{1}{p}}}\right) + \Vert\vartheta\Vert^2_{W^p_{1,\delta+1}} \right)\phi_{+}^{-3}, 
\end{align*}
Notice that the first line in the above estimate is negative from our choice of $\beta$, and since $\Lambda=r^{-2(\delta+1+\frac{n}{p})}$ near infinity, then all the terms in the other lines decay at this rate or faster. Thus, under smallness assumptions on $\Vert\tilde{q}\Vert_{L^p_{\delta+2}},\Vert\tilde{F}\Vert_{W^p_{1,\delta+1}}, \Vert\mu \Vert_{W^p_{1,\delta+1}},\Vert v\Vert_{C^0},\Vert E_{\hat{\nu}}\Vert_{W^p_{1-\frac{1}{p}}}$, $\Vert\vartheta\Vert_{W^p_{1,\delta+1}}$ and $\Vert U\Vert_{W^p_{1,\delta+1}}$, we can guarantee that the right-hand side of the above inequality is negative. Furthermore, since $- \beta\lambda - (a_n\vert \theta_{-}\vert - d_n\tau)\phi_{+}^{\frac{n}{n-2}}<0$ on $\partial M$, under smallness assumptions on $v\in W^p_{1-\frac{1}{p}}$, we get that $\mathcal{H}^{2}_{f,X}(\phi_{+})\leq 0$ for any $f\in B_{M_f}$ and any $X\in B_{M_X}$. All this implies that, under the present assumptions and with the choices of balls $B_{M_f}$ and $B_{M_X}$ made above, $\phi_{-}$ and $\phi_{+}$ form a compatible pair of strong global supersolution for the Hamiltonian constraint. 

Let us now show that the map $\mathcal{F}_{a,b}$ is invariant on the balls $B_{M_f}$ and $B_{M_X}$ for all $\phi\in [\phi_{-},\phi_{+}]_{C^0}$. That is, we need to see that if $(\phi,f,X)=\mathcal{F}_{a,b}(\bar{\phi},\bar{f},\bar{X})$ with $(\bar{\phi},\bar{f},\bar{X})\in [\phi_{-},\phi_{+}]_{C^0}\times B_{M_f}\times B_{M_X}$, then $f\in B_{M_f}$ and $X\in B_{M_{X}}$. From the construction of $\mathcal{F}_{a,b}$, the elliptic estimates (\ref{f-estimate})-(\ref{Y-estimate}) and the definitions of $M_f$ and $M_X$, we straightforwardly get that
\begin{align*}
\Vert f\Vert_{W^p_{2,\delta}}\leq M_f, \;\; ; \;\;\Vert X\Vert_{W^p_{2,\delta}}\leq M_X,
\end{align*}
which proves the claim.

Finally, notice that the construction of $\phi_{+}$ does not depend on $v\in W^p_{1-\frac{1}{p}}(\partial M)$. Therefore, we can construct $\phi_{+}$ as above, then define $v\doteq \phi_{+}\vert_{\partial M}$ and continue as above with minor changes. More explicitly, in this case, the smallness condition for $v$ in $H^{1}_{f,X}(\phi_{+})\leq 0$ can be replaced by a smallness condition on $\big\Vert \vert\theta_{-}\vert - 2r_{n}\tau\big\Vert_{W^p_{1-\frac{1}{p}}}$, and concerning the boundary condition $\mathcal{H}^{2}_{f,X}(\phi_{+})\leq 0$ in the above proof, notice that the choice $v\doteq \phi_{+}\vert_{\partial M}$ implies that 
\begin{align*}
\begin{split}
\mathcal{H}^{2}_{f,X}(\phi_{+})&=- \beta\lambda - (a_n\vert\theta_{-}\vert - d_n\tau)\phi_{+}^{\frac{n}{n-2}} + a_n\left( \frac{1}{2}\vert\theta_{-}\vert - r_{n}\tau\right)v^{\frac{2n}{n-2}}\phi_{+}^{-\frac{n}{n-2}},\\
&=- \beta\lambda  - \frac{1}{2} a_n \vert\theta_{-}\vert \phi_{+}^{\frac{n}{n-2}}<0,
\end{split}
\end{align*}
which implies that this choice is admissible.
\end{proof}

Appealing to the above construction, we have the follow far from CMC existence result for (\ref{Conformal-EMSystem.1})-(\ref{boundcondsystems})-(\ref{ElectricPotentialEq}).

\begin{thm}\label{Yamabepostiveexistence}
Let $(M,\gamma)$ be a $W^p_{2,\delta}$-Yamabe positive AE manifold with compact boundary $\partial M$, with $p>n$, $n\geq 3$, and $\delta>-\frac{n}{p}$. Also, consider $\tau\in W^p_{1,\delta+1}(M)$, $U\in W^p_{1,\delta+1}(M,T^0_2M)$, $\tilde{F}\in W^p_{1,\delta+1}(M,\Lambda^2TM)$, $\tilde{u}\in W^{p}_{1,\delta}(M,TM)$, $\mu\in W^p_{1,2\delta+2+\frac{n}{p}}(M), \tilde{q}\in L^p_{\delta+2}(M),\vartheta\in W^p_{1,\delta+1}(M, T^{*}M),\theta_{-}\in W^p_{1-\frac{1}{p}}(\partial M)$ and $E_{\hat{\nu}}\in W^p_{1-\frac{1}{p}}(\partial M)$. If $\frac{1}{2}\vert\theta_{-}\vert- c_n\tau\geq 0$ and $\theta_{-}<0$ along $\partial M$ and $U,\tilde{F},\mu, \tilde{q},E_{\hat{\nu}},\vartheta$ and $\big\Vert \vert\theta_{-}\vert - 2r_{n}\tau\big\Vert_{W^p_{1-\frac{1}{p}}}$ are sufficiently small, then, there is a $W^p_{2,\delta}$-solution to the conformal problem (\ref{Conformal-EMSystem.1})-(\ref{boundcondsystems}). 
\end{thm}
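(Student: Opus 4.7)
The plan is to assemble the theorem as a direct application of the general machinery developed in the paper: Theorem \ref{conformalexistence} supplies existence provided one has (i) the system is of conformal Einstein type, (ii) compatible strong global barriers exist for the Lichnerowicz equation, and (iii) the solution map is invariant on the balls associated with the barriers. The marginally trapped condition is then obtained by the retroactive choice $v \doteq \phi_{+}|_{\partial M}$ in the spirit of Theorem \ref{Yamabepositivemarginallytrapped}.

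First, I would verify that (\ref{fixedpointsys})--(\ref{boundcondsystems}) is a conformal Einstein-type system in the sense of Definition \ref{einsteintypesystems}. The mapping properties of the coefficients together with the boundedness and compactness properties (\ref{boundednessprop})--(\ref{compactnessprop}) are exactly what Lemma \ref{EMcompactness} provides, and the functional hypotheses imposed on $\mu,\tilde{q},\tilde{F},U,\tau,\vartheta,\theta_{-},E_{\hat{\nu}}$ in the statement are the ones required by that lemma. The Fredholm/elliptic estimate (\ref{Fredholmestimate}) for the linear part follows from Propositions \ref{isomorphismthm}--\ref{injectivityestimate}, and the isomorphism property of the shifted operator $\mathcal{P}_{a,b}$ holds because $-\frac{n}{p}<\delta<n-2-\frac{n}{p}$.

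Second, since $\gamma$ is Yamabe positive and the smallness hypotheses on $\mu,\tilde{q},\tilde{F},U$ match those of Lemma \ref{Yamabeposbarriers}, that lemma yields a compatible pair of strong global sub and supersolutions $0<\phi_{-}<\phi_{+}$ asymptotic to harmonic functions $\omega_{\pm}$ with $\omega_{\pm}\to A^{\pm}_j$ on each end, together with the invariance of the solution map on balls $B_{M_f}\times B_{M_X}$ for any $\bar{\phi}\in[\phi_{-},\phi_{+}]_{C^0}$; this last invariance is a direct consequence of the a priori estimates (\ref{f-estimate})--(\ref{Y-estimate}) and the very choice of radii (\ref{radii}). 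Fixing a harmonic function $\omega$ with $\omega \to A_j$, $A_j^{-}\leq A_j \leq A_j^{+}$, and applying Theorem \ref{conformalexistence} produces a $W^p_{2,\delta}$ solution $(\phi=\omega+\varphi,f,X)$ with $\phi>0$.

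Third, to realize the marginally trapped boundary conditions (\ref{boundarycond3}), I would retroactively set $v \doteq \phi_{+}|_{\partial M}$, exactly as in Theorem \ref{Yamabepositivemarginallytrapped}. This choice automatically gives $\phi|_{\partial M}\leq \phi_{+}|_{\partial M}=v$ because $\phi_{+}$ is an a priori barrier for the solution, and it forces the boundary supersolution inequality $\mathcal{H}^2_{f,X}(\phi_{+})=-\beta\lambda-\tfrac{1}{2}a_n|\theta_{-}|\phi_{+}^{n/(n-2)}\leq 0$ unconditionally. The hypothesis $\frac{1}{2}|\theta_{-}|-r_n\tau\geq 0$ then supplies the other half of (\ref{boundarycond3}). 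The main obstacle to keep in mind is bookkeeping: the radii $M_f,M_X$ in (\ref{radii}) depend on $\phi_{+}$ and hence on $v$ if one picks $v=\phi_{+}|_{\partial M}$, so one must check there is no circular smallness condition. The crucial observation, already recorded between Lemma \ref{Yamabeposbarriers} and Theorem \ref{Yamabepositivemarginallytrapped}, is that this choice makes the $v$-dependent term in the supersolution inequality drop out entirely, so no smallness condition on $v$ itself is needed; the only extra hypothesis inherited from Theorem \ref{Yamabepositivemarginallytrapped} is the smallness of $|||\theta_{-}| - 2r_n\tau||_{W^p_{1-\frac{1}{p}}}$, which is included in the statement. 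Hence all smallness thresholds are mutually consistent and the result follows.
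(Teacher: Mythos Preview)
Your proposal is correct and follows essentially the same route as the paper's own proof, which simply fixes a harmonic function $\omega$, invokes Theorem \ref{Yamabepositivemarginallytrapped} (itself built on Lemma \ref{Yamabeposbarriers}) to obtain the compatible strong global barriers together with the invariance of the solution map and the marginally trapped choice $v=\phi_{+}|_{\partial M}$, and then concludes via Theorem \ref{conformalexistence}. Your version is more explicit in unpacking the role of Lemma \ref{EMcompactness} and the bookkeeping around the choice $v=\phi_{+}|_{\partial M}$, but the argument is the same.
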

\begin{proof}
First of all, from Lemma \ref{EMcompactness} we know that our system is conformal Einstein-type elliptic. Thus, let us begin by fixing a harmonic function $\omega$ tending to positive constants $\{A_j \}_{j=1}^N$ on each end $\{E_j\}_{j=1}^N$ respectively. Then, from Lemma \ref{Yamabeposbarriers} we know that under our smallness assumptions we can produce a compatible pair of strong global barriers $\phi_{\pm}$ asymptotic to $\alpha\omega$ and $\beta\omega$ for sufficiently small constants $\alpha\leq \beta$. Also, from the above lemma, we can fix the function $v\doteq \phi_{+}\vert_{\partial M}$ in (\ref{boundcondsystems}), and we know that the map $\mathcal{F}_{a,b}$ associated to the shifted system (\ref{shiftedsystem})-(\ref{shiftedboundarycond}) is invariant on the balls $B_{M_f}$ and $B_{M_X}$ in $W^p_{2,\delta}$. Then, from Theorem \ref{conformalexistence}, we know that we can get a solution to (\ref{Conformal-EMSystem.1})-(\ref{boundcondsystems})-(\ref{ElectricPotentialEq}) as long as $v\doteq \phi_{+}\vert_{\partial M}\geq \phi\vert_{\partial M}$. But from Theorem \ref{conformalexistence} this conditions follows trivially since the solution $\phi$ satisfies a uniform bound given by $\phi_{+}$.
\end{proof}

We will now produce strong global barriers for Lichnerowicz equation in the cases where $\gamma$ is not Yamabe positive. Again, we will follow Holst \textit{et al.} \cite{Holst2} closely in this construction. 

\begin{lemma}\label{Yamabenonposbarriers}
Let $(M,\gamma)$ be a $W^p_{2,\delta}$-AE manifold, with $p>n$, $n\geq 3$ and $-\frac{n}{p}<\delta<n-2-\frac{n}{p}$, suppose that $b_n\tau^2+c_nR_{\gamma}\geq 0$ and that $\frac{1}{2}\vert\theta_{-}\vert-r_n\tau\geq 0$, $v>0$ and $a_n(H+\vert\theta_{-}\vert) - d_n\tau\geq 0$ along $\partial M$. Assuming the same functional hypotheses as in Lemma \ref{Yamabeposbarriers} and under smallness assumptions on $\mu,\tilde{q},\tilde{F},\mu,\theta, E_{\hat{\nu}}$ and $\Vert d\tau\Vert_{L^p_{\delta+2}}$, there is a compatible pair of strong global barriers $0<\phi_{-}\leq\phi_{+}$ for the Lichnerowicz equation associated to charged dust, such that $\phi_{\pm}-\omega_{\pm}\in W^p_{2,\delta}$ for some harmonic functions $\omega_{\pm}$, satisfying Neumann boundary conditions, which tend to positive constants $\{A_j^{\pm}\}_{j=1}^N$ on each end. Also, the map $\mathcal{F}_{a,b}$ associated to the shifted system is invariant on balls $B_{M_f},B_{M_X}\subset W^p_{2,\delta}$ where the strong global barriers work for any $(a,b)\in L^p_{\delta+2}(M)\times W^{p}_{1-\frac{1}{p}}(\partial M)$ such that $a,b\geq 0$. Furthermore, if $\big\Vert \vert\theta_{-}\vert - 2r_{n}\tau\big\Vert_{W^p_{1-\frac{1}{p}}}$ is sufficiently small and replacing $a_n(H+\vert\theta_{-}\vert) - d_n\tau\geq 0$ by $H+\frac{\vert\theta_{-}\vert}{2}\geq 0$, the choice $v\doteq \phi_{+}\vert_{\partial M}$ is compatible with this construction.
\end{lemma} 
\begin{proof}
We will first produce the strong global subsolution. In order to do this, consider the equation
\begin{align}\label{eq1}
\begin{split}
\Delta_{\gamma}u &= c_n R_{\gamma}u + b_n\tau^2u^{\frac{n+2}{n-2}},\\
-\hat{\nu}(u) &= a_nHu + (a_n\vert\theta_{-}\vert - d_n\tau)u^{\frac{n}{n-2}} \text{ on } \partial M.
\end{split}
\end{align}
Following \cite[Theorem 5.7]{Holst2}, we can construct a positive solution $u\in W^p_{2,loc}$ to the above equation, which is asymptotic to some harmonic function $\omega_{-}$ which tends to positive constants $\{A^{-}_j\}_{j=1}^N$ on each end and satisfies Neumann boundary conditions. Then, consider $\phi_{-}\doteq \alpha u$ with $\alpha>0$ a constant to be fixed. Now, consider

\begin{align*}
\mathcal{H}^{1}_{f,X}(\phi_{-})
&= \left(\alpha - \alpha^{\frac{n+2}{n-2}}\right)b_n\tau^2u^{\frac{n+2}{n-2}} + 2c_n\epsilon_1\phi_{-}^{\frac{n+2}{n-2}} + c_n \vert\tilde{K}(X)\vert^2_{\gamma}\phi_{-}^{-\frac{3n-2}{n-2}}  \\
& + 2c_n\epsilon_2(f)\phi_{-}^{-3} + 2c_n\epsilon_3\phi_{-}^{\frac{n-6}{n-2}},
\end{align*} 
thus, if $\alpha<1$, we get that for any $f,X\in W^p_{2,\delta}$, it holds that $\mathcal{H}^{1}_{f,X}(\phi_{-})\geq 0$. Similarly, for such a choice of $\alpha<1$, we have that
\begin{align*}
\mathcal{H}^2_{f,X}(\phi_{-})&=(\alpha - \alpha^{\frac{n}{n-2}})(a_n\vert\theta_{-}\vert - d_n\tau)u^{\frac{n}{n-2}} + a_n\left( \frac{1}{2}\vert\theta_{-}\vert - r_{n}\tau\right)v^{\frac{2n}{n-2}}\phi_{-}^{-\frac{n}{n-2}}\geq 0,
\end{align*}
for any $f\in B_{M_f}$ and $X\in B_{M_X}$. Finally, since $\phi_{-}>0$ is bounded on $M$, let us fix $\alpha>0$ being sufficiently small so that $\phi_{-}\leq 1$.  

Now, in order to present a global supersolution we will proceed as follows. Let $r$ be a smooth positive function on $M$, which, in the ends, near infinity, agrees with the euclidean radial function $\vert x\vert$. Then, define $B=\bar{B}r^{-(\delta+\frac{n}{p})}r^{-(\delta+2+\frac{n}{p})}\in L^p_{\delta+2}$, where $\bar{B}$ is a positive constant to be fixed later, and consider the equation
\begin{align}\label{B-eq}
\begin{split}
\Delta_{\gamma}\phi_B&= - B\phi_B^{-\frac{3n-2}{n-2}},\\
-\hat{\nu}(\phi_B)&=- a_nB \text{ on } \partial M. 
\end{split}
\end{align}
and notice that $v_{-}=1$ is a subsolution. Define the function $u\in W^p_{2,\delta}$ as the unique solution to
\begin{align*}
\Delta_{\gamma} u&= - B,\\
-\hat{\nu}(u)&=- a_nB \text{ on } \partial M.
\end{align*}
Then, define $v_{+}\doteq\beta(1+u)$ and notice that $\Delta_{\gamma}v_{+}\leq 0$, $\hat{\nu}(v_{+})\vert_{\partial M}=-a_n\beta B\vert_{\partial M}\leq 0$ and $v_{+}$ tends to $\beta>0$ at infinity, thus, because of the maximum principles, $v_{+}>0$. Furthermore, since $u\in C^0$ is bounded, we can pick $\beta$ sufficiently large so that $v_{+}>1$. In particular, consider $\beta>1$ sufficiently large such that $v_{+}>v_{-}$. Then, we get that 
\begin{align*}
\Delta_{\gamma}v_{+}&= - \beta B \leq   -\beta Bv_{+}^{-\frac{3n-2}{n-2}}\leq -Bv_{+}^{-\frac{3n-2}{n-2}},\\
-\hat{\nu}(v_{+})&= -\beta a_nB\leq - a_n B,
\end{align*}  
showing that $v_{+}$ is a supersolution of (\ref{B-eq}). Then, from \cite[Theorem A.4]{Holst2}, we know that for any harmonic function $\omega_{+}$ satisfying Newman boundary conditions, and tending to constants $\{A_j\}_{j=1}^N$ on each end satisfying $1\leq A_j\leq\beta$, there is a positive solution to the equation (\ref{B-eq}) which is asymptotic to $\omega_{+}$ and trapped between $v_{-}$ and $v_{+}$. Let us call such a solution by $\phi_{+}=\omega_{+}+\varphi_{+}\geq 1\geq \phi_{-}$ with $\varphi_{+}\in W^p_{2,\delta}$ and show that we can build strong global barriers using $\phi_{+}$. Notice that
\begin{align*}
\mathcal{H}^{1}_{f,X}(\phi_{+})
&\leq - c_n R_{\gamma} -  b_n\tau^2 + \left(c_n \vert\tilde{K}(X)\vert^2_{\gamma} - B\right)\phi_{+}^{-\frac{3n-2}{n-2}}  + 2c_n\epsilon_2(f) \phi_{+}^{-3} \\
&+ 2c_n\epsilon_1\phi_{+}^{\frac{n+2}{n-2}} + 2c_n\epsilon_3\phi_{+}^{\frac{n-6}{n-2}}.
\end{align*}
Notice that from our hypotheses $- c_n R_{\gamma} -  b_n\tau^2\leq 0$. Furthermore, we know that there are constants $C_1,C_2>0$ such that
\begin{align*}
c_n \vert\tilde{K}(X)\vert^2_{\gamma}&\leq C_1 r^{-2(\delta+1+\frac{n}{p})}\left(\Vert X\Vert^2_{W^{p}_{2,\delta}} + \Vert U\Vert^2_{W^{p}_{1,\delta+1}} \right),\\
2c_n\epsilon_2(f)&\leq C_2 r^{-2(\delta+1+\frac{n}{p})}\left(\Vert f\Vert^2_{W^{p}_{2,\delta}} + \Vert\vartheta\Vert^2_{W^{p}_{1,\delta+1}} \right).
\end{align*}
Thus, if we consider balls $B_{M_X}$ and $B_{M_f}$ in $W^p_{2,\delta}$ of radii $M_{X}$ and $M_f$ respectively, we get that for any $X\in B_{M_{X}}$ and $f\in B_{M_{f}}$ it holds that
\begin{align*}
c_n \vert\tilde{K}(X)\vert^2_{\gamma}&\leq C_1 r^{-2(\delta+1+\frac{n}{p})}\left(M^2_{X} + \Vert U\Vert^2_{W^{p}_{1,\delta+1}} \right),\\
2c_n\epsilon_2(f)&\leq C_2 r^{-2(\delta+1+\frac{n}{p})}\left(M^2_f + \Vert\vartheta\Vert^2_{W^{p}_{1,\delta+1}} \right).
\end{align*}
Let us first fix these radii by assuming that the have the same form as in (\ref{radii}), but now with our present functions $\phi_{-}$ and $\phi_{+}$. Then, the constant $\bar{B}$ is to be chosen satisfying
\begin{align*}
\bar{B}>&C_1\left(M^2_{X} + \Vert U\Vert^2_{W^{p}_{1,\delta+1}} \right).
\end{align*}
This choice would guarantee that $B>c_n \vert\tilde{K}(X)\vert^2_{\gamma}$ for any $X\in B_{M_X}$ and $f\in B_{M_f}$ with $B\in L^p_{\delta+2}$. Thus, let us write the above condition explicitly:
\begin{align}\label{B-bound}
\begin{split}
\!\!\!\!\bar{B}>&C_1\Big\{\kappa^2\Big\{\Vert D\tau\Vert_{L^p_{\delta+2}} \Vert\phi_{+}^{\frac{2n}{n-2}}\Vert_{C^0} +  \Vert\omega_1\Vert_{L^p_{\delta+2}} \Vert\phi_{+}^{2\frac{n + 1}{n-2}}\Vert_{C^0} +  C\Vert\tilde{q}\Vert_{L^p_{\delta+2}}\Vert\tilde{F}\Vert_{W^p_{1,\delta+1}}\Vert\phi_{+}^{\frac{2n}{n-2}}\Vert_{C^0}\\
& + C\Vert E_{\hat{\nu}}\Vert_{W^p_{1-\frac{1}{p}}}\Vert\tilde{F}\Vert_{W^p_{1,\delta+1}} + \Vert\tilde{F}\Vert_{W^p_{1,\delta+1}}\Vert\vartheta\Vert_{W^p_{1,\delta+1}} + \Vert \vert\theta_{-}\vert - 2r_{n}\tau\Vert_{W^p_{1-\frac{1}{p}}}\Vert v^{\frac{2n}{n-2}}\Vert_{C^0} \\
& + \Vert U\Vert_{W^p_{1-\frac{1}{p}}}\Big\}^2 + \Vert U\Vert^2_{W^{p}_{1,\delta+1}}\Big\}.
\end{split}
\end{align}
Notice that since $\phi_{+}$ depends on $B$, we cannot choose $\bar{B}$ without affecting the right hand side of the above expression. Nevertheless, we can make the choice of $\bar{B}$ satisfying
\begin{align}\label{B-bound2}
\begin{split}
\bar{B}>&C_1\Big\{\kappa^2\Big\{C\Vert E_{\hat{\nu}}\Vert_{W^p_{1-\frac{1}{p}}}\Vert\tilde{F}\Vert_{W^p_{1,\delta+1}} + \Vert\tilde{F}\Vert_{W^p_{1,\delta+1}}\Vert\vartheta\Vert_{W^p_{1,\delta+1}}  \\
&+ \Vert \vert\theta_{-}\vert - 2r_{n}\tau\Vert_{W^p_{1-\frac{1}{p}}}\Vert v^{\frac{2n}{n-2}}\Vert_{C^0} + \Vert U\Vert_{W^p_{1-\frac{1}{p}}}\Big\}^2 + \Vert U\Vert^2_{W^{p}_{1,\delta+1}}\Big\}\doteq \bar{B}_1,
\end{split}
\end{align}
which only depends on known data and then, using this choice, if $\Vert D\tau\Vert_{L^p_{\delta+2}}, \Vert\omega_1\Vert_{L^p_{\delta+2}}$ and $\Vert\tilde{q}\Vert_{L^p_{\delta+2}}$
are small enough, we can guarantee that (\ref{B-bound}) holds. Notice that the smallness condition on $\omega_1$ is actually translated into a smallness condition on $\mu$. Taking into account these considerations, we get that for any $X\in B_{M_X}$ and $f\in B_{M_f}$ the following holds.
\begin{align*}
\!\!\!\!\mathcal{H}&^{1}_{f,X}(\phi_{+})\leq - c_n R_{\gamma} -  b_n\tau^2 + r^{-2(\delta+1+\frac{n}{p})}\left(C_1 \left(M^2_{X} + \Vert U\Vert^2_{W^{p}_{1,\delta+1}} \right) - \bar{B}_1\right)\phi_{+}^{-\frac{3n-2}{n-2}}\\
&+ C_1r^{-2(\delta+1+\frac{n}{p})}\left(M^2_f + \Vert\vartheta\Vert^2_{W^{p}_{1,\delta+1}} \right)\phi_{+}^{-3} \\
&+ C_n^{(2)}r^{-2(\delta+1+\frac{n}{p})}\Vert \mu\Vert_{W^p_{1,2\delta+2+\frac{n}{p}}}\left( 1 + \vert\tilde{u}\vert^2_{\gamma} \right)\phi_{+}^{\frac{n+2}{n-2}} + C_n^{(3)}r^{-2(\delta+1+\frac{n}{p})}\Vert\tilde{F}\Vert^2_{W^p_{1,\delta+1}}\phi_{+}^{\frac{n-6}{n-2}}.
\end{align*}
The first line in the above relation is strictly negative from our choice of $\bar{B}$, thus, under smallness assumptions on $\mu,\tilde{q},\tilde{F}, E_{\hat{\nu}}$ and $\vartheta$, we get that $\mathcal{H}^{1}_{f,X}(\phi_{+})\leq 0$ for any $f\in B_{M_f}$ and $X\in B_{M_X}$. Similarly, since $\phi_{+}\geq 1$, under our hypotheses it holds that
\begin{align*}
\mathcal{H}^{2}_{f,X}(\phi_{+})&=-a_nB - a_n H \phi_{+}  - (a_n\vert\theta_{-}\vert-d_n\tau)\phi_{+}^{\frac{n}{n-2}} + a_n\left( \frac{1}{2}\vert\theta_{-}\vert - r_{n}\tau\right)v^{\frac{2n}{n-2}}\phi_{+}^{-\frac{n}{n-2}},\\
&\leq -a_nB - (a_n (H  + \vert\theta_{-}\vert)- d_n\tau) + a_n\left( \frac{1}{2}\vert\theta_{-}\vert - r_{n}\tau\right)v^{\frac{2n}{n-2}}.
\end{align*}
Since $-(a_n (H  + \vert\theta_{-}\vert)- d_n\tau)\leq 0$ on $\partial M$, then, if $\bar{B}$ satisfies
\begin{align*}
\bar{B}>\sup_{\partial M}\left(\left( \frac{1}{2}\vert\theta_{-}\vert - r_{n}\tau\right)v^{\frac{2n}{n-2}}r^{2(\delta+1+\frac{n}{p})}\right)\doteq \bar{B}_2,
\end{align*}
we get that $\mathcal{H}^2_{f,X}(\phi_{+})\leq 0$ for any $f\in B_{M_f}$ and any $X\in B_{M_X}$. Therefore, choosing $\bar{B}\doteq \max\{\bar{B}_1,\bar{B}_2 \}$, we can guarantee that $\phi_{+}$ is a strong global supersolution. 

Now, similarly to what we did in Lemma \ref{Yamabeposbarriers}, the elliptic estimates associated to the momentum and electromagnetic constraints, guarantee that the map 
\begin{align*}
[\phi_{-},\phi_{+}]_{C^0}\times B_{M_f}\times B_{M_X} &\mapsto W^p_{2,loc}\times B_{M_f}\times B_{M_X},\\
(\bar{\phi},\bar{f},\bar{X})&\xrightarrow[]{\mathcal{F}_{a,b}} (\phi,f,X),
\end{align*}
is invariant on $B_{M_f}\times B_{M_X}$ for any $\bar{\phi}\in [\phi_{-},\phi_{+}]_{C^0}$.

Finally, in order to show that the choice $v=\phi_{+}\vert_{\partial M}$ is admissible, we need two minor modifications in the above proof. First, when choosing $\bar{B}$ in the relations (\ref{B-bound})-(\ref{B-bound2}) now $v=\phi_{+}\vert_{\partial M}$ also depends on $\bar{B}$, therefore, we must consider 
\begin{align*}
\begin{split}
\bar{B}>&C_1\Big\{\kappa\Big\{C\Vert E_{\hat{\nu}}\Vert_{W^p_{1-\frac{1}{p}}}\Vert\tilde{F}\Vert_{W^p_{1,\delta+1}} + \Vert\tilde{F}\Vert_{W^p_{1,\delta+1}}\Vert\vartheta\Vert_{W^p_{1,\delta+1}} + \Vert U\Vert_{W^p_{1-\frac{1}{p}}}\Big\} + \Vert U\Vert_{W^{p}_{1,\delta+1}}\Big\}\doteq \bar{B}_1,
\end{split}
\end{align*}
and demand that $\Vert \vert\theta_{-}\vert - 2r_{n}\tau\Vert_{W^p_{1-\frac{1}{p}}}$ is small enough so that we can keep $\mathcal{H}^1_{f,X}(\phi_{+})\leq 0$ for all $f\in B_{M_f}$ and $X\in B_{M_X}$. Secondly, regarding the boundary condition and similarly to final step in the proof of Lemma \ref{Yamabeposbarriers}, if $v=\phi_{+}\vert_{\partial M}$ and $\phi_{+}\geq 1$, then
\begin{align*}
\mathcal{H}^2_{f,X}(\phi_{+})\leq-a_nB-a_n(H+\frac{\vert \theta_{-}\vert}{2})\phi_{+}\leq 0
\end{align*}
as long as $H+\frac{\vert\theta_{-}\vert}{2}\geq 0$.
\end{proof}

The proof of the following theorem now runs in complete analogy to the proof of Theorem \ref{Yamabepostiveexistence}. In this case we get existence for AE-manifold outside the Yamabe positive case, but with a constraint on the possible choices of $\gamma$ and $\tau$: We can accommodate arbitrary chosen $W^p_{2,\delta}$-AE metrics at the expense of creating regions of large mean curvature. 

\begin{thm}\label{Yamabenonpostiveexistence}
Let $(M,\gamma)$ be a $W^p_{2,\delta}$-AE manifold with compact boundary $\partial M$, with $p>n$, $n\geq 3$, and $\delta>-\frac{n}{p}$. Also, consider $\tau\in W^p_{1,\delta+1}(M)$, $U\in W^p_{1,\delta+1}(M,T^0_2M)$, $\tilde{F}\in W^p_{1,\delta+1}(M,\Lambda^2TM)$, $\tilde{u}\in W^{p}_{1,\delta}(M,TM)$, $\mu\in W^p_{1,2\delta+2+\frac{n}{p}}(M), \tilde{q}\in L^p_{\delta+2}(M),\vartheta\in W^p_{1,\delta+1}(M, T^{*}M)$, $\theta_{-}\in W^p_{1-\frac{1}{p}}(\partial M)$ and $E_{\hat{\nu}}\in W^p_{1-\frac{1}{p}}(\partial M)$. If $\mu, \tilde{q}, \tilde{F}, E_{\hat{\nu}},\vartheta$ and $\big\Vert\vert\theta_{-}\vert-2c_n\tau\big\Vert_{W^p_{1-\frac{1}{p}}}$ are sufficiently small; $c_n R_{\gamma} + b_n\tau^2\geq 0$; $\theta_{-}\leq 0$ and $\frac{1}{2}\vert\theta_{-}\vert-c_n\tau\geq 0$ and $H+\frac{\vert\theta_{-}\vert}{2}\geq 0$ along $\partial M$, then there is a $W^p_{2,\delta}$-solution to the conformal problem (\ref{Conformal-EMSystem.1})-(\ref{boundcondsystems}).
\end{thm}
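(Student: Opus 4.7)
The plan is to mirror the proof of Theorem \ref{Yamabepostiveexistence}, replacing only the barrier construction with the one produced in Theorem \ref{Yamabenonpositivemarginallytrapped}, so that the general existence result Theorem \ref{conformalexistence} can be applied verbatim. First I would fix a harmonic function $\omega$ with homogeneous Neumann boundary data on $\partial M$, asymptotic to positive constants $\{A_j\}_{j=1}^N$ on each end $\{E_j\}_{j=1}^N$, with $A^-_j\leq A_j\leq A^+_j$ for the constants produced by the barriers. Such an $\omega$ exists by Lemma \ref{harmonicfunctions} in Appendix I, and the prescribed data $(\tau,U,\tilde F,\mu,\tilde q,\vartheta,\theta_-,E_{\hat\nu})$ fall, by hypothesis, in the functional classes required by Lemma \ref{EMcompactness}, so the coupled PDE problem (\ref{fixedpointsys})--(\ref{boundcondsystems}) is a conformal Einstein-type system in the sense of Definition \ref{einsteintypesystems}.

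Next I would invoke Theorem \ref{Yamabenonpositivemarginallytrapped} with the present choice of free data. The structural hypotheses $c_nR_\gamma+b_n\tau^2\geq 0$ on $\{R_\gamma<0\}$ and $a_n(H+|\theta_-|)-d_n\tau\geq 0$ on $\{H<0\}\subset\partial M$ are exactly what is needed to build the subsolution $\phi_-=\alpha u$ from the solution $u$ of (\ref{eq1}) with $0<\alpha<1$ chosen so that $\phi_-\leq 1$; the smallness of $\mu,\tilde q,\tilde F,E_{\hat\nu},\vartheta$ together with the near-CMC smallness of $\||\theta_-|-2r_n\tau\|_{W^p_{1-\frac{1}{p}}}$ and $\|D\tau\|_{L^p_{\delta+2}}$ then controls the coupling constant $\bar B$ so that the supersolution $\phi_+=\omega_++\varphi_+$ produced by (\ref{B-eq}) satisfies $\mathcal{H}^{1,2}_{f,X}(\phi_+)\leq 0$ uniformly over the balls $B_{M_f}\times B_{M_X}\subset W^p_{2,\delta}(M;E_1)\times W^p_{2,\delta}(M;E_2)$ defined in (\ref{radii}). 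The choice $v=\phi_+|_{\partial M}$, admissible as explained immediately before Theorem \ref{Yamabenonpositivemarginallytrapped}, then guarantees the marginally trapped condition $v\geq\phi$ for every solution produced by the iteration.

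Having both the compatible strong global barriers $0<\phi_-\leq\phi_+$ asymptotic to suitable harmonic functions $\omega_\pm$, and having checked (via the elliptic estimates (\ref{f-estimate})--(\ref{Y-estimate}) and the definitions of $M_f$, $M_X$ in (\ref{radii})) that the solution map $\mathcal{F}_{a,b}$ associated to the shifted system (\ref{shiftedsystem})--(\ref{shiftedboundarycond}) leaves $B_{M_f}\times B_{M_X}$ invariant for every $\bar\phi\in[\phi_-,\phi_+]_{C^0}$, I would apply Theorem \ref{conformalexistence}. This yields a solution $(\phi=\omega+\varphi,f,X)$ with $(\varphi,f,X)\in W^p_{2,\delta}(M;E)$ and $\phi>0$, solving (\ref{fixedpointsys}) with boundary data (\ref{boundcondsystems}). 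The a priori bound $\phi\leq\phi_+$ and our choice $v=\phi_+|_{\partial M}$ give $\phi|_{\partial M}\leq v$, which together with the assumption $\tfrac12|\theta_-|-r_n\tau\geq 0$ implies (\ref{boundarycond3}), so the resulting initial data set realises a (marginally) trapped inner boundary.

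The only step that is not purely mechanical is the propagation of the smallness hypotheses through the barrier construction: because $\phi_+$ itself is built from (\ref{B-eq}) with constant $\bar B$, and because $v=\phi_+|_{\partial M}$ then feeds back into the right-hand side of (\ref{B-bound}), one must verify, as in the last paragraph of the proof of Theorem \ref{Yamabenonpositivemarginallytrapped}, that the smallness of $\||\theta_-|-2r_n\tau\|_{W^p_{1-\frac{1}{p}}}$ together with the smallness of the matter sources absorbs this implicit dependence. This is where the near-CMC character of the hypothesis is essential, and it is the main obstacle; all remaining ingredients are either direct quotations of earlier results or routine book-keeping of Sobolev-multiplication inequalities.
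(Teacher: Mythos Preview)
Your proposal is correct and follows essentially the same approach as the paper: the paper's proof is a one-line remark that the argument ``runs in complete analogy'' to the proof of Theorem \ref{Yamabepostiveexistence}, using the barriers of Theorem \ref{Yamabenonpositivemarginallytrapped} in place of those of Theorem \ref{Yamabepositivemarginallytrapped}, and then applying Theorem \ref{conformalexistence}. Your write-up spells out in more detail the ingredients (Lemma \ref{EMcompactness}, the invariance of $\mathcal{F}_{a,b}$ on $B_{M_f}\times B_{M_X}$, the choice $v=\phi_+|_{\partial M}$, and the feedback of $\bar B$ through (\ref{B-bound})) that the paper leaves implicit, but the logical skeleton is identical.
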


\end{document}